\newcommand\scalemath[2]{\scalebox{#1}{\mbox{\ensuremath{\displaystyle #2}}}}
\newtheorem{theorem}{Theorem}
\newtheorem{result}[theorem]{Result}
\newtheorem{remark}[theorem]{Remark}
\newcommand{\GT}{\mathsf{GT}}
\newcommand{\QRAM}{\mathsf{QRAM}}
\newcommand{\QRAG}{\mathsf{QRAG}}
\newcommand{\ket}[1]{|#1\rangle}
\DeclareMathOperator{\poly}{poly}
\DeclareMathOperator{\supp}{supp}
\newtheorem{definition}[theorem]{Definition}
\newtheorem{fact}[theorem]{Fact}
\newtheorem{lemma}[theorem]{Lemma}
\def\01{\{0,1\}}
\newcolumntype{C}[1]{>{\centering\let\newline\\\arraybackslash\hspace{0pt}}m{#1}}
\definecolor{Ale}{rgb}{0.9,0.17,0.31}
\DeclareRobustCommand{\DE}[2]{#2}
\begin{document}

\title{Constant-depth circuits for Boolean functions and quantum memory devices using multi-qubit gates}

\author[1]{Jonathan Allcock}
%\affiliation{Centre for Quantum Technologies, National University of Singapore, Singapore}\thanks{\tiny }
\email{jonallcock@tencent.com}

\author[2]{Jinge Bao}
%\affiliation{Centre for Quantum Technologies, National University of Singapore, Singapore}\thanks{\tiny }
\email{jbao@u.nus.edu}

\author[3,2]{Joao F. Doriguello}
%\affiliation{Centre for Quantum Technologies, National University of Singapore, Singapore}\thanks{\tiny }
\email{doriguello@renyi.hu}
\homepage{www.joaodoriguello.com}
\orcid{0000-0002-8265-7334}

\author[2]{Alessandro Luongo}
%\affiliation{Centre for Quantum Technologies, National University of Singapore, Singapore}\thanks{\tiny }
\email{ale@nus.edu.sg}

\author[2,4]{Miklos Santha}
%\affiliation{Centre for Quantum Technologies, National University of Singapore, Singapore}\thanks{\tiny }
\email{cqtms@nus.edu.sg}

\affil[1]{Tencent Quantum Laboratory, Hong Kong, China}
\affil[2]{Centre for Quantum Technologies, National University of Singapore, Singapore}
\affil[3]{HUN-REN Alfréd Rényi Institute of Mathematics, Budapest, Hungary}
\affil[4]{CNRS, IRIF, Universit\'e Paris Cit\'e}

\date{}

% \author[1]{Jonathan Allcock\thanks{\scriptsize jonallcock@tencent.com}}
% \author[2]{Jinge Bao\thanks{\scriptsize jbao@u.nus.edu}}
% \author[2]{Jo{\~a}o F. Doriguello\thanks{\scriptsize joaofd@nus.edu.sg}}
% \author[2]{\\Alessandro Luongo\thanks{\scriptsize ale@nus.edu.sg}}
% \author[2,3]{Miklos Santha\thanks{\scriptsize cqtms@nus.edu.sg}}
% \affil[1]{Tencent Quantum Laboratory, Hong Kong, China}
% \affil[2]{Centre for Quantum Technologies, National University of Singapore, Singapore}
% \affil[3]{CNRS, IRIF, Universit\'e Paris Cit\'e}

% \date{\today}

\maketitle

\begin{abstract}
We explore the power of the unbounded Fan-Out gate and the Global Tunable gates generated by Ising-type Hamiltonians in constructing constant-depth quantum circuits, with particular attention to quantum memory devices. We propose two types of constant-depth constructions for implementing Uniformly Controlled Gates. These gates include the Fan-In gates defined by $|x\rangle|b\rangle\mapsto |x\rangle|b\oplus f(x)\rangle$ for $x\in\{0,1\}^n$ and $b\in\{0,1\}$, where $f$ is a Boolean function. The first of our constructions is based on computing the one-hot encoding of the control register $|x\rangle$, while the second is based on Boolean analysis and exploits different representations of $f$ such as its Fourier expansion. Via these constructions, we obtain constant-depth circuits for the quantum counterparts of read-only and read-write memory devices --- Quantum Random Access Memory ($\QRAM$) and Quantum Random Access Gate ($\QRAG$) --- of memory size $n$. The implementation based on one-hot encoding requires either $O(n\log^{(d)}{n}\log^{(d+1)}{n})$ ancillae and $O(n\log^{(d)}{n})$ Fan-Out gates or $O(n\log^{(d)}{n})$ ancillae and $16d-10$ Global Tunable gates, where $d$ is any positive integer and $\log^{(d)}{n} = \log\cdots \log{n}$ is the $d$-times iterated logarithm. On the other hand, the implementation based on Boolean analysis requires $8d-6$ Global Tunable gates at the expense of $O(n^{1/(1-2^{-d})})$ ancillae. 
\end{abstract}

\section{Introduction}

In this work, we study the power of constant-depth quantum circuits with a focus on circuits designed for quantum memory access and the execution of Boolean functions. Our investigation has two aims: firstly, to fill the theoretical gap in our understanding of quantum memory circuits from a computational complexity perspective and, secondly, to assess the practicality of physically implementing these circuits. We believe that the properties and limitations of these circuits can highlight their feasibility and potential for practical implementations. To obtain constant-depth circuits, we leverage multi-qubit ``magic'' gates like the Fan-Out gate (a generalization of the $\mathsf{CNOT}$ that can target multiple output qubits) and the multi-qubit entangling Global Tunable gate (that arises from the time evolution of Ising-type Hamiltonians). 
This analysis explores the potential for quantum memory to be accessed using specialized hardware (designed, for instance, to implement such multi-qubit gates), which may differ from the hardware in general-purpose quantum computers. 

\subsection{Quantum memory}

Quantum memory, besides being an important component of quantum computers from a theoretical perspective, is also fundamental to many quantum algorithms such as Grover's search~\cite{grover1997quantum}, solving the dihedral hidden subgroup problem~\cite{kuperberg2005subexponential}, collision finding~\cite{brassard1997quantum}, phase estimation for quantum chemistry~\cite{babbush2018encoding}, pattern recognition and machine learning algorithms~\cite{trugenberger2002phase,schutzhold2003pattern,schaller2006quantum,kerenidis2017quantum,kerenidis2020quantum},  cryptanalysis~\cite{chailloux2021lattice}, state preparation~\cite{grover2002creating}, among others.

Traditionally, there are two ways — via a Quantum Random Access Memory ($\QRAM$) or a Quantum Random Access Gate ($\QRAG$) — in which memory (classical or quantum) may be accessed quantumly. A $\QRAM$ can be seen as a ``read-only'' gate, while a $\QRAG$ can be interpreted as a ``read-write'' gate since qubits are swapped from memory into the main part of the quantum computer, acted on, and then swapped back.

\paragraph{Quantum random access memory.}  
A $\QRAM$~\cite{giovannetti2008architectures,giovannetti2008quantum} is the quantum analogue of a classical Random Access Memory (RAM) device that stores classical or quantum data and allows queries to be performed in superposition. More specifically, a  $\QRAM$ is a device comprising a memory register $\mathtt{M}$ that stores either classical or quantum information, an address register $\mathtt{A}$ that points to the memory cell to be addressed, and a target register $\mathtt{T}$ into which the content of the addressed memory cell is copied. If necessary, it also includes an auxiliary register supporting the overall operation, which is reset to its initial state at the end of the computation. A call to a $\QRAM$ (of size $n$) implements\footnote{Define $[n]:= \{0,\dots,n-1\}$.}
\begin{equation*}
   \ket{i}_{\mathtt{A}}\ket{b}_{\mathtt{T}}\ket{x_0, \dots, x_{n-1}}_{\mathtt{M}} \mapsto \ket{i}_{\mathtt{A}}\ket{b\oplus x_i}_{\mathtt{T}}\ket{x_0, \dots, x_{n-1}}_{\mathtt{M}}, \quad \forall x_0,\dots,x_{n-1},b \in \{0,1\}, i\in[n].
\end{equation*}
The bits $x_0,\dots,x_{n-1}$ represent the data to be accessed in superposition, which are separate from the qubits in the \emph{work register} of a fully programmable quantum computer. 

\paragraph{Quantum random access gate.}
Another device for random access to a quantum memory is the so-called $\QRAG$, which performs a swap gate between the target register and some portion of the memory register specified by the address register:
\begin{equation*}
    \ket{i}_{\mathtt{A}}\ket{b}_{\mathtt{T}}\ket{x_0, \dots, x_{n-1}}_{\mathtt{M}} \mapsto \ket{i}_{\mathtt{A}}\ket{x_i}_{\mathtt{T}}\ket{x_0, \dots, x_{i-1},b,x_{i+1}, \dots, x_{n-1}}_{\mathtt{M}}, \quad\forall x_0,\dots,x_{n-1},b \in \{0,1\}, i\in[n]. 
\end{equation*}
While $\QRAG$ does not enjoy the same level of publicity as $\QRAM$s, its importance lies in its necessity for quantum algorithms for element distinctness and collision finding~\cite{ambainis2007quantum}, as well as other quantum algorithms based on random walks on graphs~\cite{aaronson2019quantum,buhrman2021limits}.

\subsection{Multi-qubit ``magic'' gates}\label{sec:multimagic}

\paragraph{Uniformly Controlled Gate and Fan-In gate.} The $f$-Uniformly Controlled Gate ($f$-$\mathsf{UCG}$ or simply $\mathsf{UCG}$) is the unitary $\sum_{x\in\{0,1\}^n} |x\rangle\langle x|\otimes f(x)$, where $f:\{0,1\}^n\to\mathcal{U}(\mathbb{C}^{2\times 2})$ is a mapping from $n$-bit strings onto the set $\mathcal{U}(\mathbb{C}^{2\times 2})$ of single-qubit unitaries. $\mathsf{UCG}$s are also known as $\mathsf{SELECT}$ operators~\cite{Low2024tradingtgatesdirty}. Well-known examples of $f$-$\mathsf{UCG}$s can be found in quantum state preparation algorithms~\cite{grover2002creating,kerenidis2017quantum}, quantum Monte Carlo algorithms~\cite{montanaro2015quantum} in finance, and HHL-like algorithms~\cite{harrow2009quantum} in quantum machine learning. The $f$-$\mathsf{UCG}$ is a generalization of many 
multi-qubit gates including the $f$-Fan-In gate ($f$-$\mathsf{FIN}$) defined by the mapping $|x\rangle|b\rangle \mapsto |x\rangle|b\oplus f(x)\rangle$ for a Boolean function $f:\{0,1\}^n\to\{0,1\}$, where $x\in\{0,1\}^n$ and $b\in\{0,1\}$. Note that an $f$-$\mathsf{FIN}$ is simply an $f'$-$\mathsf{UCG}$ with $f'(x) = \mathsf{X}^{f(x)}$. Special cases of $f$-$\mathsf{FIN}$s include $\mathsf{OR}$, $\mathsf{AND}$, $\mathsf{PARITY}$, $\mathsf{MAJORITY}$, and even $\QRAM$, since it can be implemented with $f:\{0,1\}^n\times[n]\to\{0,1\}$, $f(x,i) = x_i$. 

General constructions of $f$-$\mathsf{UCG}$s and $f$-$\mathsf{FIN}$s using single and two-qubit gates can be framed as a unitary synthesis problem. There are several results in this direction for constructing a general $n$-qubit unitary~\cite{barenco1995elementary,knill1995approximation,vartiainen2004efficient,mottonen2005decompositions,shende2004minimal,rosenthal2021query}. Sun et al.~\cite{STY-asymptotically}, Yuan and Zhang~\cite{yuan2023optimal}, and Low et al.~\cite{Low2024tradingtgatesdirty} proposed circuits specifically for $f$-$\mathsf{UCG}$s using one and two-qubit gates, and therefore not in constant depth. Regarding constructions for controlled gates of the form $|x\rangle\langle x|\otimes \mathsf{U} + \sum_{y\in\{0,1\}^n\setminus\{x\}}|y\rangle\langle y|\otimes \mathbb{I}_m$, where $\mathsf{U}$ is an $m$-qubit gate, see~\cite{barenco1995elementary} (using one and two-qubit gates) and~\cite{green2001counting,hoyer2005quantum,martinez2016compiling,gokhale2021quantum} (using multi-qubit entangling gates defined below). While general sequential implementations for $f$-$\mathsf{FIN}$s are folklore, there have been proposals for specific Boolean functions~\cite{barnum2000quantum} or based on different models of computation like measurement-based quantum computation~\cite{daniel2022quantum}. See \Cref{table:known_results} for a summary of known results.

\paragraph{The Fan-Out gate.} The Fan-Out ($\mathsf{FO}$) gate on $n+1$ qubits implements the quantum operation $|b\rangle|x_0,\dots,x_{n-1}\rangle \mapsto |b\rangle|x_0\oplus b,\dots,x_{n-1}\oplus b\rangle$ for all $x_0,\dots,x_{n-1},b\in\{0,1\}$. In other words, it is a sequence of $\mathsf{CNOT}$ gates sharing a single control qubit. For this reason, unlike classical Fan-Out gates, the ability to implement quantum Fan-Out gates as a primitive is not usually taken for granted. Indeed, the Fan-Out gate is powerful in the sense that several interesting results follow from its use, especially connected to constant-depth complexity classes (more on this below). Moore~\cite{moore1999quantum} and Green et al.~\cite{green2001counting} proved that Fan-Out is equivalent to the $\mathsf{PARITY}$ gate. H\o{}yer and \v{S}palek~\cite{hoyer2005quantum} proved that $\mathsf{EXACT}[t]$ gates (which output $1$ if the input's Hamming weight is $t$ and $0$ otherwise) can be approximated with a polynomially small error by Fan-Out and single-qubit gates in constant depth. These in turn can simulate $\mathsf{AND},\mathsf{OR}$, and $\mathsf{THRESHOLD}[t]$ gates. Later, Takahashi and Tani~\cite{takahashi2016collapse} managed to prove that $\mathsf{EXACT}[t]$ can be simulated \emph{exactly} by Fan-Out and single-qubit gates in constant depth. See \Cref{table:known_results} for a summary of known results.

Unbounded Fan-Out gates that can act on any number of qubits are used in quantum complexity theory (and in this work) to compile certain circuits in constant depth.
Even though unbounded Fan-Out gates are just a theoretical construction, bounded Fan-Out gates are within the reach of next-generation quantum hardware~\cite{fenner2003implementing,zeng2005measuring,rasmussen2020single,yu2020scalability,kim2022high,guo2022implementing,fenner2022implementing} and can serve as building blocks in larger Fan-Out gates, since an $n$-arity Fan-Out gate can be simulated by $k$-arity Fan-Out gates in $O(\log_k{n})$-depth, offering interesting trade-offs for hardware implementations. Another approach to implementing Fan-Out gates is the work of Pham and Svore~\cite{pham20132d}, who proposed a constant-depth circuit for Fan-Out gates using $O(n)$ ancillae based on measurement-based quantum computation and classical feedback.

\paragraph{The Global Tunable gate.} Another powerful and physically implementable gate is the Global Tunable ($\mathsf{GT}$) gate. In its simplest form, it implements a product of two-qubit controlled-$\mathsf{Z}$ gates:
\begin{align*}
   \prod_{i\neq j\in S}\mathsf{C}_i\text{-}\mathsf{Z}_{\to j}
\end{align*}
for some subset $S$ of the physical qubits, where $\mathsf{C}_i$-$\mathsf{Z}_{\to j}$ denotes a $\mathsf{Z}$ gate applied to qubit $j$ controlled on qubit $i$ being in the $|1\rangle$ state (for the general definition see \Cref{sec:GT}).
The first proposal for this kind of gate dates back to Mølmer and Sørensen~\cite{molmer1999multiparticle}, and several experimental implementations have been reported~\cite{nigg2014quantum,landsman2019two,figgatt2019parallel,grzesiak2020efficient,gu2021fast}.

A few studies have explored the use of $\mathsf{GT}$ gates in constructing $n$-qubit Clifford gates~\cite{maslov2018use,van2021constructing,grzesiak2022efficient,bassler2022synthesis,maslov2022depth}. The state-of-the-art construction~\cite{bravyi2022constant} requires $4$ $\GT$ gates and $n$ ancilla or $26$ $\GT$ gates and no ancilla, plus $O(n)$ single-qubit gates. Similarly to the Fan-Out~\cite{hoyer2005quantum,takahashi2016collapse}, the $\GT$ gate has been used to implement the unbounded $\mathsf{OR}$ gate.
Constructions for $4$-$\mathsf{AND}$ gates using $7$ $\mathsf{GT}$ gates and no ancillae were reported in~\cite{ivanov2015efficient,martinez2016compiling,maslov2018use}. Regarding general $n$-arity $\mathsf{AND}$, several constructions~\cite{maslov2018use,groenland2020signal,grzesiak2022efficient} have been proposed, and improved to the state-of-the-art implementation of~\cite{bravyi2022constant} using $O(\log^\ast{n})$ $\GT$ gates with $O(\log{n})$ ancillae or using $4$ $\GT$ gates with $O(n)$ ancillae, where $\log^\ast{n}$ is the star-log function. See \Cref{table:known_results} for a summary of known results.

\subsection{Our results}

In this work, we propose new constant-depth quantum circuits, based on Fan-Out and $\mathsf{GT}$ gates, for $f$-$\mathsf{UCG}$s, which include $f$-$\mathsf{FIN}$s and certain quantum memory devices as special cases (see \Cref{fig:ucg_qmd_venn}). We use two different techniques: the first based on the one-hot encoding of the input (also known as indicator function~\cite{Low2024tradingtgatesdirty}, see definition below), and the second based on Boolean analysis of the function $f$. In \Cref{ssec:memory}, we formalize our model of quantum computers with quantum access to memory. A Quantum Memory Device ($\mathsf{QMD}$) of size $n$ (assume $n$ to be a power of $2$) comprises a $\log{n}$-qubit address register $\mathtt{A}$, a single-qubit target register $\mathtt{T}$, a $\operatorname{poly}(n)$-qubit auxiliary register $\mathtt{Aux}$, and an $n$-qubit memory $\mathtt{M}$ consisting of $n$ single-qubit registers $\mathtt{M}_0,\dots,\mathtt{M}_{n-1}$. A call to the $\mathsf{QMD}$ implements
\begin{align*}
    |i\rangle_{\mathtt{A}}|b\rangle_{\mathtt{T}}|x_i\rangle_{\mathtt{M}_i}|0\rangle^{\otimes\poly{n}}_{\mathtt{Aux}} \mapsto  |i\rangle_{\mathtt{A}}\mathsf{V}(i)\big(|b\rangle_{\mathtt{T}}|x_i\rangle_{\mathtt{M}_i}\big)|0\rangle^{\otimes\poly{n}}_{\mathtt{Aux}}, \quad\text{where}~\mathsf{V}:[n]\to \mathcal{V} ~\text{and}~ \mathcal{V}\subset \mathcal{U}(\mathbb{C}^{4\times 4})
\end{align*}
is a $O(1)$-size subset of two-qubit gates. Our model is general enough to include $\QRAM$ and $\QRAG$ as subcases (by letting $\mathsf{V}(i)$ equal $\mathsf{CNOT}$ or $\mathsf{SWAP}$ gates). It also includes $\mathsf{QMD}$s that we named $f$-$\QRAM$ for which $\mathsf{V}(i)=\mathbb{I}_1\otimes |0\rangle\langle 0|_{\mathtt{M}_i} + f(i)\otimes |1\rangle\langle 1|_{\mathtt{M}_i}$, where $f:\{0,1\}^{\log{n}}\to\mathcal{U}(\mathbb{C}^{2\times 2})$. As far as we know, a general model for a $\mathsf{QMD}$ has not been formally defined before. This model allows us to compare the power of different gates with quantum access to memory. In this direction, we show that a $\QRAG$ can simulate a $\QRAM$, but not vice-versa, and we discuss the similarities and differences between $\mathsf{QMD}$ and $f$-$\mathsf{UCG}$. In particular, even though $f$-$\mathsf{UCG}$s do not contain general $\mathsf{QMD}$s, since $\mathsf{V}(i)$ can act non-trivially on two qubits, an $f$-$\mathsf{QRAM}$ of memory size $n$ (i.e., $f:\{0,1\}^{\log{n}}\to\mathcal{U}(\mathbb{C}^{2\times 2})$) can be seen as an $f'$-$\mathsf{UCG}$ for some function $f'$ on $\{0,1\}^{n+\log{n}}$ (see \Cref{fig:ucg_qmd_venn}), since $\mathtt{A}$ and $\mathtt{M}$ work as control registers.

\begin{figure}[htb!]
    \centering
    \includegraphics[width=0.57\textwidth]{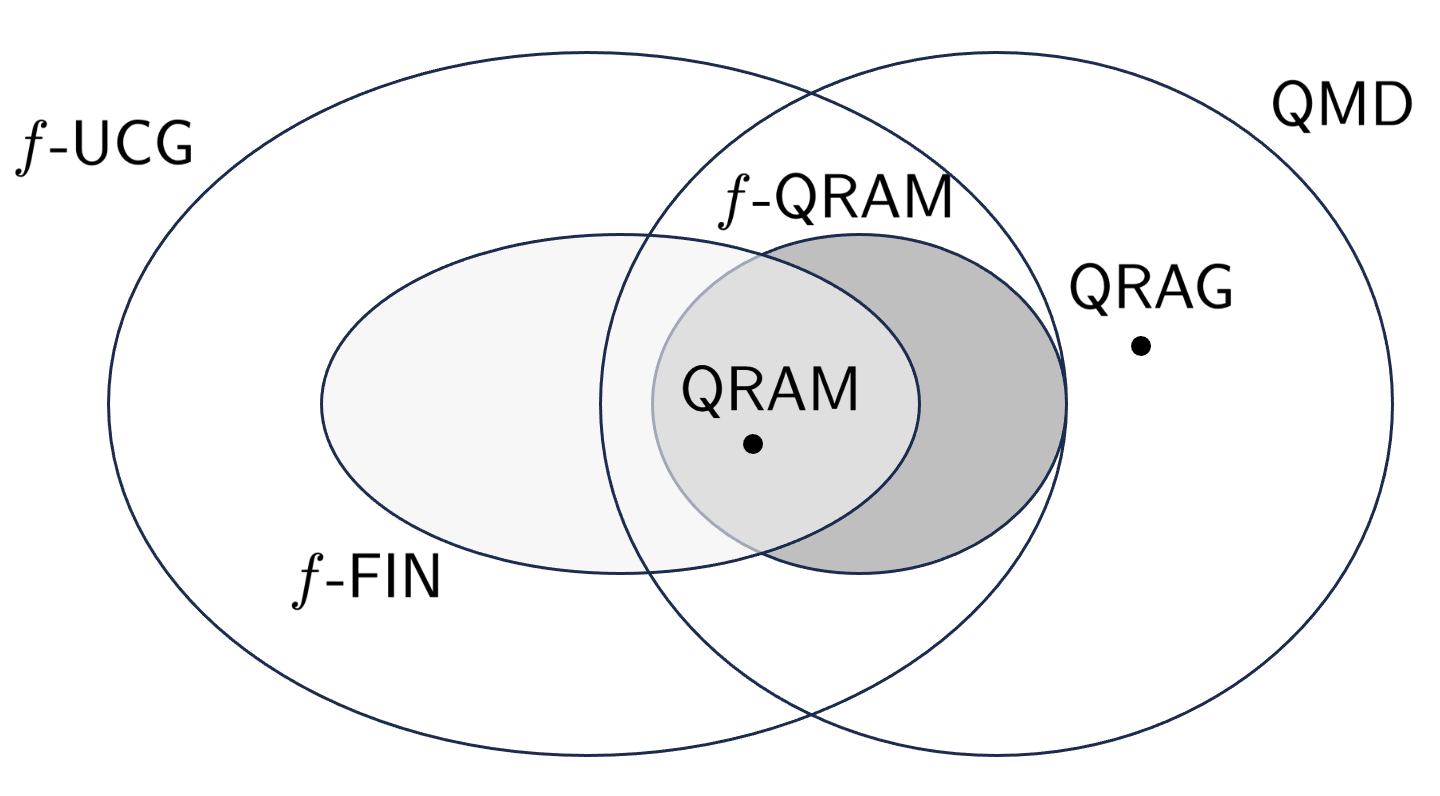}
    \caption{We give constant-depth circuits for $f$-$\mathsf{UCG}$s, which contain $f$-$\mathsf{FIN}$s and a subset of quantum memory devices ($\mathsf{QMD}$) including $\QRAM$ (and its generalization we call $f$-$\QRAM$) as special cases. %A refined analysis gives improved constructions for $f$-$\mathsf{FIN}$s compared with general $f$-$\mathsf{UCG}$s. 
    Although $\QRAG$ is not an $f$-$\mathsf{FIN}$, our (one-hot-encoding-based) construction for $\mathsf{QRAM}$ can be adapted to it.}
    \label{fig:ucg_qmd_venn}
\end{figure}

In \Cref{sec:multi-qubit_gates}, we discuss the Fan-Out and $\mathsf{GT}$ gates in more detail. In \Cref{sec:constructions_onehot}, we develop our quantum circuits based on one-hot encoding for any $f$-$\mathsf{UCG}$, $\sum_{x\in\{0,1\}^n} |x\rangle\langle x|\otimes f(x)$. The main idea is to use Fan-Out or $\mathsf{GT}$ gates to compute, in parallel, the one-hot encoding $e(x)\in\{0,1\}^{2^n}$ of the control register $|x\rangle$, where $e(x)_j = 1$ if and only if $j=x$, and to apply the single-qubit gate $f(j)$ controlled on the qubit $|e(x)_j\rangle$, for all $j\in\{0,1\}^{n}$. By the definition of the one-hot encoding, the correct gate $f(x)$ is selected. To perform all controlled single-qubit gates $f(j)$ in parallel, we use the well-known $\mathsf{Z}$-decomposition of single-qubit gates stating the existence of functions $\alpha,\beta,\gamma,\delta:\{0,1\}^n\to[-1,1]$ such that
\begin{align*}
   f(j) = e^{i\pi\alpha(j)}\mathsf{Z}(\beta(j))\mathsf{H}\mathsf{Z}(\gamma(j))\mathsf{H}\mathsf{Z}(\delta(j)), \quad \text{for all}~j\in\{0,1\}^n,
\end{align*}
where $\mathsf{H} := \frac{1}{\sqrt{2}}\bigl(\begin{smallmatrix}
1&1 \\ 1&-1 \end{smallmatrix} \bigr)$ and $\mathsf{Z}(\theta) := \bigl(\begin{smallmatrix} 1&0 \\ 0&e^{i\pi\theta} \end{smallmatrix} \bigr)$ for $\theta\in[-1,1]$. By a result of Green et al.~\cite{green2001counting} (see also~\cite{moore2001parallel,hoyer2005quantum}), $m$ commuting gates can be performed in parallel with the aid of $m-1$ ancillae and $2$ Fan-Out gates, or simply 1 $\mathsf{GT}$ gate and no ancillae. All the $\mathsf{Z}(\delta(j))$ gates can thus be performed in parallel (and similarly for $\mathsf{Z}(\gamma(j))$, $\mathsf{Z}(\beta(j))$, $e^{i\pi \alpha(j)}$).

Naively, one can compute the one-hot encoding of the whole input $x$. However, if $f$ is a junta, i.e., it depends on only a few coordinates, one only needs to compute the one-hot encoding of the coordinates on which it depends. More generally, this ``compression'' idea can be extended to a concept we introduce and call $(J,r)$-junta, where $J\subseteq[n]$ and $r\in\mathbb{N}$. We say $f:\{0,1\}^n \to \mathcal{U}(\mathbb{C}^{2\times 2})$ is a $(J,r)$-junta if, by fixing the coordinates in $\overline{J} := [n]\setminus J$ to any value, the resulting restriction of $f$ to $J$ is an $r$-junta, i.e., it depends on at most $r$ of its input coordinates. A fine example of a $(J,r)$-junta is $\QRAM$, since by fixing the coordinates of input $i$, the resulting restriction is a $1$-junta (as it depends only on $x_i$). It is possible to take advantage of this property and simplify our circuit construction: we partition the input $x$ into sub-strings $x_{\overline{J}}$ and $x_{J}$ and compute the one-hot encoding of $x_{\overline{J}}$ separately from the one-hot encoding of the coordinates in $J$ that the restriction of $f$ depends on. Both one-hot encodings are then used to select the correct $f(x)$ gate as described above. The resources required for our constructions are as follows (see also \Cref{table:ucg}).
\begin{result}[Informal version of \Cref{thr:ucg_construction}]\label{res:res1}
    Let $f:\{0,1\}^n\to\mathcal{U}(\mathbb{C}^{2\times 2})$ be a $(J,r)$-junta, $|\overline{{J}}| = t$. The $f$-$\mathsf{UCG}$ can be implemented in constant depth using either $O(2^{t+r}(t+r)\log(t+r))$ ancillae and $O(2^{t+r}(t+r))$ Fan-Out gates or $O(2^{t+r}(t+r))$ ancillae and $9$ $\mathsf{GT}$ gates. As a corollary, any $f'$-$\mathsf{QRAM}$ of size $n$, $f':\{0,1\}^{\log{n}}\to\mathcal{U}(\mathbb{C}^{2\times 2})$, can be implemented in constant depth using either $O(n\log{n}\log\log{n})$ ancillae and $O(n\log{n})$ Fan-Out gates or $O(n\log{n})$ ancillae and $9$ $\mathsf{GT}$ gates.
\end{result}

We then tailor \Cref{res:res1} to $f$-$\mathsf{FIN}$s specifically, given their simpler structure compared to $f$-$\mathsf{UCG}$s. The number of ancillae and Fan-Out gates are asymptotically the same, and the number of $\mathsf{GT}$ gates is reduced to $6$ (see \Cref{table:fan-in}). In particular, we apply the $f$-$\mathsf{FIN}$ results to $\QRAM$s and also show how to implement a $\QRAG$ in constant depth\footnote{A $O(1)$-depth and $O(n\log{n}\log\log{n})$-size circuit for $\QRAG$ using Fan-Out gates had previously appeared in~\cite[Lemma~4.3]{rosenthal2021query}. We note that the author missed the $\log\log{n}$-factor.}, even though it is not an $f$-$\mathsf{FIN}$ (see \Cref{table:qram-qrag}). In the following, $\log^{(d)}{n} = \log\cdots\log{n}$ is the $d$-times iterated logarithm.

\begin{result}[Informal version of \Cref{thr:qram_recursive_procedure}]
     Let $d\in\mathbb{N}$ be a constant. A $\mathsf{QRAM}$ of size $n$ can be implemented in $O(d)$-depth using either $O(n\log^{(d)}{n}\log^{(d+1)}{n})$ ancillae and $O(n\log^{(d)}{n})$ Fan-Out gates or $O(n\log^{(d)}{n})$ ancillae and $16d-10$ $\mathsf{GT}$ gates. A $\mathsf{QRAG}$ of size $n$ can be implemented in $O(d)$-depth using either $O(n\log^{(d)}{n}\log^{(d+1)}{n})$ ancillae and $O(n\log^{(d)}{n})$ Fan-Out gates or $O(n\log^{(d)}{n})$ ancillae and $21d-12$ $\mathsf{GT}$ gates.
\end{result}

In \Cref{sec:fourier}, we extend ideas from~\cite{hoyer2005quantum,takahashi2016collapse} to implement $f$-$\mathsf{UCG}$s in constant depth using tools from the analysis of Boolean functions. We give three slightly different constructions based on different representations of a real-valued Boolean function $g:\{0,1\}^n\to\mathbb{R}$. The first representation is the Fourier expansion (over the reals)
\begin{align*}
    g(x) = \sum_{S\subseteq[n]} \widehat{g}(S) \chi_S(x), \quad\text{where}~\chi_S(x) := (-1)^{\sum_{i\in S}x_i} ~~\text{and}~~ \widehat{g}(S) = \frac{1}{2^n}\sum_{x\in\{0,1\}^n} g(x)\chi_S(x)
\end{align*}
are the Fourier coefficients of $g$. The $\mathsf{PARITY}$ function $\chi_S$ over $\{-1,1\}$ is called characteristic function. The second representation is based on the existence of a function $p:\{0,1\}^n\to\mathbb{R}$ with a (potentially) sparse Fourier expansion that approximates $g$ up to an additive error $\epsilon>0$, i.e., $\max_{x\in\{0,1\}^n}|p(x) - g(x)| \leq \epsilon$. Finally, the third representation is the Fourier expansion of $g$ using $\mathsf{AND}$ functions instead of $\mathsf{PARITY}$ functions, which is sometimes called a real-polynomial representation over $\{0,1\}$,
\begin{align*}
    g(x) = \sum_{S\subseteq[n]}\widetilde{g}(S) x^S, \quad\text{where}~ x^S := \prod_{i\in S}x_i ~~\text{and}~~ \widetilde{g}(S) = \sum_{T\subseteq S}(-1)^{|S| - |T|}g(T).
\end{align*}
In the case of Boolean functions $g:\{0,1\}^n\to\{0,1\}$, the above representation over the reals can be ``compressed'' into a representation over the $\mathbb{F}_2$ field, also known as algebraic normal form, as
\begin{align*}
    g(x) = \bigoplus_{S\subseteq[n]}\widetilde{g}_{\mathbb{F}_2}(S) x^S, \quad\text{where}~ \widetilde{g}_{\mathbb{F}_2}(S)\in\{0,1\} ~\text{is given by}~ \widetilde{g}_{\mathbb{F}_2}(S) = \widetilde{g}(S)~(\text{mod}~2).
\end{align*}
Such relation is true since $\widetilde{g}(S)\in\mathbb{Z}$ for $g:\{0,1\}^n\to\{0,1\}$. The utility of each of the above representations depends on the Boolean properties of $g$, e.g., its Fourier support $\operatorname{supp}(g) := \{S\subseteq[n]: \widehat{g}(S) \neq 0\}$, (real) $\{0,1\}$-support $\operatorname{supp}_{\{0,1\}}(g) := \{S\subseteq[n]: \widetilde{g}(S) \neq 0\}$, and, for Boolean functions $g:\{0,1\}^n\to\{0,1\}$, its $\mathbb{F}_2$-support $\operatorname{supp}_{\mathbb{F}_2}(g) := \{S\subseteq[n]:\widetilde{g}_{\mathbb{F}_2}(S)\neq 0\}$. Other relevant properties of $g$ are its Fourier support $\operatorname{supp}^{>k}(g) := \{S\subseteq[n]:|S|>k,\widehat{g}(S)\neq 0\}$ at degree greater than $k$ (similarly for $\operatorname{supp}^{=k}(g)$, $\operatorname{supp}^{>k}_{\{0,1\}}(g)$, and $\operatorname{supp}^{>k}_{\mathbb{F}_2}(g)$), its real degree $\operatorname{deg}(g) := \max\{|S|:S\in\operatorname{supp}(g)\}$, its Fourier $1$-norm $\hat{\|}g\hat{\|}_1 := \sum_{S\subseteq[n]}|\widehat{g}(S)|$, and $\hat{\|}g^{>k}\hat{\|}_1 := \sum_{S\subseteq[n]:|S|>k}|\widehat{g}(S)|$.

We can generalize the above properties to operator-valued functions $f:\{0,1\}^n\to\mathcal{U}(\mathbb{C}^{2\times 2})$ in an indirect way by applying Boolean analysis to the functions $\alpha,\beta,\gamma,\delta:\{0,1\}^n\to[-1,1]$ arising from $f$'s $\mathsf{Z}$-decomposition and defining, for instance, $\operatorname{supp}(f) := \operatorname{supp}(\alpha)\cup \operatorname{supp}(\beta)\cup \operatorname{supp}(\gamma)\cup \operatorname{supp}(\delta)$ and $\operatorname{deg}(f) := \max\{\operatorname{deg}(\alpha),\operatorname{deg}(\beta),\operatorname{deg}(\gamma),\operatorname{deg}(\delta)\}$. Similar definitions apply to $\operatorname{supp}^{>k}(f)$, $\operatorname{supp}^{=k}(f)$, $\operatorname{supp}_{\{0,1\}}(f)$, and $\operatorname{supp}^{> k}_{\{0,1\}}(f)$. Note that other extensions of Boolean analysis exist in the literature and had been applied to problems in quantum computation~\cite{nayak2000quantum,fehr2008randomness,ben2008hypercontractive,montanaro2008quantum,rouze2022quantum}. However, this extension based on $\mathsf{Z}$-decomposition may be of independent interest.

The idea behind our constructions for $f$-$\mathsf{UCG}$s in \Cref{sec:fourier} is to reconstruct the functions $\alpha,\beta,\gamma,\delta$ using one of the aforementioned representations. Consider the Fourier expansion of $\alpha,\beta,\gamma,\delta$ as an example. First we compute the terms $\chi_S(x)$ in parallel using Fan-Out or $\mathsf{GT}$ gates, since $\chi_S(x)$ are $\mathsf{PARITY}$ functions. Since $\prod_{S\in\operatorname{supp}(\delta)}\mathsf{Z}(\widehat{\delta}(S)\chi_S(x)) = \mathsf{Z}(\sum_{S\in\operatorname{supp}(\delta)}\widehat{\delta}(S)\chi_S(x)) = \mathsf{Z}(\delta(x))$, it is possible to apply $\mathsf{Z}(\delta(x))$ onto a target qubit by simply applying onto this target qubit a sequence of phases $\mathsf{Z}(\widehat{\delta}(S))$ controlled on $\chi_S(x)$, for $S\in\operatorname{supp}(\delta)$. This sequence of controlled phases $\prod_{S\in\operatorname{supp}(\delta)}\mathsf{Z}(\widehat{\delta}(S)\chi_S(x))$ can be performed in constant depth in the case of $\mathsf{GT}$ gates by definition. In the case of Fan-Outs, it can be done by using techniques from H\o{}yer and \v{S}palek~\cite{hoyer2005quantum}. More precisely, first compute a cat state $(|0\rangle^{\otimes m} + |1\rangle^{\otimes m})/\sqrt{2}$ from the target qubit using one Fan-Out, where $m:=|\operatorname{supp}(\delta)|$, followed by applying the controlled phases $\mathsf{Z}(\widehat{\delta}(S))$ onto \emph{different} qubits of the cat state. This yields $(|0\rangle^{\otimes m} + (-1)^{\sum_S \widehat{\delta}(S)\chi_S(x)}|1\rangle^{\otimes m})/\sqrt{2} = \mathsf{Z}(\delta(x))(|0\rangle^{\otimes m} + |1\rangle^{\otimes m})/\sqrt{2}$. Finally, uncompute the cat state with another Fan-Out. The same idea applies to $\alpha,\beta,\gamma$ and the other two representations (for the real $\{0,1\}$-representation we compute $x^S$ instead of $\chi_S(x)$). The resources required for our constructions are stated below (see \Cref{table:ucg}). In the following, we say that a quantum circuit implements an $f$-$\mathsf{UCG}$ with spectral norm error at most $\epsilon$ if it implements an $f'$-$\mathsf{UCG}$ such that the spectral norm $\|f'(x) - f(x)\|$ is at most $\epsilon$ for all $x\in\{0,1\}^n$.
\begin{result}[Informal version of \Cref{thr:ucg_boolean_construction1}, \Cref{thr:ucg_boolean_construction2}, \Cref{thr:ucg_boolean_construction3}]
    Let $f:\{0,1\}^n\to\mathcal{U}(\mathbb{C}^{2\times 2})$ with $\mathsf{Z}$-decomposition $\alpha,\beta,\gamma,\delta:\{0,1\}^n\to[-1,1]$. We propose constant-depth quantum circuits that implement $f$-$\mathsf{UCG}$
    \begin{itemize}
      \item exactly using 
      \begin{itemize} 
        \item either $O\big(\sum_{S\in\operatorname{supp}(f)}|S|\big)$ ancillae and $O\big(|\operatorname{supp}^{>1}(f)|+\big|\bigcup_{S\in\operatorname{supp}^{>1}(f)}S\big|\big)$ Fan-Outs, 
        \item or $O(|\operatorname{supp}^{>1}(f)|)$ ancillae and $5$ $\mathsf{GT}$ gates;
      \end{itemize}
      \item with spectral norm error at most $\epsilon>0$ using
      \begin{itemize} 
        \item either $O(s\operatorname{deg}(f) + |\operatorname{supp}^{=1}(f)|)$ ancillae and $O(s + |\bigcup_{S\in\operatorname{supp}^{>1}(f)}S|)$ Fan-Outs, 
        \item or $O(s)$ ancillae and $5$ $\mathsf{GT}$ gates,
      \end{itemize}
      where $s:=(n/\epsilon^2)\sum_{\nu\in\{\alpha,\beta,\gamma,\delta\}}\hat{\|}\nu^{>1}\hat{\|}_1^2$;
      \item exactly using
      \begin{itemize}
        \item either $O\big(\sum_{S\in\operatorname{supp}_{\{0,1\}}(f)} |S|\log(1+|S|)\big)$ ancillae and $O\big(\sum_{S\in\operatorname{supp}^{>1}_{\{0,1\}}(f)}|S|\big)$ Fan-Outs,
        \item or $O\big(\sum_{S \in \operatorname{supp}^{>1}_{\{0,1\}}(f)}|S|\big)$ ancillae and $9$ $\mathsf{GT}$ gates.
      \end{itemize}
  \end{itemize}
\end{result}

Similarly to the one-hot-encoding-based constructions, we then simplify our Boolean-based constructions to $f$-$\mathsf{FIN}$s, which mainly reduces the number of $\mathsf{GT}$ gates (see \Cref{table:fan-in}), and apply them to $\QRAM$s, thus showing that it is possible to use fewer $\mathsf{GT}$ gates at the price of more ancillary qubits (see \Cref{table:qram-qrag}). We say that a quantum circuit implements an $f$-$\mathsf{FIN}$ with spectral norm error at most $\epsilon$ if it implements an $f'$-$\mathsf{UCG}$ such that $\max_{x\in\{0,1\}^n}\|f'(x) - \mathsf{X}^{f(x)}\| \leq \epsilon$.
\begin{result}[Informal version of \Cref{thr:qram_recursive_procedure_boolean}]
    Let $d\in\mathbb{N}$ be a constant. A $\QRAM$ of size $n$ can be implemented in $O(d)$-depth using either $O\big(n^{1/(1-2^{-d})}\log{n}\big)$ ancillae and $O\big(n^{1/(1-2^{-d})}\big)$ Fan-Out gates or $O\big(n^{1/(1-2^{-d})}\big)$ ancillae and $8d-6$ $\mathsf{GT}$ gates.
\end{result}

Depending on the properties of $f$, one construction can be more desirable compared to the others when it comes to implementing an $f$-$\mathsf{UCG}$ (similarly for $f$-$\mathsf{FIN}$s). A $(J,r)$-junta for small $|\overline{J}|$ and $r$ might call for a one-hot-encoding-based construction, while a function with sparse Fourier expansion could be more easily implementable using a Boolean-based circuit. The four different constructions presented above are thus incomparable. Nonetheless, in the worst case, the Boolean-based implementation using the Fourier expansion (\Cref{thr:ucg_boolean_construction1} and \Cref{thr:fin_boolean_construction1}) requires fewer resources: either $O(2^n)$ Fan-Out gates and $O(2^n n)$ ancillae, or $5$ $\mathsf{GT}$ gates and $O(2^n)$ ancillae. %Compare this with the worst case for the one-hot-encoding-based construction, which requires either $O(2^n n)$ Fan-Out gates and $O(2^n n\log{n})$ ancillae, or $9$ $\mathsf{GT}$ gates and $O(2^n n)$ ancillae.

%Finally, for convenience we summarise some of the results in the literature in \Cref{table:known_results}.

\begin{result}
    Any $f$-$\mathsf{UCG}$ with $f:\{0,1\}^n\to\mathcal{U}(\mathbb{C}^{2\times 2})$ can be implemented in constant depth using either $O(2^n n)$ ancillae and $O(2^n)$ Fan-Out gates, or $O(2^n)$ ancillae and $5$ $\mathsf{GT}$ gates.
\end{result}

\begin{table}[H]
\centering
\resizebox{\linewidth}{!}{
\begin{tabular}{|c|cc|cc|}
\hline
\multirow{2}{*}{Result} & \multicolumn{2}{c|}{$O(1)$-depth Fan-Out construction} & \multicolumn{2}{c|}{$O(1)$-depth $\mathsf{GT}$ construction} \\ 
\cline{2-5} 
& \multicolumn{1}{c|}{\#Fan-Out} & \#Ancillae & \multicolumn{1}{c|}{\#$\mathsf{GT}$} & \#Ancillae \\ \hline 
\makecell[c]{$f$-$\mathsf{UCG}$ $(\ast)$ \\ {\Cref{thr:ucg_construction}}} & \multicolumn{1}{c|}{$O(n + 2^{t+r} (t+r))$} & $O(2^{t+r} (t+r) \log(t+r))$ & \multicolumn{1}{c|}{\makecell[c]{9}} & $O(2^{t+r} (t+r))$ \\ \hline
\makecell[c]{$f\text{-}\mathsf{UCG}$\\ {\Cref{thr:ucg_boolean_construction1}}} & \multicolumn{1}{c|}{$O\Big(|\operatorname{supp}^{>1}(f)| + \big|\bigcup_{S\in\operatorname{supp}^{>1}(f)}S\big|\Big)$} & $O\Big(\sum_{S\in\operatorname{supp}(f)}|S|\Big)$ & \multicolumn{1}{c|}{5} & $O(|\operatorname{supp}^{>1}(f)|)$ \\ \hline
\makecell[c]{$f\text{-}\mathsf{UCG}$ $(\ddagger)$\\ {\Cref{thr:ucg_boolean_construction2}}} & \multicolumn{1}{c|}{$O\Big(s+\big|\bigcup_{S\in\operatorname{supp}^{>1}(f)}S\big|\Big)$} & $O(s\operatorname{deg}(f) + |\operatorname{supp}^{= 1}(f)|)$ & \multicolumn{1}{c|}{5} & $O(s)$ \\ \hline
\makecell[c]{$f\text{-}\mathsf{UCG}$\\ {\Cref{thr:ucg_boolean_construction3}}} & \multicolumn{1}{c|}{\makecell[c]{$O\Big( \sum_{S\in\operatorname{supp}^{>1}_{\{0,1\}}(f)} |S|\Big)$}} &  \makecell[c]{$O\Big(\sum_{S\in\operatorname{supp}_{\{0,1\}}(f)} |S|\log(1+|S|)\Big)$} & \multicolumn{1}{c|}{9} & \makecell[c]{$O\Big(\sum_{S \in \operatorname{supp}^{>1}_{\{0,1\}}(f)}|S|\Big)$} \\  \hline
\end{tabular}
}
\caption{Main results for $f$-$\mathsf{UCG}$, where $f:\{0,1\}^n\to\mathcal{U}(\mathbb{C}^{2\times 2})$ has the $\mathsf{Z}$-decomposition $\alpha,\beta,\gamma,\delta:\{0,1\}^n\to[-1,1]$. In $(\ast)$, $f$ is a $(J,r)$-junta with $|\overline{{J}}|=t$. In $(\ddagger)$, the gate is implemented with spectral norm error at most $\epsilon$ and $s:=(n/\epsilon^2)\sum_{\nu\in\{\alpha,\beta,\gamma,\delta\}}\hat{\|}\nu^{>1}\hat{\|}_1^2$.
}
\label{table:ucg}
\end{table}

\begin{table}[ht]
\centering
\resizebox{\linewidth}{!}{
\begin{tabular}{|c|cc|cc|}
\hline
\multirow{2}{*}{Result} & \multicolumn{2}{c|}{$O(1)$-depth Fan-Out construction} & \multicolumn{2}{c|}{$O(1)$-depth $\mathsf{GT}$ construction} \\ 
\cline{2-5} 
& \multicolumn{1}{c|}{\#Fan-Out} & \#Ancillae & \multicolumn{1}{c|}{\#$\mathsf{GT}$} & \#Ancillae \\ \hline 
\makecell[c]{$f\text{-}\mathsf{FIN}$ $(\ast)$\\ {\Cref{thr:fin_onehot}}} & \multicolumn{1}{c|}{$O(n + 2^{t+r} (t+r))$} & $O(2^{t+r} (t+r) \log(t+r))$ & \multicolumn{1}{c|}{\makecell[c]{6}} & $O(2^{t+r} (t+r))$ \\ \hline
\makecell[c]{$f\text{-}\mathsf{FIN}$\\ {\Cref{thr:fin_boolean_construction1}}} & \multicolumn{1}{c|}{$O\Big(|\operatorname{supp}^{>1}(f)| + \big|\bigcup_{S\in\operatorname{supp}^{>1}(f)}S\big|\Big)$} & $O\Big(\sum_{S\in\operatorname{supp}(f)}|S|\Big)$ & \multicolumn{1}{c|}{2} & $O(|\operatorname{supp}^{>0}(f)|)$ \\ \hline
\makecell[c]{$f\text{-}\mathsf{FIN}$ $(\ddagger)$\\ {\Cref{thr:fin_boolean_construction2}}} & \multicolumn{1}{c|}{$O\Big(s+\big|\bigcup_{S\in\operatorname{supp}^{>1}(f)}S\big|\Big)$} & $O(s\operatorname{deg}(f) + |\operatorname{supp}^{= 1}(f)|)$ & \multicolumn{1}{c|}{2} & $O(s + |\operatorname{supp}^{=1}(f)|)$ \\ \hline
\makecell[c]{$f\text{-}\mathsf{FIN}$\\ {\Cref{thr:fin_boolean_construction3}}} & \multicolumn{1}{c|}{\makecell[c]{$O\Big( \sum_{S\in\operatorname{supp}^{>1}_{\mathbb{F}_2}(f)} |S| \Big)$}} &  \makecell[c]{$O\Big(\sum_{S\in\operatorname{supp}_{\mathbb{F}_2}(f)} |S|\log(1+|S|)\Big)$} & \multicolumn{1}{c|}{6} & $O\Big(\sum_{S \in \operatorname{supp}_{\mathbb{F}_2}(f)}|S| \Big)$ \\ \hline
\end{tabular}
}
\caption{Main results for $f$-$\mathsf{FIN}$, where $f:\{0,1\}^n\to\{0,1\}$. In $(\ast)$, $f$ is a $(J,r)$-junta with $|\overline{{J}}|=t$. In $(\ddagger)$, the gate is implemented with spectral norm error at most $\epsilon$ and $s:=n\hat{\|}f^{>1}\hat{\|}_1^2/\epsilon^2$.
}
\label{table:fan-in}
\end{table}

\begin{table}[ht]
\centering
\resizebox{0.7\linewidth}{!}{
\begin{tabular}{|c|cc|cc|}
\hline
\multirow{2}{*}{Result} & \multicolumn{2}{c|}{$O(d)$-depth Fan-Out construction} & \multicolumn{2}{c|}{$O(d)$-depth $\mathsf{GT}$ construction} \\ 
\cline{2-5} 
& \multicolumn{1}{c|}{\#Fan-Out} & \#Ancillae & \multicolumn{1}{c|}{\#$\mathsf{GT}$} & \#Ancillae \\ \hline 
\makecell[c]{$\mathsf{QRAG}$ \\ {\Cref{thr:qram_recursive_procedure}}} & \multicolumn{1}{c|}{$O(n\log^{(d)}{n})$} & $O(n\log^{(d)}{n}\log^{(d+1)}{n})$ & \multicolumn{1}{c|}{$21d-12$} & $O(n\log^{(d)}{n})$ \\ \hline
\makecell[c]{$\mathsf{QRAM}$ \\ {\Cref{thr:qram_recursive_procedure}}} & \multicolumn{1}{c|}{$O(n\log^{(d)}{n})$} & $O(n\log^{(d)}{n}\log^{(d+1)}{n})$ & \multicolumn{1}{c|}{$16d-10$} & $O(n\log^{(d)}{n})$ \\ \hline
\makecell[c]{$\mathsf{QRAM}$ \\{\Cref{thr:qram_recursive_procedure_boolean}}} & \multicolumn{1}{c|}{$O(n^{1/(1-2^{-d})})$} & $O(n^{1/(1-2^{-d})}\log{n})$ & \multicolumn{1}{c|}{$8d-6$} & $O(n^{1/(1-2^{-d})})$ \\ \hline
\end{tabular}
}
\caption{Main results for $\QRAM$/$\QRAG$ with memory size $n$. $d\in\mathbb{N}$ and $\log^{(d)}{n}$ is the $d$-times iterated logarithm.}
\label{table:qram-qrag}
\end{table}

\begin{table}[ht!]
\centering
\def\arraystretch{1.17}
\resizebox{\linewidth}{!}{
\begin{tabular}{|c|c|c|c|c|c|c|}
\hline
Work & $n$-qubit Circuit & \#Ancillae  & \#Fan-Out & \#$\mathsf{GT}$ & Size & Depth \\ \hline \hline

\cite{barenco1995elementary}&Arbitrary Unitary& $0$ & - & - & $O(n^3 2^{2n})$ & $O(n^3 2^{2n})$ \\ \hline

\cite{knill1995approximation}&Arbitrary Unitary& $0$ & - & - & $O(n 2^{2n})$ & $O(n 2^{2n})$ \\ \hline

\makecell{\cite{vartiainen2004efficient} \\ \cite{mottonen2005decompositions}}&Arbitrary Unitary& $0$ & - & - & $O(2^{2n})$ & $O(2^{2n})$ \\ \hline

\multirow{3}{*}{\cite{hoyer2005quantum}}&Approx.\ $\mathsf{Exact}[t]$& $O(n\log^{(d)}\!{n})$ & $O(n)$ & - & - & $O(d)$ \\ 

&$\mathsf{Exact}[t]$& $O(n)$ & $O(n)$ & - & - & $O(\log^\ast{n})$ \\

&Approx.\ $\mathsf{Threshold}[t]$& $O(n\log{n})$ & $O(n)$ & - & - & $O(1)$ \\ \hline

\multirow{3}{*}{\cite{takahashi2016collapse}}&$\mathsf{Exact}[t]$& $O(n\log^{(d)}\!{n})$ & $O(n)$ & - &  & $O(d)$ \\

&$\mathsf{Threshold}[t \leq \log{n}]$& $O(n\log{n})$ & $O(n)$ & - & - & $O(1)$ \\

&$\mathsf{Threshold}[t \geq \log{n}]$& $O(\sqrt{t\log{n}})$ & $O(n\sqrt{t\log{n}})$ & - & - & $O(1)$ \\ \hline

\multirow{2}{*}{\cite{maslov2018use}}&Arbitrary Clifford& $0$ & - & $12n-18$ & - & $O(n)$ \\

&$\mathsf{Exact}[t]$& $n/2$ & - & $3n-6$ & - & $O(n)$ \\ \hline

\cite{groenland2020signal}&$\mathsf{Exact}[t]$& $0$ & - & $2n$ & - & $O(n)$ \\ \hline

\cite{van2021constructing}&Arbitrary Clifford& $0$ & - & $6n-8$ & - & $O(n)$ \\ \hline

\multirow{2}{*}{\cite{grzesiak2022efficient}}&Arbitrary Clifford& $n/2$ & - & $6\log{n} + O(1)$ & - & $O(\log{n})$ \\ 

&$\mathsf{Exact}[t]$& $O(1)$ & - & $3n/2$ & - & $O(n)$ \\ \hline

\cite{bassler2022synthesis}&Arbitrary Clifford& $0$ & - & $2n$ & - & $O(n)$ \\ \hline

\cite{maslov2022depth}&Arbitrary Clifford& $0$ & - & $2\log{n} + O(1)$ & - & $O(\log{n})$ \\ \hline

\multirow{4}{*}{\cite{bravyi2022constant}}&Arbitrary Clifford& $0$ & - & $26$ & - & $O(1)$ \\ 

&Arbitrary Clifford& $O(n)$ & - & $4$ & - & $O(1)$\\ 

&$\mathsf{Exact}[t]$& $O(n)$ & - & $4$ & - & $O(1)$\\ 

&$\mathsf{Exact}[t]$& $O(\log{n})$ & - & $O(\log^\ast{n})$ & - & $O(1)$\\ \hline

\multirow{2}{*}{\cite{rosenthal2021query}} & $\mathsf{QRAG}$ & $O(n\log{n}\log\log{n})$ & $O(n\log{n})$ & - & - & $O(1)$ \\ 

& Arbitrary Unitary& $\widetilde{O}(2^{2n})$ & - & - & $\widetilde{O}(2^{5n/2})$ & $\widetilde{O}(2^{n/2})$ \\ \hline

\cite{Zhang2022quantum}& $f\text{-}\mathsf{UCG}$ & $O(2^{n})$ & - & - & $O(n2^n)$ & $O(n)$ \\ \hline

\multirow{2}{*}{\cite{STY-asymptotically}}& Arbitrary Unitary& $O(m)$ & - & - & $O(2^{2n})$ & $O(n2^n + 2^{2n}/(n+m))$ \\ 

& $f\text{-}\mathsf{UCG}$ & $O(m)$ & - & - & $O(2^n)$ & $O(n + 2^n/(n+m))$ \\ \hline

\cite{yuan2023optimal}&Arbitrary Unitary& $O(m)$ & - & - & $O(\sqrt{m}2^{3n/2})$ & $O(n2^{n/2} + \sqrt{n/m}2^{3n/2})$ \\ \hline

\multirow{2}{*}{\cite{Low2024tradingtgatesdirty}}& Arbitrary Unitary& $O(m)$ & - & - & $O(2^{2n})$ & $O(n2^n + 2^{2n}/m)$ \\ 

& $\mathsf{QRAM}$ & $2\lceil\log_2{n}\rceil + 2$ & - & - & $O(n)$ & $O(n)$ \\ \hline

\multirow{6}{*}{This work}& $f$-$\mathsf{UCG}$ & $O(n2^n)$ & $O(2^n)$ & - & - & $O(1)$ \\  

& $f$-$\mathsf{UCG}$ & $O(2^n)$ & - & $5$ & - & $O(1)$ \\

& $f$-$\mathsf{FIN}$ & $O(2^n)$ & - & $2$ & - & $O(1)$ \\

&$\mathsf{QRAM}$/$\mathsf{QRAG}$&$O(n\log^{(d)}\!{n}\log^{(d+1)}\!{n})$& $O(n\log^{(d)}\!{n})$ & - & - & $O(d)$ \\ 

& $\mathsf{QRAG}$ & $O(n\log^{(d)}\!{n})$ & - & $21d-12$ & - & $O(d)$ \\ 

& $\mathsf{QRAM}$ & $O(n\log^{(d)}\!{n})$ & - & $16d-10$ & - & $O(d)$ \\ 

& $\mathsf{QRAM}$ & $O(n^{(1/(1-2^{-d})})$ & - & $8d-6$ & - & $O(d)$ \\ \hline

\end{tabular}}
\caption{Summary of some known constructions in the literature regarding number of ancillae, Fan-Out and $\mathsf{GT}$ gates, size (number of single and two-qubit gates), and depth. Here $d$ and $m$ are tunable parameters, $\log^{(d)}{n}$ is the $d$-times iterated logarithm, and $\log^\ast{n}$ is the star-log function. We disregard the size of circuits with Fan-Out and/or $\mathsf{GT}$ gates. In~\cite{yuan2023optimal}, $m\in[\Omega(2^n/n),O(2^{2n})]$, while $m\in[\Omega(n),O(n2^n)]$ in~\cite{Low2024tradingtgatesdirty}.  Ref.~\cite{Low2024tradingtgatesdirty} focuses on minimising the $\mathsf{T}$-count, which is $O(2^n m + 2^{2n}/m)$ for arbitrary unitary. The result of~\cite{maslov2022depth} is implicit.}
\label{table:known_results}
\end{table}

\subsection{Related work}

\paragraph*{Constant-depth complexity classes.} Recall the main classical classes computed by constant-depth and polynomial-size circuits:
\begin{itemize}[noitemsep,topsep=-0.5pt]
    \item $\mathsf{NC}^0$ with $\mathsf{NOT}$ and bounded $\mathsf{AND},\mathsf{OR}$ gates;
    \item $\mathsf{AC}^0$ with $\mathsf{NOT}$ and unbounded   $\mathsf{AND},\mathsf{OR}$ gates;
    \item $\mathsf{TC}^0$ with $\mathsf{NOT}$ and unbounded  $\mathsf{AND},\mathsf{OR},\mathsf{THRESHOLD}[t]$ gates for all $t$;
    \item $\mathsf{AC}^0[q]$ with $\mathsf{NOT}$ and unbounded  $\mathsf{AND},\mathsf{OR},\mathsf{MOD}[q]$ gates;
    \item $\mathsf{ACC}^0 = \bigcup_{q>1}\mathsf{AC}^0[q]$.
\end{itemize}
The study of shallow quantum circuit classes was initiated in~\cite{moore1998some,moore2001parallel}, which introduced a definition of $\mathsf{QNC}^0$, the quantum analogue of the class $\mathsf{NC}^0$.  The remaining quantum analogs of the above circuit classes such as $\mathsf{QAC}^0,\mathsf{QTC}^0,\mathsf{QAC}^0[q]$, and $\mathsf{QACC}^0$ were later defined in~\cite{green2001counting}. In the same paper, the authors introduced expanded versions of the aforementioned classes in which Fan-Out gates are also allowed. For example, the class $\mathsf{QAC}^0_f$ consists of problems solvable by constant-depth and polynomial-size quantum circuits composed by Fan-Out gates and unbounded $\mathsf{AND},\mathsf{OR}$ gates (and similarly for the remaining classes $\mathsf{QTC}_f^0,\mathsf{QAC}^0_f[q],\mathsf{QACC}^0_f$).

Moore~\cite{moore1999quantum} and Green et al.~\cite{green2001counting} proved that $\mathsf{QAC}^0_f = \mathsf{QAC}^0[q] = \mathsf{QACC}^0$ for any $q>1$. This result differs greatly from the classical result~\cite{smolensky1987algebraic} that $\mathsf{AC}^0[p] \neq \mathsf{AC}^0[q]$ for primes $p\neq q$. The power of Fan-Out was further explored in~\cite{hoyer2005quantum} who proved that the bounded-error versions of $\mathsf{QNC}^0_f,\mathsf{QAC}^0_f,\mathsf{QTC}^0_f$ are equal. Later, \cite{takahashi2016collapse} managed to  collapse the hierarchy of constant-depth exact quantum circuits: $\mathsf{QNC}_f^0 = \mathsf{QAC}^0_f = \mathsf{QTC}^0_f$. This is in sharp contrast to the classical result $\mathsf{NC}^0 \subset \mathsf{AC}^0 \subset \mathsf{TC}^0$. Still, open problems abound, e.g., whether $\mathsf{QAC}^0$ and $\mathsf{QAC}_f^0$ are equal or not. In this direction, see~\cite{pius2015parallel,pade2020depth,rosenthal2020bounds,nadimpalli2023pauli,anshu2024computational}.

Regarding the class $\mathsf{QNC}^0$ more specifically, it has been an object of great interest since its proposal. A series of works~\cite{terhal2002adaptive,aaronson2005quantum,aaronson2016complexity,bouland2018quantum,boixo2018characterizing} gave evidence that sampling from the output distribution of shallow quantum circuits cannot be simulated by polynomial-time classical computers. Recently, a new line of research starting in~\cite{bravyi2018quantum} is focused in proving unconditional separation between the classical and quantum constant-depth circuits~\cite{coudron2021trading,gall2018average,watts2019exponential,bravyi2020quantum,grier2020interactive,watts2023unconditional,caha2023colossal,briet2023noisy,grewal2024improved,grilo2024power,hsieh2024unconditionally}.

\paragraph*{Quantum state preparation.} Quantum state preparation is the problem of constructing an $n$-qubit quantum state $\ket{\psi}$ starting from the initial state $\ket{0}^{\otimes n}$ and classical knowledge of the amplitudes of $\ket{\psi}$. To our knowledge, the first results for efficient state preparation are~\cite{grover2000synthesis,grover2002creating}, the latter using oracle access (implementable with a $\mathsf{QRAM}$) to a set of pre-computed partial integrals. Since then, several constructions have been proposed~\cite{bergholm2005quantum,plesch2011quantum,cortese2018loading,sanders2019black,araujo2021divide,bausch2022fast,rattew2022preparing-arbitrary,mcardle2022quantum,plesch2011quantum,zhang2021lowdepth,rosenthal2021query,Zhang2022quantum,STY-asymptotically,bouland2023state,rosenthal2023efficient}.

\section{Preliminaries}\label{sec:preliminaries}

Denote $\mathbb{N} = \{1,2,\dots\}$ and $[n] := \{0,\dots,n-1\}$. Given $d\in\mathbb{N}$, let $\log^{(d)}{n}$ be the $d$-times iterated logarithm defined recursively by $\log^{(d)}{n} = \log(\log^{(d-1)}{n})$ and $\log^{(1)}{n} = \log{n}$. The \emph{star-log function} $\log^\ast{n}$ is the maximum number of iterations $d$ such that $\log^{(d)}{n}$ exists and is real. Let $[n]^{m}$ and $[n]^{\leq m}$ be the set of sequences of size $m$ and size at most $m$, respectively. We shall often equate the decimal and binary representations of a given number. Given $x=x_0x_1\dots x_{n-1}\in\{0,1\}^n$, let $|x|$ be its Hamming weight and $\overline{x}$ its bit-wise negation, i.e., $\overline{x}_i = x_i\oplus 1$ for all $i\in[n]$.
The one-hot encoding $e(x)\in\{0,1\}^{2^n}$ of a string $x\in\{0,1\}^n$ is defined such that $e(x)_j = 1$ if and only if $j=x$, $j\in\{0,1\}^n$, and can be calculated as $e(x)_j = \bigwedge_{k\in[n]} (x\oplus \overline{j})_k$. We take logarithms to the base $2$. Given $\mathsf{A}\in\mathbb{C}^{n\times n}$, its spectral norm is $\|\mathsf{A}\| := \max_{v\in\mathbb{C}^n:\|v\|_2=1}\|\mathsf{A}v\|_2$. Let $\mathcal{U}(\mathbb{C}^{n\times n})$ be the set of $n\times n$ unitary matrices. Let $\mathbb{I}_n$ be the $2^n\times 2^n$ identity matrix, $\mathsf{X},\mathsf{Y},\mathsf{Z}$ the usual Pauli matrices, and $\mathsf{H}$ the Hadamard gate. For $\theta\in[-1,1]$, define $\mathsf{Z}(\theta) := \bigl(\begin{smallmatrix} 1&0 \\ 0&e^{i\pi\theta} \end{smallmatrix} \bigr)$. 
\begin{itemize}
    \item Given an ordered sequence $I\in[n]^m$ of $m$ distinct elements and a unitary $\mathsf{U}\in\mathcal{U}(\mathbb{C}^{2^m\times 2^m})$, let $\mathsf{U}_{\to I}\in\mathcal{U}(\mathbb{C}^{2^n\times 2^n})$ be the unitary that applies $\mathsf{U}$ onto qubits in $I$ and the identity onto the remaining qubits, i.e., $\mathsf{U}_{\to I}|x\rangle = (\mathsf{U}|x_I\rangle)|x_{\overline{I}}\rangle$. If $I = (i) \in [n]$, write $\mathsf{U}_{\to i}$.
    \item Given an ordered sequence $I\in[n]^m$ of $m$ distinct elements, $S\subseteq[n]\setminus I$, and a unitary $\mathsf{U}\in\mathcal{U}(\mathbb{C}^{2^m\times 2^m})$, let $\mathsf{C}_S$-$\mathsf{U}_{\to I}\in\mathcal{U}(\mathbb{C}^{2^n\times 2^n})$ be the unitary that applies $\mathsf{U}$ onto qubits in $I$ controlled on all qubits in $S$ being in the $|1\rangle$ state and the identity onto the remaining qubits (define $\mathsf{C}_\emptyset$-$\mathsf{U}_{\to I} := \mathsf{U}_{\to I}$ if $S=\emptyset$). As an example, $\mathsf{C}_S$-$\mathsf{X}_{\to i}$ is the $\mathsf{X}$ gate applied onto qubit $i$ controlled on qubits in $S$ being in the $|1\rangle$ state (if $|S| = 1$, this is just a $\mathsf{CNOT}$ gate).
\end{itemize}
Let $\mathsf{SWAP}_{i\leftrightarrow j}$ be the gate that swaps qubits $i,j\in[n]$ and $\mathsf{C}_k$-$\mathsf{SWAP}_{i\leftrightarrow j}$ its controlled version on qubit $k\in[n]\setminus\{i,j\}$. In the present work, we use a one and two-qubit universal gate set $\mathcal{G}$, e.g., $\{\mathsf{H},\mathsf{CNOT},\mathsf{Z}(\theta)\}$, supplemented with the multi-qubit Fan-Out and $\mathsf{GT}$ gates defined in \Cref{sec:multi-qubit_gates}.
\begin{fact}[$\mathsf{Z}$-decomposition, {\cite[Theorem~4.1]{nielsen2002quantum}}]
    Let $f:\{0,1\}^n\to\mathcal{U}(\mathbb{C}^{2\times 2})$ be a function onto single-qubit gates. Then there are functions $\alpha,\beta,\gamma,\delta:\{0,1\}^n \to [-1,1]$ such that
    \begin{align*}
        f(x) = e^{i\pi\alpha(x)}\mathsf{Z}(\beta(x))\mathsf{H}\mathsf{Z}(\gamma(x))\mathsf{H}\mathsf{Z}(\delta(x)), \quad x\in\{0,1\}^n.
    \end{align*}
    We say that the tuple $(\alpha,\beta,\gamma,\delta)$ is the \emph{$\mathsf{Z}$-decomposition} of $f$.
\end{fact}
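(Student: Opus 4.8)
The plan is to observe that the statement is entirely pointwise: since the domain $\{0,1\}^n$ is finite, it suffices to prove that every single-qubit unitary $\mathsf{U}\in\mathcal{U}(\mathbb{C}^{2\times 2})$ admits a decomposition $\mathsf{U}=e^{i\pi a}\mathsf{Z}(b)\mathsf{H}\mathsf{Z}(c)\mathsf{H}\mathsf{Z}(d)$ for suitable $a,b,c,d\in[-1,1]$, and then to define $\alpha(x),\beta(x),\gamma(x),\delta(x)$ to be the parameters obtained by applying this decomposition to $\mathsf{U}=f(x)$, for each $x$ independently. No compatibility, continuity, or joint measurability across different inputs $x$ is required, so the reduction to a single unitary is lossless.

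For the pointwise claim I would start from the standard $\mathsf{Z}$-$\mathsf{Y}$ (Euler-angle) decomposition of a single qubit, \cite[Theorem~4.1]{nielsen2002quantum}: there exist reals $\theta,b_0,c_0,d_0$ with $c_0\in[0,\pi]$ such that $\mathsf{U}=e^{i\theta}R_z(b_0)R_y(c_0)R_z(d_0)$, where $R_z(\phi)=e^{-i\phi\mathsf{Z}/2}$ and $R_y(\phi)=e^{-i\phi\mathsf{Y}/2}$. I would then translate each factor into the paper's normalization. Using $\mathsf{Z}(\phi)=e^{i\pi\phi/2}R_z(\pi\phi)$ and the identity $\mathsf{H}\mathsf{Z}(\phi)\mathsf{H}=e^{i\pi\phi/2}R_x(\pi\phi)$ (Hadamard conjugation sends $\mathsf{Z}$ to $\mathsf{X}$), together with $R_y(\phi)=\mathsf{S}R_x(\phi)\mathsf{S}^\dagger$, where $\mathsf{S}=\mathsf{Z}(1/2)=\diag(1,i)$ satisfies $\mathsf{S}\mathsf{X}\mathsf{S}^\dagger=\mathsf{Y}$, the middle rotation becomes $R_y(c_0)=\mathsf{Z}(1/2)\,R_x(c_0)\,\mathsf{Z}(-1/2)$ up to a scalar, i.e.\ an $\mathsf{H}\mathsf{Z}(\cdot)\mathsf{H}$ sandwich flanked by the half-integer phases $\mathsf{Z}(\pm 1/2)$, with the inner argument $c_0/\pi\in[0,1]$.

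Next I would fuse the factors. Because $\mathsf{Z}$-gates multiply by adding their arguments, $\mathsf{Z}(u)\mathsf{Z}(v)=\mathsf{Z}(u+v)$, the flanking $\mathsf{Z}(\pm 1/2)$ merge into the two outer $\mathsf{Z}$ factors arising from $R_z(b_0)$ and $R_z(d_0)$, leaving exactly the shape $\mathsf{Z}(b)\mathsf{H}\mathsf{Z}(c)\mathsf{H}\mathsf{Z}(d)$, while all the accumulated scalar prefactors, namely $e^{i\theta}$ and the various $e^{\pm i\pi(\cdot)/2}$ terms, collect into a single global phase that I rewrite as $e^{i\pi a}$. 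Concretely this yields $b=b_0/\pi+1/2$, $c=c_0/\pi$, $d=d_0/\pi-1/2$, and $a=(\theta-(b_0+c_0+d_0)/2)/\pi$.

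Finally, I would push every parameter into $[-1,1]$ using only periodicity: $\mathsf{Z}(\phi)$ depends on $\phi$ only modulo $2$, and $e^{i\pi a}$ is $2$-periodic in $a$, so each of $a,b,d$ can be reduced modulo $2$ into $[-1,1]$ without altering the unitary, while $c=c_0/\pi\in[0,1]$ is already admissible. Defining the four functions pointwise as above then establishes the decomposition for every $x$ simultaneously. I do not expect a genuine obstacle here, since the content is a textbook $SU(2)$ Euler decomposition; the only thing that demands care is the bookkeeping of the global phases and the conversion of the $\mathsf{Z}$-$\mathsf{Y}$ form into the $\mathsf{Z}$-$\mathsf{X}$ (that is, $\mathsf{H}\mathsf{Z}\mathsf{H}$) form, so that the leftover prefactor is exactly a phase $e^{i\pi a}$ and all four angles legitimately land in $[-1,1]$.
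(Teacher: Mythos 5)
Your proposal is correct, and it takes essentially the same route as the paper: the paper offers no proof of this Fact, citing Nielsen--Chuang Theorem~4.1 (the $\mathsf{Z}$--$\mathsf{Y}$ Euler decomposition) as its source, and your argument derives the stated form from exactly that theorem. The pointwise reduction, the conjugation $R_y(c_0)=\mathsf{Z}(1/2)R_x(c_0)\mathsf{Z}(-1/2)$ turning the middle rotation into an $\mathsf{H}\mathsf{Z}(\cdot)\mathsf{H}$ sandwich, the phase bookkeeping $b=b_0/\pi+1/2$, $c=c_0/\pi$, $d=d_0/\pi-1/2$, $a=\bigl(\theta-(b_0+c_0+d_0)/2\bigr)/\pi$, and the final mod-$2$ reduction of each parameter into $[-1,1]$ are all valid, so your write-up simply makes explicit the conversion the paper leaves implicit in its citation.
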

  
The size/arity of a gate is the number of qubits on which it depends and effects, e.g., a $\mathsf{C}_S$-$\mathsf{U}_{\to i}$ gate has arity $|S|+1$. For clarity, we may explicitly denote the arity $k$ of a gate $\mathsf{U}$ by writing $\mathsf{U}^{(k)}$.
Circuit diagrams in this paper use the following convention for controlled gates. A black circle ($\bullet$) denotes a control that is active when the qubit is in the $|1\rangle$ state, while a white circle ($\circ$) denotes a control that is active when the qubit is in the $|0\rangle$ state (see \Cref{fig:UCG}).

\subsection{Boolean analysis}
\label{sec:boolean_analysis}

For an introduction to Boolean analysis, see~\cite{de2008brief,o2014analysis}. In the following, we identify a set $S\subseteq[n]$ with its characteristic vector $S\in\{0,1\}^n$ such that $S_i = 1$ if and only if $i\in S$. Given a real-valued Boolean function $f:\{0,1\}^n\to\mathbb{R}$, its (unique) real-polynomial representation, or Fourier expansion, is
\begin{align*}
    f(x) = \sum_{S\subseteq[n]} \widehat{f}(S)\chi_S(x), \quad\text{where}~ \chi_S(x) := (-1)^{S\cdot x} = (-1)^{\sum_{i\in S}x_i}
\end{align*}
 and its Fourier coefficients $\widehat{f}:2^{[n]}\to\mathbb{R}$ are given by $\widehat{f}(S) = \frac{1}{2^{n}}\sum_{x\in\{0,1\}^n}f(x)\chi_S(x)$. The Fourier expansion is a multipolynomial expansion over $\{-1,1\}$, i.e., it uses $\mathsf{PARITY}$ functions. It is possible to represent a function over $\{0,1\}$, i.e., using $\mathsf{AND}$ functions instead. Given $f:\{0,1\}^n\to\mathbb{R}$, its (unique) real-polynomial $\{0,1\}$-representation is
\begin{align*}
    f(x) = \sum_{S\subseteq[n]} \widetilde{f}(S) x^S, \quad\text{where}~x^S := \prod_{i\in S} x_i ~~\text{and}~~ \widetilde{f}:2^{[n]} \to \mathbb{R} ~\text{is}~ \widetilde{f}(S) = \sum_{T\subseteq S}(-1)^{|S| - |T|}f(T)
\end{align*}
(the formula $\sum_{T\subseteq S}(-1)^{|S| - |T|}f(T)$ is called M\"obius inversion). The Fourier expansion (using $\mathsf{AND}$ or $\mathsf{PARITY}$  functions) is a representation over the \emph{real} field. In the special case of functions with codomain $\{0,1\}$, it is possible to represent them over the field $\mathbb{F}_2$ instead. Given $f:\{0,1\}^n\to\{0,1\}$, its (unique) $\mathbb{F}_2$-polynomial representation (also called algebraic normal form) is
\begin{align*}
    f(x) = \bigoplus_{S\subseteq[n]} \widetilde{f}_{\mathbb{F}_2}(S) x^S, \quad\text{where}~\widetilde{f}_{\mathbb{F}_2}:2^{[n]} \to \{0,1\} ~\text{is}~ \widetilde{f}_{\mathbb{F}_2}(S) = \widetilde{f}(S)~(\text{mod}~2) = \bigoplus_{x:\operatorname{supp}(x)\subseteq S}f(x), 
\end{align*}
with $\operatorname{supp}(x) := \{i\in[n]:x_i\neq 0\}$. The $\mathbb{F}_2$-representation can be obtained from the real $\{0,1\}$-representation by changing the sum over the reals to a sum over $\mathbb{F}_2$, i.e., $\widetilde{f}_{\mathbb{F}_2}(S) = \widetilde{f}(S)~(\text{mod}~2)$.

Given the above representations of a function $f$, there are several important quantities that can be extracted from them. We list various concepts that will be used in the paper.
\begin{enumerate}
    \item $\operatorname{supp}(f) := \{S\subseteq[n]:\widehat{f}(S) \neq 0\}$ is the Fourier support of $f$;
    \item $|\operatorname{supp}(f)|$ is the sparsity of $f$;
    \item $\operatorname{supp}^{>k}(f) := \{S\subseteq[n]:|S|>k,\widehat{f}(S)\neq 0\}$ (similarly for $\operatorname{supp}^{\leq k}(f)$ and $\operatorname{supp}^{=k}(f)$);
    \item $\operatorname{deg}(f) := \max\{|S|:S\in\supp(f)\}$ is the Fourier degree of $f$;
    \item $f^{> k} := \sum_{S\subseteq[n]:|S|> k}\widehat{f}(S)\chi_S$ is the part of $f$ with degree greater than $k$ (similarly for $f^{\leq k}$);
    \item $\hat{\|}f\hat{\|}_1 := \sum_{S\subseteq[n]}|\widehat{f}(S)|$ is the Fourier $1$-norm of $f$;
    \item $\operatorname{supp}_{\{0,1\}}(f) := \{S\subseteq[n]:\widetilde{f}(S)\neq 0\}$ is the $\{0,1\}$-support of $f$;
    \item $\operatorname{supp}^{>k}_{\{0,1\}}(f) := \{S\subseteq[n]:|S|>k,\widetilde{f}(S)\neq 0\}$ (similarly for $\operatorname{supp}^{\leq k}_{\{0,1\}}(f)$ and $\operatorname{supp}^{=k}_{\{0,1\}}(f)$);
    \item $\operatorname{deg}_{\{0,1\}}(f) := \max\{|S|:S\in\operatorname{supp}_{\{0,1\}}(f)\} = \operatorname{deg}(f)$ is the $\{0,1\}$-degree of $f$;
    \item $\operatorname{supp}_{\mathbb{F}_2}(f) := \{S\subseteq[n]:\widetilde{f}_{\mathbb{F}_2}(S) \neq 0\}$ is $\mathbb{F}_2$-support of $f$;
    \item $\operatorname{deg}_{\mathbb{F}_2}(f) := \max\{|S|:S\in\supp_{\mathbb{F}_2}(f)\} \leq \operatorname{deg}(f)$ is the $\mathbb{F}_2$-degree of $f$;
    \item $\operatorname{supp}^{>k}_{\mathbb{F}_2}(f) := \{S\subseteq[n]:|S|>k,\widetilde{f}_{\mathbb{F}_2}(S) \neq 0\}$ (similarly for $\operatorname{supp}^{\leq k}_{\mathbb{F}_2}(f)$ and $\operatorname{supp}^{=k}_{\mathbb{F}_2}(f)$). 
\end{enumerate}

Given $f:\{0,1\}^n\to\mathcal{U}(\mathbb{C}^{2\times 2})$, consider its $\mathsf{Z}$-decomposition $\alpha,\beta,\gamma,\delta:\{0,1\}^n\to[-1,1]$. We extend the above Boolean definitions to $f$ by defining $\operatorname{supp}(f) := \operatorname{supp}(\alpha)\cup \operatorname{supp}(\beta)\cup \operatorname{supp}(\gamma)\cup \operatorname{supp}(\delta)$ and $\operatorname{deg}(f) := \max\{\operatorname{deg}(\alpha),\operatorname{deg}(\beta),\operatorname{deg}(\gamma),\operatorname{deg}(\delta)\}$. Similar definitions apply to $\operatorname{supp}^{>k}(f)$, $\operatorname{supp}^{\leq k}(f)$, $\operatorname{supp}^{=k}(f)$, $\operatorname{supp}_{\{0,1\}}(f)$, and $\operatorname{supp}^{> k}_{\{0,1\}}(f)$.

More generally, consider a function $f:\{0,1\}^n\to V$, where $V$ is a complex vector space. Given a partition $(J,\overline{J})$ of $[n]$ and $z\in\{0,1\}^{|\overline{J}|}$, we write $f_{J|z}:\{0,1\}^{|J|} \to V$ for the subfunction of $f$ given by fixing the coordinates in $\overline{J}$ to the bit values $z$. We say that $f:\{0,1\}^n\to V$ is an $r$-junta for $r\in\mathbb{N}$ if it depends on at most $r$ of its input coordinates, i.e., $f(x) = g(x_{i_1},\dots,x_{i_r})$ for some $g:\{0,1\}^r\to V$ and $i_1,\dots,i_r\in[n]$. We say that $f:\{0,1\}^n\to V$ is a $(J,r)$-junta for $J\subseteq[n]$ and $r\in\mathbb{N}$ if $f_{J|z}:\{0,1\}^{|J|}\to V$ is an $r$-junta for any $z\in\{0,1\}^{|\overline{J}|}$. Note that a $(J,r)$-junta with codomain $\{0,1\}$ can be computed by a $|J|$-partial depth-$r$ decision tree, see~\cite[Definition~3.4]{kumar2023tight}.

\section{Quantum memory architectures}\label{ssec:memory}

In this section, we formally define a model of a quantum computer with quantum access to memory. A simplified model of classical computers can be thought of as (i) a central processing unit (CPU), (ii) a Random Access Memory (RAM) that serves as a temporary storage medium for the CPU to quickly retrieve data, and (iii) auxiliary permanent storage mediums. A RAM constitutes a memory array, an address/input register, and a target/bus/output register. Data is accessed or modified via address lines. When the CPU requires access to the memory, it sends the value from the address register down the address lines, and, depending on the read or write signal, the content of a memory cell is either copied into the target register or stored from the target register into the memory cell. To define a model of a quantum computer with quantum access to memory, it will first be helpful to formally define the quantum processing unit ($\mathsf{QPU}$). 

\begin{definition}[Quantum Processing Unit]\label{def:qpu}
    A Quantum Processing Unit $(\mathsf{QPU})$ of size $m$ is defined as a tuple $(\mathtt{I}, \mathtt{W},\mathcal{G})$ consisting of
    \begin{enumerate}
        \item an $m_{\mathtt{I}}$-qubit Hilbert space called \emph{input register} $\mathtt{I}$;
        \item an $(m-m_{\mathtt{I}})$-qubit Hilbert space called \emph{workspace} $\mathtt{W}$;
        \item a constant-size universal gate set $\mathcal{G}\subset\mathcal{U}(\mathbb{C}^{4\times 4})$.
    \end{enumerate}
    The qubits in the workspace $\mathtt{W}$ are called ancillary qubits or simply ancillae. An input to the $\mathsf{QPU}$, or quantum circuit, is a tuple $(T,|\psi_{\mathtt{I}}\rangle,C_1,\dots,C_T)$ where $T\in\mathbb{N}$, $|\psi_{\mathtt{I}}\rangle\in\mathtt{I}$, and, for each $t\in\{1,\dots,T\}$, $C_t\in\mathcal{I}(\mathcal{G})$ is an instruction from a set $\mathcal{I}(\mathcal{G})$ of possible instructions. Starting from the state $|\psi_0\rangle := |\psi_\mathtt{I}\rangle|0\rangle_{\mathtt{W}}^{\otimes (m-m_\mathtt{I})}$, at each time step $t\in\{1,\dots, T\}$ we obtain the state $|\psi_t\rangle = C_t|\psi_{t-1}\rangle\in\mathtt{I}\otimes\mathtt{W}$. The instruction set $\mathcal{I}(\mathcal{G})\subset\mathcal{U}(\mathbb{C}^{2^m\times 2^m})$ consists of all $m$-qubit unitaries on $\mathtt{I}\otimes\mathtt{W}$ of the form
    \begin{align*}
        \prod_{i=1}^k (\mathsf{U}_i)_{\to I_i}
    \end{align*}
    for some $k\in\mathbb{N}$, $\mathsf{U}_1,\dots,\mathsf{U}_k\in\mathcal{G}$ and pair-wise disjoint non-repeating sequences $I_1,\dots,I_k\in[m]^{\leq 2}$ of at most $2$ elements. We say that $\sum_{i=1}^k |I_i|$ is the \emph{arity/size} of the corresponding instruction. We say that $T$ is the \emph{depth} of the input to the $\mathsf{QPU}$, while its \emph{size} is the sum of the arities/sizes of the instructions $C_1,\dots,C_T$. 
\end{definition}
The extension of this definition to incorporate a quantum memory device ($\mathsf{QMD}$) is then:
\begin{definition}[Quantum Processing Unit and Quantum Memory Device]\label{def:computational_model} 

We consider a model of computation comprising a $\mathsf{QPU}$ of size $\poly\log(n)$ and a Quantum Memory Device $(\mathsf{QMD})$ of $n$ memory registers, where each register is of $\ell$-qubit size (for $n$ a power of $2$). A $\mathsf{QPU}$ and a $\mathsf{QMD}$ are collectively defined by a tuple $(\mathtt{I}, \mathtt{W}, \mathtt{A}, \mathtt{T}, \mathtt{Aux}, \mathtt{M}, \mathcal{G}, \mathsf{V})$ consisting of
\begin{enumerate}
    \item two $(\operatorname{poly}\log{n})$-qubit Hilbert spaces called \emph{input register} $\mathtt{I}$ and \emph{workspace} $\mathtt{W}$ owned solely by the $\mathsf{QPU}$;
    \item a $(\log{n})$-qubit Hilbert space called \emph{address register} $\mathtt{A}$ shared by both $\mathsf{QPU}$ and $\mathsf{QMD}$;
    \item an $\ell$-qubit Hilbert space called \emph{target register} $\mathtt{T}$ shared by both $\mathsf{QPU}$ and $\mathsf{QMD}$;
    \item a $(\poly{n})$-qubit Hilbert space called \emph{auxiliary register} $\mathtt{Aux}$  owned solely by the $\mathsf{QMD}$;
    \item an $n\ell$-qubit Hilbert space called \emph{memory} $\mathtt{M}$ comprising $n$ registers $\mathtt{M}_0, \ldots, \mathtt{M}_{n-1}$, each containing $\ell$ qubits, owned solely by the $\mathsf{QMD}$;
    \item a constant-size universal gate set $\mathcal{G}\subset\mathcal{U}(\mathbb{C}^{4\times 4})$;
    \item a function $\mathsf{V} : [n] \to \mathcal{V}$, where $\mathcal{V}\subset \mathcal{U}(\mathbb{C}^{2^{2\ell}\times 2^{2\ell}})$ is a $O(1)$-size subset of $2\ell$-qubit gates.
\end{enumerate}
The qubits in $\mathtt{W}$, $\mathtt{A}$, $\mathtt{T}$, and $\mathtt{Aux}$ are called ancillary qubits or simply ancillae. An input to the $\mathsf{QPU}$ with a $\mathsf{QMD}$, or quantum circuit, is a tuple $(T,|\psi_\mathtt{I}\rangle,|\psi_{\mathtt{M}}\rangle,C_1,\dots,C_T)$ where $T\in\mathbb{N}$, $|\psi_{\mathtt{I}}\rangle\in\mathtt{I}$, $|\psi_{\mathtt{M}}\rangle\in\mathtt{M}$, and, for each $t\in\{1,\dots,T\}$, $C_t\in\mathcal{I}(\mathcal{G},\mathsf{V})$ is an instruction from a set $\mathcal{I}(\mathcal{G},\mathsf{V})$ of possible instructions. The instruction set $\mathcal{I}(\mathcal{G},\mathsf{V})$ is the set $\mathcal{I}(\mathcal{G})$ from {\rm\Cref{def:qpu}} of instructions on $\mathtt{I}\otimes\mathtt{W}\otimes\mathtt{A}\otimes\mathtt{T}$ augmented with the call-to-the-$\mathsf{QMD}$ instruction that implements the unitary
\begin{align*}
    |i\rangle_{\mathtt{A}}|b\rangle_{\mathtt{T}}|x_i\rangle_{\mathtt{M}_i}|0\rangle^{\otimes \poly{n}}_{\mathtt{Aux}} \mapsto  |i\rangle_{\mathtt{A}}\mathsf{V}(i)\big(|b\rangle_{\mathtt{T}}|x_i\rangle_{\mathtt{M}_i}\big)|0\rangle^{\otimes \poly{n}}_{\mathtt{Aux}}, \qquad \forall i\in[n],b,x_i\in\{0,1\}^\ell.
\end{align*}
Starting from $|\psi_0\rangle|0\rangle^{\otimes \poly{n}}_{\mathtt{Aux}}$, where $|\psi_0\rangle := |\psi_\mathtt{I}\rangle|0\rangle^{\otimes\poly\log{n}}_{\mathtt{W}}|0\rangle_{\mathtt{A}}^{\otimes \log{n}}|0\rangle_{\mathtt{T}}^{\otimes \ell}|\psi_\mathtt{M}\rangle$, at each time step $t\in\{1,\dots, T\}$ we obtain the state $|\psi_t\rangle|0\rangle^{\otimes \poly{n}}_{\mathtt{Aux}} = C_t(|\psi_{t-1}\rangle|0\rangle^{\otimes \poly{n}}_{\mathtt{Aux}})$, where $|\psi_t\rangle\in \mathtt{I}\otimes \mathtt{W}\otimes \mathtt{A}\otimes \mathtt{T}\otimes \mathtt{M}$. 
\end{definition}

We depict the architecture of a quantum processing unit with access to a quantum memory device in \Cref{fig:qvonneumann}. The address register $\mathtt{A}$ (shared by the $\mathsf{QPU}$ and $\mathsf{QMD}$) is used to select a unitary from $\mathcal{V}$
% set $\{\mathsf{V}_0,\dots,\mathsf{V}_{n-1}\}$
and apply it to the target and memory registers $\mathtt{T}$ and $\mathtt{M}$ with the help of the auxiliary register $\mathtt{Aux}$. Even though a call to the $\mathsf{QMD}$ might require gates from a universal gate set, we stress that the underlying quantum circuit implementing such a call is \emph{fixed}, i.e., does not change throughout the execution of a quantum algorithm by the $\mathsf{QPU}$, or even between different quantum algorithms. This allows for highly specialized circuits for the $\mathsf{QMD}$.

We did not include measurements in our models of computation, but these can easily be performed on the output state $|\psi_T\rangle$ if desired. We do not fix the position of qubits within our architecture and thus allow for long-range interactions, generally through multi-qubit entangling gates like Fan-Out and $\mathsf{GT}$ gates (see \Cref{sec:multi-qubit_gates}). Our model, nonetheless, could be augmented with algorithms for efficiently moving and addressing qubits within physically realistic devices. In this direction, we point the reader to Ref.~\cite{beals2013efficient}.

Note that our definition of circuit size in \Cref{def:qpu} differs slightly from the standard notion of circuit size (number of gates from $\mathcal{G}$) up to a factor of at most $2$. In this work, we focus on constant-depth circuits, and since the size of a constant-depth circuit is just a constant times the number of input qubits plus ancillary qubits, we shall specify only the input and the number of ancillae of a circuit, with its size thus being implicit. In the rest of the paper we assume that each memory cell has size $\ell=1$.

\begin{figure}[t]
    \centering
    \includegraphics[trim={1cm 15.7cm 12cm 0.7cm},clip,width=0.5\textwidth]{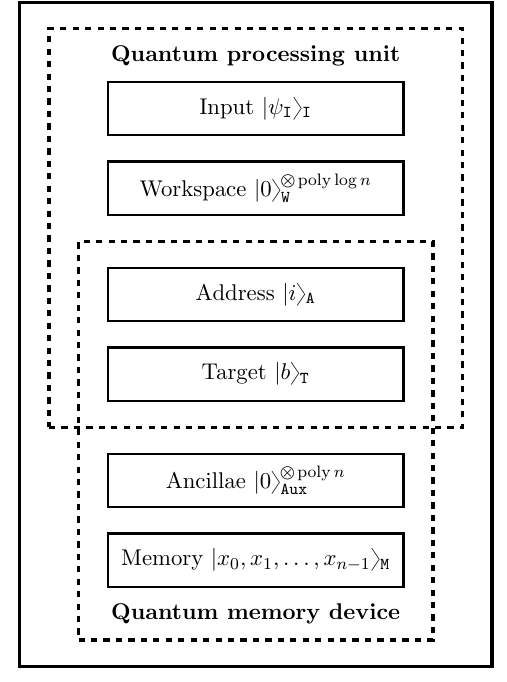}
    \caption{
    The architecture of a Quantum Processing Unit ($\mathsf{QPU}$) with access to a quantum memory device ($\mathsf{QMD}$). The $\mathsf{QPU}$ encompasses a ($\poly\log{n}$)-qubit input register $\mathtt{I}$ and workspace $\mathtt{W}$, a ($\log{n}$)-qubit address register $\mathtt{A}$, and an $\ell$-qubit target register $\mathtt{T}$, while the $\mathsf{QMD}$ encompasses the address register $\mathtt{A}$, the target register $\mathtt{T}$, an $n\ell$-qubit memory array $\mathtt{M}$ composed of $n$ cells $x_0,\dots,x_{n-1} \in \{0,1\}^\ell$ of $\ell$ qubits each, and a ($\poly{n}$)-qubit auxiliary register $\mathtt{Aux}$.}
    \label{fig:qvonneumann}
\end{figure}

A call to the $\mathsf{QMD}$ is defined by the function $\mathsf{V}$ and we shall often equate the quantum memory device with the unitary that it implements. In many applications, one is interested in some form of reading a specific entry from the memory, which corresponds to the special cases where the $\mathsf{V}(i)$ unitaries are made of controlled single-qubit gates, and to which the traditional $\QRAM$ belongs. 
\begin{definition}[$f$-$\mathsf{QRAM}$]\label{def:fqram}
    Let $n\in\mathbb{N}$ be a power of $2$ and $f:\{0,1\}^{\log{n}} \to \mathcal{U}(\mathbb{C}^{2\times 2})$. An $f$-\emph{quantum random access memory} $(f\text{-}\QRAM)$ of memory size $n$ is a $\mathsf{QMD}$ with $\mathsf{V}(i) = \mathsf{C}_{\mathtt{M}_i}$-$f(i)_{\to\mathtt{T}}$, $\forall i\in[n]$. Equivalently, it is a $\mathsf{QMD}$ that maps
    \begin{align*}
        |i\rangle_{\mathtt{A}}|b\rangle_{\mathtt{T}}|x_0,\dots,x_{n-1}\rangle_{\mathtt{M}} \mapsto |i\rangle_{\mathtt{A}}(f(i)^{x_i}|b\rangle_{\mathtt{T}}) |x_0,\dots,x_{n-1}\rangle_{\mathtt{M}} \quad\quad \forall &i\in[n], b,x_0,\dots,x_{n-1}\in\{0,1\}.
    \end{align*}
    A special case, normally called simply $\mathsf{QRAM}$, is when $f(i) = \mathsf{X}$ for all $i\in[n]$, i.e., $\mathsf{V}(i) = \mathsf{C}_{\mathtt{M}_i}$-$\mathsf{X}_{\to\mathtt{T}}$.
\end{definition}
Note that $f$-$\mathsf{QRAM}$s are $\mathsf{QMD}$s that can be implemented via an $\mathsf{UCG}$ (see comment at the end of \Cref{sec:ucg} and \Cref{fig:ucg_qmd_venn}). Another case of interest is writing content from the workspace into memory using $\mathsf{SWAP}$ gates. 
\begin{definition}[$\mathsf{QRAG}$]\label{def:qrag}
    Let $n\in\mathbb{N}$ be a power of $2$. A \emph{quantum random access gate} $\mathsf{QRAG}$ of memory size $n$ is a $\mathsf{QMD}$ with $\mathsf{V}(i) = \mathsf{SWAP}_{\mathtt{M}_i\leftrightarrow \mathtt{T}}$, $\forall i\in[n]$. Equivalently, it is a $\mathsf{QMD}$ that maps
    \begin{align*}
        |i\rangle_{\mathtt{A}}|b\rangle_{\mathtt{T}}|x_0,\dots,x_{n-1}\rangle_{\mathtt{M}} \mapsto |i\rangle_{\mathtt{A}}|x_i\rangle_{\mathtt{T}} |x_0,\dots,x_{i-1},b,x_{i+1},\dots,x_{n-1}\rangle_{\mathtt{M}} \quad \forall i\in[n], b,x_0,\dots,x_{n-1}\in\{0,1\}.
    \end{align*}
\end{definition}
The following lemma shows that a $\mathsf{QRAG}$ is at least as powerful as a $\mathsf{QRAM}$. 

\begin{lemma}[Simulating $\mathsf{QRAM}$ with $\mathsf{QRAG}$]\label{thm:qrag-gives-qram}
A query to a $\mathsf{QRAM}$ of memory size $n$ can be simulated using $2$ queries to a $\mathsf{QRAG}$ of memory size $n$, $3$ two-qubit gates, and $1$ workspace qubit.
\end{lemma}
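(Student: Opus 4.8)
The plan is to use a single workspace qubit $w$, initialized to $\ket{0}$, together with two $\QRAG$ queries, exploiting the fact that a $\QRAG$ query \emph{destructively} swaps the content of the addressed cell into the target register $\mathtt{T}$. The central difficulty is that a single $\QRAG$ query, unlike a $\QRAM$ query, overwrites the memory: after one query the value $x_i$ sits in $\mathtt{T}$ but the original content of cell $\mathtt{M}_i$ has been replaced by whatever was in $\mathtt{T}$. So the task is to orchestrate the queries and a few two-qubit gates so that (i) the combination $b\oplus x_i$ is produced, (ii) the memory is restored to $\ket{x_0,\dots,x_{n-1}}_{\mathtt{M}}$, and (iii) the scratch qubit $w$ returns to $\ket{0}$, leaving a net map that is exactly the reversible XOR $\ket{b}_{\mathtt{T}}\mapsto\ket{b\oplus x_i}_{\mathtt{T}}$ on a classical basis state, hence on all states by linearity.

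Concretely, I would perform the following five operations on $\ket{i}_{\mathtt{A}}\ket{b}_{\mathtt{T}}\ket{x_0,\dots,x_{n-1}}_{\mathtt{M}}\ket{0}_w$. First, $\mathsf{SWAP}$ between $\mathtt{T}$ and $w$, parking $b$ in $w$ and clearing $\mathtt{T}$ to $\ket{0}$. Second, a $\QRAG$ query with address $i$, which moves $x_i$ into $\mathtt{T}$ and writes the clean $0$ into cell $\mathtt{M}_i$. Third, a $\mathsf{CNOT}$ controlled on $\mathtt{T}$ (now holding $x_i$) targeting $w$, turning the stored $b$ into $b\oplus x_i$. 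Fourth, a second $\QRAG$ query with the same address $i$: since $\mathtt{T}$ still holds $x_i$ and $\mathtt{M}_i$ holds $0$, this swap restores $\mathtt{M}_i=x_i$ and resets $\mathtt{T}$ to $\ket{0}$. Fifth, a final $\mathsf{SWAP}$ between $\mathtt{T}$ and $w$, bringing $b\oplus x_i$ into $\mathtt{T}$ and returning $w$ to $\ket{0}$. This uses $2$ $\QRAG$ queries, $2$ $\mathsf{SWAP}$s and $1$ $\mathsf{CNOT}$ (three two-qubit gates), and one workspace qubit, matching the claimed counts.

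The verification is a short bookkeeping of the classical-basis amplitude through the five steps, tracking the triple $(\mathtt{T},\mathtt{M}_i,w)$: it evolves $(b,x_i,0)\to(0,x_i,b)\to(x_i,0,b)\to(x_i,0,b\oplus x_i)\to(0,x_i,b\oplus x_i)\to(b\oplus x_i,x_i,0)$, while the address register and all cells $\mathtt{M}_j$ with $j\neq i$ are untouched throughout. The main thing to get right — and the only place where anything can go wrong — is the interplay of the two $\QRAG$ queries: the first must inject a \emph{known} clean value ($0$) into the cell so that the second query can faithfully restore $x_i$ and simultaneously disentangle $\mathtt{T}$; this is exactly why the opening $\mathsf{SWAP}$ (to empty $\mathtt{T}$) and the placement of the $\mathsf{CNOT}$ on the scratch qubit $w$ rather than on $\mathtt{T}$ are essential. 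Once the classical computation checks out, linearity of all five unitaries extends the identity to arbitrary superpositions over $i$, $b$, and $x_0,\dots,x_{n-1}$, and since $w$ begins and ends in $\ket{0}$ it is a legitimate reusable workspace qubit, completing the simulation of a $\QRAM$ query.
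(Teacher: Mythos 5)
Your proposal is correct and is essentially identical to the paper's own proof: the paper also swaps $\mathtt{T}$ with a workspace qubit, performs a $\QRAG$ query to pull $x_i$ into $\mathtt{T}$ (leaving $0$ in the cell), applies a $\mathsf{CNOT}$ from $\mathtt{T}$ onto the workspace qubit, queries the $\QRAG$ again to restore the memory, and swaps back. The gate counts and the role of each step match exactly.
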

\begin{proof}
Start with the input $\ket{i}_{\mathtt{A}}\ket{0}_{\mathtt{Tmp}}\ket{b}_{\mathtt{T}}\ket{x_0, \dots, x_{n-1}}_{\mathtt{M}}$ by using an ancillary qubit $\mathtt{Tmp}$ from the workspace. Use a $\mathsf{SWAP}_{\mathtt{T}\leftrightarrow\mathtt{Tmp}}$ gate to obtain $\ket{i}_{\mathtt{A}}\ket{b}_{\mathtt{Tmp}}\ket{0}_{\mathtt{T}}\ket{x_0, \dots, 0, \dots, x_{n-1}}_{\mathtt{M}}$. A query to the $\mathsf{QRAG}$ then leads to  $\ket{i}_{\mathtt{A}}\ket{b}_{\mathtt{Tmp}}\ket{x_i}_{\mathtt{T}}\ket{x_0, \dots, 0, \dots, x_{n-1}}_{\mathtt{M}}$. Use a $\mathsf{C}_{\mathtt{T}}$-$\mathsf{X}_{\to\mathtt{Tmp}}$ gate from register $\mathtt{T}$ to register $\mathtt{Tmp}$, and query again the $\mathsf{QRAG}$, followed by a $\mathsf{SWAP}_{\mathtt{T}\leftrightarrow\mathtt{Tmp}}$ gate, to obtain the desired state $\ket{i}_{\mathtt{A}}\ket{b \oplus x_i}_{\mathtt{T}}\ket{x_0, \dots, x_{n-1}}_{\mathtt{M}}$ after discarding the ancillary qubit.
\end{proof}
On the other hand, in our model, the converse is not true. It is possible, though, to simulate a $\QRAG$ using a constant number of $\QRAM$ queries in a model where single-qubit gates are allowed to be freely applied to the memory register $\mathtt{M}$. The next lemma formalizes these results.
\begin{lemma}[Simulating $\mathsf{QRAG}$ with $\mathsf{QRAM}$]\:
\begin{itemize}
    \item In the model from {\rm \Cref{def:computational_model}}, a query to a $\QRAG$ cannot be simulated by any number of queries to a $\QRAM$.
    \item Suppose that single-qubit gates can be freely applied onto the memory register $\mathtt{M}$ of any $\QRAM$. Then a $\QRAG$ of memory size $n$ can be simulated using $3$ queries to a $\QRAM$ of memory size $n$ and $2(n+1)$ Hadamard gates.
\end{itemize}    
\end{lemma}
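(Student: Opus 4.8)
The plan is to treat the two parts separately, since they exploit opposite features of the $\QRAM$ operation: that it never writes to memory, versus that a single controlled bit-copy can be turned into a write.

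For the impossibility part, the key observation is that in the model of Definition~\ref{def:computational_model} the memory register $\mathtt{M}$ can only be touched through a call to the $\mathsf{QMD}$: every $\mathsf{QPU}$ instruction acts on $\mathtt{I}\otimes\mathtt{W}\otimes\mathtt{A}\otimes\mathtt{T}$ and leaves $\mathtt{M}$ (and $\mathtt{Aux}$) alone. When all $\mathsf{QMD}$ calls are $\QRAM$ calls, the map $|i\rangle_{\mathtt{A}}|b\rangle_{\mathtt{T}}|x\rangle_{\mathtt{M}}\mapsto|i\rangle_{\mathtt{A}}|b\oplus x_i\rangle_{\mathtt{T}}|x\rangle_{\mathtt{M}}$ never moves amplitude between distinct computational-basis contents of $\mathtt{M}$. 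I would formalize this by letting $\Pi_x := |x\rangle\langle x|_{\mathtt{M}}$ denote the projector onto memory content $x\in\{0,1\}^n$ and proving that every admissible operation — both $\mathsf{QPU}$ gates and $\QRAM$ queries — commutes with $\Pi_x$ for all $x$. Hence any circuit $C$ assembled from such operations commutes with every $\Pi_x$. Then, starting from a basis input $|\psi\rangle=|i\rangle_{\mathtt{A}}|b\rangle_{\mathtt{T}}|x\rangle_{\mathtt{M}}$ with $b\neq x_i$, the desired $\QRAG$ output has all its weight on the memory content $x'$ obtained from $x$ by overwriting coordinate $i$ with $b$, so that $x'\neq x$; but $\Pi_{x'}C|\psi\rangle = C\Pi_{x'}|\psi\rangle = 0$, so no such $C$ can reproduce the $\QRAG$ output. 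This argument is indifferent to the number of queries, the ancillae, or any intermediate entanglement.

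For the simulation part, I would use the standard three-$\mathsf{CNOT}$ decomposition of a swap between the addressed cell $\mathtt{M}_i$ and the target $\mathtt{T}$, where the two outer $\mathsf{CNOT}$s that read $x_i$ into $\mathtt{T}$ are exactly $\QRAM$ queries with $f=\mathsf{X}$. The middle $\mathsf{CNOT}$ from $\mathtt{T}$ into $\mathtt{M}_i$ writes to memory and is not directly a $\QRAM$ query, so I would realize it by reversing the direction of a query via the conjugation identity $\mathsf{CNOT}_{\mathtt{T}\to\mathtt{M}_i}=(\mathsf{H}_{\mathtt{T}}\otimes \mathsf{H}_{\mathtt{M}_i})\,\mathsf{CNOT}_{\mathtt{M}_i\to\mathtt{T}}\,(\mathsf{H}_{\mathtt{T}}\otimes \mathsf{H}_{\mathtt{M}_i})$. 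Since the address is in superposition I cannot apply $\mathsf{H}$ selectively to the addressed cell, so I apply $\mathsf{H}$ to $\mathtt{T}$ and to \emph{all} $n$ memory cells on both sides of the middle query; on each address branch $|i\rangle_{\mathtt{A}}$ the Hadamards on the $n-1$ unaddressed cells cancel, because the query acts as the identity there, leaving precisely the conjugated $\mathsf{CNOT}_{\mathtt{T}\to\mathtt{M}_i}$. Tracking the two bits through the three steps gives $(\mathtt{T},\mathtt{M}_i)=(b,x_i)\to(b\oplus x_i,x_i)\to(b\oplus x_i,b)\to(x_i,b)$, which is the swap, and everything extends to superpositions by linearity. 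The cost is $3$ $\QRAM$ queries and $2(n+1)$ Hadamards (on $\mathtt{T}$ and the $n$ cells, applied twice), using only freely-applied single-qubit gates on $\mathtt{M}$.

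I expect the main obstacle to be making the impossibility argument airtight: one must justify the commutation with $\Pi_x$ for \emph{every} admissible operation, including $\mathsf{QMD}$ calls that entangle $\mathtt{A}$, $\mathtt{T}$, and $\mathtt{Aux}$ with $\mathtt{M}$ (which is exactly why the projector is placed on $\mathtt{M}$ alone), and confirm that the reset auxiliary register $\mathtt{Aux}$ offers no loophole since it is owned solely by the $\mathsf{QMD}$ and returns to $|0\rangle$ after each call. The simulation part is routine once the Hadamard-conjugation trick is identified; the only point needing care there is verifying that the Hadamards on unaddressed cells cancel branch-by-branch, which follows from the controlled structure $\sum_i|i\rangle\langle i|_{\mathtt{A}}\otimes\mathsf{CNOT}_{\mathtt{M}_i\to\mathtt{T}}$ of the $\QRAM$ query.
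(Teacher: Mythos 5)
Your proposal is correct and takes essentially the same approach as the paper for both parts: the impossibility rests on the observation that $\QRAM$ queries and $\mathsf{QPU}$ gates never alter the memory contents (your projector-commutation argument is simply a more explicit formalization of this, covering arbitrarily many queries, where the paper argues the single-cell case informally), and the simulation is the paper's exact construction — the three-$\mathsf{CNOT}$ decomposition of $\mathsf{SWAP}$ with the middle $\mathsf{CNOT}$ reversed by Hadamard conjugation on $\mathtt{T}$ and all $n$ memory cells, giving $3$ queries and $2(n+1)$ Hadamards.
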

\begin{proof}
    For the first statement, consider the simplest case of trying to implement a $\QRAG$ with zero address qubits (i.e., there is only one memory cell): we are given memory qubit $\mathtt{M}$, target qubit $\mathtt{T}$, and an arbitrary number of workspace qubits $\mathtt{W}$. A single action of the $\QRAM$ followed by an arbitrary unitary $\mathsf{U}$ acting on the target and workspace maps $\ket{x_0}_{\mathtt{M}}\ket{b}_{\mathtt{T}}\ket{\psi}_{\mathtt{W}} \mapsto \ket{x_0}_{\mathtt{M}} \mathsf{U}(\ket{b\oplus x_0}_{\mathtt{T}}\ket{\psi}_{\mathtt{W}}) = \ket{x_0}_{\mathtt{M}}\ket{\Phi}_{\mathtt{TW}}$ and thus leaves the memory register invariant. As we cannot modify the memory register, it is not possible to swap the state of the memory with the contents of the target.

    The second statement follows from the simple fact that three $\mathsf{CNOT}$s can implement a $\mathsf{SWAP}$, i.e., $\mathsf{SWAP}_{\mathtt{B}\leftrightarrow \mathtt{D}} = \mathsf{C}_{\mathtt{B}}\text{-}\mathsf{X}_{\to \mathtt{D}}\cdot \mathsf{C}_{\mathtt{D}}\text{-}\mathsf{X}_{\to \mathtt{B}}\cdot \mathsf{C}_{\mathtt{B}}\text{-}\mathsf{X}_{\to \mathtt{D}}$, and that one can swap control and target registers of a $\mathsf{CNOT}$ as $(\mathsf{H}_{\to \mathtt{B}}\cdot\mathsf{H}_{\to \mathtt{D}})\mathsf{C}_{\mathtt{B}}\text{-}\mathsf{X}_{\to \mathtt{D}}(\mathsf{H}_{\to \mathtt{B}}\cdot\mathsf{H}_{\to \mathtt{D}}) = \mathsf{C}_{\mathtt{D}}\text{-}\mathsf{X}_{\to \mathtt{B}}$, for registers $\mathtt{B},\mathtt{D}$. Then, starting from the input $|i\rangle_{\mathtt{A}}|b\rangle_{\mathtt{T}}|x_0,\dots,x_{n-1}\rangle_{\mathtt{M}}$, apply a $\QRAM$ followed by the $n+1$ Hadamard gates $\mathsf{H}_{\to\mathtt{T}}\cdot\prod_{j\in[n]}\mathsf{H}_{\to\mathtt{M}_j}$, and then another $\QRAM$ query followed by $\mathsf{H}_{\to\mathtt{T}}\cdot\prod_{j\in[n]}\mathsf{H}_{\to\mathtt{M}_j}$, and a final $\QRAM$ query.
\end{proof}

Our computational model can be seen as a refined version of the one described in~\cite{buhrman2022memory}. Similar to our \Cref{def:computational_model}, the authors divide the qubits of a quantum computer into work and memory qubits. Given $M$ memory qubits, their workspace consists of $O(\log M)$ qubits, of which the address and target qubits are always the first $\lceil\log M\rceil + 1$ qubits. However, address and target qubits are not considered to be shared by the $\mathsf{QMD}$, and there is no mention of ancillary qubits mediating a call to the $\mathsf{QMD}$. The inner structure of the $\mathsf{QMD}$ is abstracted away by assuming access to the unitary of a $\mathsf{QRAG}$ as in \Cref{def:qrag}. Our model, in contrast, ``opens'' the quantum memory device, and allows for general fixed unitaries, including $\mathsf{QRAM}$ and $\mathsf{QRAG}$. 

The first efficient architectures for $\QRAM$ were formalized and proposed in~\cite{giovannetti2008architectures,giovannetti2008quantum}, namely the Fan-Out and bucket-brigade architectures. These architectures can readily be used for $\mathsf{QRAG}$s, with a simple modification: replacing the last layer of $\mathsf{CNOT}$ gates with $\mathsf{SWAP}$ gates. Both schemes access the memory cells through a binary tree of size $O(n)$ and depth $\log{n}$. Each qubit of the address register $|i\rangle_{\mathtt{A}}$ specifies the direction to follow from the root to the correct memory cell, i.e., the $k$-th qubit of the address register tells whether to go left or right at a router (or bifurcation) on the $k$-th level of the binary tree. The target qubit is sent down the binary tree to the memory cell corresponding to the address register, and the information in the memory cell is copied ($\mathsf{QRAM}$) or swapped ($\mathsf{QRAG}$), and the target qubit is then sent back up the tree to the~root. 

The Fan-Out and bucket-brigade architectures differ in how the target qubit is routed down the binary tree. In the Fan-Out architecture, the $k$-th address qubit controls all the $2^k$ routers on the $k$-th level via a Fan-Out gate. The drawback of this scheme is that it requires simultaneous control of all $n-1$ routers, even though only $\log{n}$ routers (in each branch of the wavefunction) are necessary to route the target down the tree. This in turn makes the Fan-Out architecture highly susceptible to noise since each router is maximally entangled with the rest of the system. In the bucket-brigade architecture, on the other hand, all routers are initially in an ``idle'' state. Each address qubit is sequentially sent down the binary tree and its state is transferred to the first idle router it encounters. This creates a path for the following address qubits to the next idle router and, after all address qubits have been routed down the tree, a path for the target qubits to the correct memory cells. One main advantage of the bucket-brigade architecture is reducing the number of active routers down to $\log{n}$ in each component of the superposition. Another advantage is its high resilience to noise due to limited entanglement between the memory components~\cite{giovannetti2008architectures,giovannetti2008quantum,arunachalam2015robustness,hann2021resilience}.

Several other architectures for $\mathsf{QRAM}$ have been proposed, including Flip-Flop $\mathsf{QRAM}$~\cite{park2019circuit}, Entangling Quantum Generative Adversarial Network $\QRAM$~\cite{niu2022entangling}, approximate Parametric-Quantum-Circuit-based $\QRAM$~\cite{phalak2022approximate}, and others~\cite{chen2021scalable,zoufal2019quantum,niu2022entangling,agliardi2022optimized}. Roughly speaking, one can classify the proposals for $\QRAM$ with classical memory in two ways~\cite{di2020fault}. One way is to explicitly lay out the classical memory in physical hardware at the end of the quantum circuit implementing a $\QRAM$, e.g., at the end of the ancillary binary tree in the Fan-Out and bucket-brigade architectures, and then be copied via a $\mathsf{CNOT}$ gate. The advantage of such ``explicit'' $\QRAM$s is that their underlying circuits must be optimized and compiled just once, while the contents of the memory array can be modified freely. The other way is to encode the memory implicitly in the quantum circuit. This can be achieved by employing multicontrolled $\mathsf{CNOT}$ gates controlled by bits representing the memory address containing a $1$. The advantage of such ``implicit'' $\QRAM$s is that in some cases they can be heavily optimized using techniques from Boolean circuits~\cite{mishchenko2001fast,shafaei2013reversible}. 
Another way to distinguish between $\mathsf{QMD}$s is in the way the routing operation, i.e., the memory cell selection, is implemented: passively or actively. For example, the architecture in~\cite{chen2021scalable} is passive: when the routers (the ancillary qubits of the device) are configured, a photon gets absorbed into a cavity, and then subsequent incoming photons acquire a phase shift depending on the state of the cavity. 
Active architectures~\cite{hann2021resilience}, on the other hand, are similar to a traditional gate-based quantum computer, where each $\mathsf{SWAP}$ or controlled-$\mathsf{SWAP}$ gate is executed by some control pulse.
We point the reader to a few recent surveys on the state of the art of $\mathsf{QRAM}$s for more information~\cite{hann2021practicality,phalak2023quantum,jaques2023qram}.

\subsection{Uniformly controlled gates}\label{sec:ucg}

An $f$-Uniformly Controlled Gate ($f$-$\mathsf{UCG}$ or simply $\mathsf{UCG}$) is a unitary that, conditioned on the state of a set of control qubits, implements one of a set of single-qubit gates on a target qubit.
\begin{definition}[$f$-Uniformly Controlled Gate]\label{def:ucg}
    Let $m,n\in\mathbb{N}$, $n< m$. 
    Let $i \in [m]$.
    Consider a function $f:\{0,1\}^n\to\mathcal{U}(\mathbb{C}^{2\times 2})$, and let $S \in ([m]\setminus\{i\})^n$ be a sequence of $n$ \emph{non-repeating} elements from $[m]\setminus\{i\}$. The Uniformly Controlled Gate $f\text{-}\mathsf{UCG}_{S\to i}^{(n)}$ of arity\footnote{Even though the gate depends on $n+1$ qubits, we define its arity according to $f$.} $n$ is defined as
    \begin{align*}
        \ket{x_0}\ket{x_1}\dots\ket{x_{m-1}} \mapsto \ket{x_0}\dots\ket{x_{i-1}}(f(x_S)\ket{x_i})\ket{x_{i+1}}\dots\ket{x_{m-1}}, \quad \forall x_0,\dots,x_{m-1}\in\{0,1\},
    \end{align*}
    where $x_S=x_{S_1}\dots x_{S_n}$. When it is clear from context, we shall omit either the superscript $(n)$, or the subscripts corresponding to the target $i$ and/or control $S$ from $f\text{-}\mathsf{UCG}^{(n)}_{S\to i}$. By $f\text{-}\mathsf{UCG}$ we mean a generic $f\text{-}\mathsf{UCG}_{S \to i}^{(n)}$ for some $n,S,i$.
\end{definition}

An $f$-$\mathsf{UCG}$ is normally defined in the literature by listing a set $\{\mathsf{U}_0,\dots,\mathsf{U}_{2^n-1}\}$ of single-qubit gates (corresponding to $f(0^n),\dots,f(1^n)$), and writing
\begin{align*}
    f\text{-}\mathsf{UCG}_{[n] \to n}^{(n)} = \sum_{x\in\{0,1\}^n}|x\rangle\langle x|\otimes f(x) = \sum_{x\in\{0,1\}^n}|x\rangle\langle x|\otimes \mathsf{U}_x,
\end{align*}
where we ignored the qubits on which $f\text{-}\mathsf{UCG}_{S\to i}^{(n)}$ does not depend (so $m=n+1$) and took the target qubit $i$ to be the last one. Equivalently, its matrix representation is
\begin{align*}
    f\text{-}\mathsf{UCG}_{[n]\to n}^{(n)} = \begin{pmatrix}
        \mathsf{U}_0 & & &  \\
        & \mathsf{U}_1 & &  \\
        & & \ddots &  \\
        & & & \mathsf{U}_{2^n - 1}
    \end{pmatrix} \in \mathbb{C}^{2^{(n+1)}\times 2^{(n+1)}}.
\end{align*}
A possible way to implement $f\text{-}\mathsf{UCG}_{[n]\to n}^{(n)}$ is shown in \Cref{fig:UCGa}, where each gate $f(x)=\mathsf{U}_x$ is sequentially performed controlled on the state $|x\rangle$. It is possible to simplify such sequential implementation by using Gray codes, see~\cite{barenco1995elementary,bullock2004asymptotically,STY-asymptotically}.

Well-known examples of $f$-$\mathsf{UCG}$s can be found in~\cite{grover2002creating,kerenidis2017quantum,montanaro2015quantum,harrow2009quantum}. These algorithms perform a set of controlled rotations $\ket{x}\ket{0}\mapsto \ket{x}(\cos\theta(x)\ket{0} + \sin\theta(x)\ket{1})$ on a single qubit for a function $\theta:\{0,1\}^n \to [0, 2\pi]$.
Another example is the special subclass of $f$-$\mathsf{UCG}$s known as Fan-In gates ($f$-$\mathsf{FIN}$), for which $f:\{0,1\}^n\to\{\mathbb{I}_1,\mathsf{X}\}$, i.e., the $\mathsf{Z}$-decomposition of $f$ is simply $f(x) = \mathsf{H}\mathsf{Z}(\gamma(x))\mathsf{H} = \mathsf{X}^{\gamma(x)}$ for $\gamma:\{0,1\}^n\to \{0,1\}$. Fan-In gates are thus equivalent to gates for which a Boolean function is computed on a subset of the registers and the result is added to a specified register $|x_i\rangle$. Other $f$-$\mathsf{UCG}$s include phase oracles for which $f:\{0,1\}^n\to\{\mathbb{I}_1,\mathsf{Z}\}$.

\begin{figure}
    \centering
    \begin{subfigure}[b]{0.45\textwidth}
        \includegraphics[trim={1cm 19.4cm 10cm 0.7cm},clip,width=\textwidth]{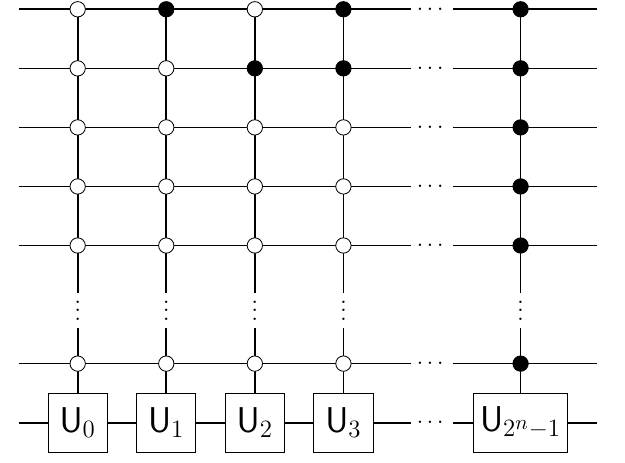}
        \caption{}
        \label{fig:UCGa}
    \end{subfigure}
    \qquad
    \begin{subfigure}[b]{0.395\textwidth}
        \includegraphics[trim={1cm 19.4cm 11.3cm 0.6cm},clip,width=\textwidth]{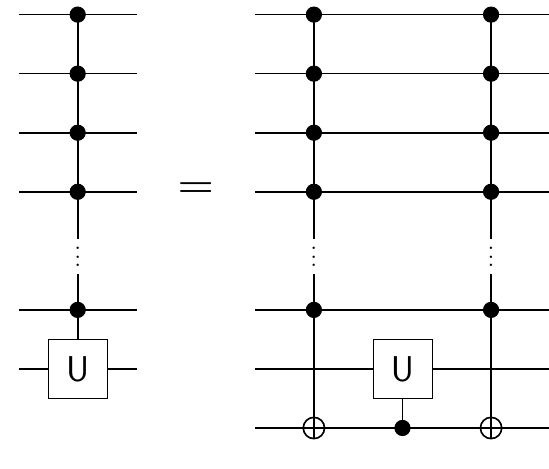}
        \caption{}
        \label{fig:UCGb}
    \end{subfigure}
    \caption{(a) A sequential implementation of $f\text{-}\mathsf{UCG}_{[n]\to n}^{(n)}$.
    (b) The gate $|x\rangle\langle x|\otimes \mathsf{U} + \sum_{j\in\{0,1\}^n\setminus\{x\}}|j\rangle\langle j|\otimes \mathbb{I}_1$ can be implemented by employing two $\mathsf{AND}$ gates onto an ancillary qubit and controlling $\mathsf{U}\in\mathcal{U}(\mathbb{C}^{2\times 2})$ on it being in the $|1\rangle$ state~\cite{barenco1995elementary}.}
    \label{fig:UCG}
\end{figure}

\begin{definition}[$f$-Fan-In gate]\label{def:ffingate}
    Let $m,n\in\mathbb{N}$, $n<m$. Let $i\in[m]$. Consider a Boolean function $f:\{0,1\}^{n}\to\{0,1\}$ on $n$ bits, and let $S\in([m]\setminus\{i\})^n$ be a sequence of $n$ \emph{non-repeating} elements from $[m]\setminus\{i\}$. The Fan-In gate $f\text{-}\mathsf{FIN}^{(n)}_{S\to i}$ of arity $n$ is defined as
    \begin{align*}
        \ket{x_0}\ket{x_1}\dots\ket{x_{m-1}} \mapsto \ket{x_0}\dots\ket{x_{i-1}}\ket{x_i\oplus f(x_S)}\ket{x_{i+1}}\dots\ket{x_{m-1}}, \quad \forall x_0,\dots,x_{m-1}\in\{0,1\},
    \end{align*}
    where $x_S = x_{S_1}\dots x_{S_n}$. When it is clear from context, we shall omit either the superscript $(n)$, or the subscripts corresponding to the target $i$ and/or control $S$ from $f\text{-}\mathsf{FIN}^{(n)}_{S\to i}$. By $f\text{-}\mathsf{FIN}$ we mean a generic $f\text{-}\mathsf{FIN}_{S \to i}^{(n)}$ for some $n,S,i$.
\end{definition}

\noindent Examples of Boolean functions $f:\{0,1\}^n\to\{0,1\}$ include
\begin{align*}
    f(x) &= 1 \text{ if and only if } \begin{cases}
    |x|>0 &\qquad \mathsf{OR}^{(n)},\\
    |x|=n &\qquad \mathsf{AND}^{(n)} \text{ (generalized Toffoli)},\\
    |x|\ge n/2 &\qquad \mathsf{MAJORITY}^{(n)}, \\
    |x|\ge t &\qquad \mathsf{THRESHOLD}^{(n)}[t], \\
    |x| = t &\qquad \mathsf{EXACT}^{(n)}[t],\\
    |x| ~(\operatorname{mod}~q) = 0 &\qquad \mathsf{MOD}^{(n)}[q],\\
    |x|~\text{is odd} &\qquad \mathsf{PARITY}^{(n)}.
    \end{cases}
\end{align*}
Another example of $f$-$\mathsf{FIN}$ is the $\mathsf{QRAM}$ itself. Indeed, $\mathsf{QRAM}$ is simply the $f$-$\mathsf{FIN}$ with $f:\{0,1\}^n\times \{0,1\}^{\log{n}} \to \{0,1\}$ defined by $f(x,i) = x_i$ (also known as selection function).

The following simple fact is behind our constructions based on one-hot encoding in \Cref{sec:constructions_onehot}.

\begin{fact}
    Given $x\in\{0,1\}^n$ and $\mathsf{U}\in\mathcal{U}(\mathbb{C}^{2\times 2})$, the gate $|x\rangle\langle x|\otimes \mathsf{U} + \sum_{j\in\{0,1\}^n\setminus\{x\}}|j\rangle\langle j|\otimes \mathbb{I}_1$ can be implemented using two $\mathsf{AND}^{(n)}$ gates and one ancillary qubit.
\end{fact}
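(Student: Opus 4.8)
The plan is to use the standard compute--apply--uncompute pattern: write onto a single ancilla a flag bit that is $1$ exactly when the control register holds $|x\rangle$, apply $\mathsf{U}$ to the target controlled on this flag, and then uncompute the flag. Since $\mathsf{AND}^{(n)}$ as defined detects only the all-ones string $1^n$, the first task is to reduce detection of $|x\rangle$ to detection of $|1^n\rangle$.

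Concretely, let $D := \{k\in[n] : x_k = 0\}$ be the set of coordinates on which $x$ is zero. I would first apply a layer of single-qubit $\mathsf{X}$ gates to the control qubits indexed by $D$; this sends the basis state $|x\rangle$ to $|1^n\rangle$ and every $|j\rangle$ with $j\neq x$ to a string of Hamming weight strictly less than $n$. Starting from an ancilla in state $|0\rangle$, one $\mathsf{AND}^{(n)}$ gate---with the $n$ control qubits as input and the ancilla as target---then writes into the ancilla the bit $\bigwedge_{k\in[n]}(\text{input}_k \oplus \overline{x}_k)$, which by the one-hot-encoding identity $e(x)_j = \bigwedge_{k\in[n]}(x\oplus\overline{j})_k$ from the preliminaries equals $1$ if and only if the original control register holds $x$. (Equivalently, the $\mathsf{X}$-conjugation simply realizes an $\mathsf{AND}^{(n)}$ with negated controls on the coordinates in $D$, matching the white-circle convention of Figure~\ref{fig:UCG}(b).)

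Next I apply the two-qubit controlled gate $|0\rangle\langle 0|\otimes\mathbb{I}_1 + |1\rangle\langle 1|\otimes\mathsf{U}$ between the ancilla and the target, so that $\mathsf{U}$ fires precisely on the $|x\rangle$ branch. A second $\mathsf{AND}^{(n)}$ gate then resets the ancilla to $|0\rangle$, since the flag toggle is its own inverse ($b\oplus a\oplus a = b$), and finally I undo the $\mathsf{X}$ layer on $D$ to restore the control register. This uses exactly two $\mathsf{AND}^{(n)}$ gates and one ancilla; the $\mathsf{X}$ gates and the single controlled-$\mathsf{U}$ are single- and two-qubit gates from the universal set $\mathcal{G}$ and are not counted.

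Correctness follows by checking the two cases in the computational basis and extending by linearity: on $|x\rangle$ the flag is toggled to $1$, $\mathsf{U}$ is applied, and the flag returns to $0$, giving $|x\rangle\otimes\mathsf{U}|b\rangle$; on any $|j\rangle$ with $j\neq x$ the flag stays $0$ throughout, leaving the target untouched. There is essentially no obstacle here---the only point requiring care is verifying that the ancilla is genuinely returned to $|0\rangle$, so that it factors out of the final state and remains unentangled with the rest of the computation, which is exactly what the second $\mathsf{AND}^{(n)}$ guarantees.
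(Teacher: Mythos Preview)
Your proof is correct and follows essentially the same approach as the paper: apply $\mathsf{X}$ gates to the control bits where $x_k=0$, use $\mathsf{AND}^{(n)}$ to set the ancilla, apply $\mathsf{C}_{\mathtt{Tmp}}$-$\mathsf{U}_{\to\mathtt{T}}$, and uncompute with a second $\mathsf{AND}^{(n)}$ followed by the $\mathsf{X}$ layer. The paper simply states the corresponding operator identity in one line and refers to Figure~\ref{fig:UCG}(b), whereas you spell out the compute--apply--uncompute pattern and verify both basis cases explicitly.
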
 
\begin{proof}
    Given $n$-qubit register $|k\rangle_{\mathtt{I}} = \bigotimes_{j\in[n]} |k_j\rangle_{\mathtt{I}_j}$ and single-qubit register $|b\rangle_{\mathtt{T}}$, simply note that
    \begin{align*}
    \begin{multlined}[b][\textwidth]
        \left(|x\rangle\langle x|\otimes \mathsf{U} + \sum_{j\in\{0,1\}^n\setminus\{x\}}|j\rangle\langle j|\otimes \mathbb{I}_1\right)\otimes \mathbb{I}_1|k\rangle_{\mathtt{I}}|b\rangle_{\mathtt{T}}|0\rangle_{\mathtt{Tmp}} \\
        = \left(\prod_{j\in[n]}\mathsf{X}^{\overline{x}_j}_{\to\mathtt{I}_j}\right)\mathsf{AND}^{(n)}_{\mathtt{I}\to\mathtt{Tmp}}\cdot\mathsf{C}_{\mathtt{Tmp}}\text{-}\mathsf{U}_{\to\mathtt{T}}\cdot\mathsf{AND}^{(n)}_{\mathtt{I}\to\mathtt{Tmp}}\left(\prod_{j\in[n]}\mathsf{X}^{\overline{x}_j}_{\to\mathtt{I}_j}\right)|k\rangle_{\mathtt{I}}|b\rangle_{\mathtt{T}}|0\rangle_{\mathtt{Tmp}}
    \end{multlined}
    \end{align*}
    for all $k\in\{0,1\}^n$ and $b\in\{0,1\}$ (see \Cref{fig:UCGb}).
\end{proof}

\paragraph{Relation between $\mathsf{QMD}$ and $f$-$\mathsf{UCG}$.}

Uniformly Controlled Gates and quantum memory devices are similar but distinct concepts. Since $\mathcal{V}\subset\mathcal{U}(\mathbb{C}^{4\times 4})$, i.e., $\mathsf{V}(i)$ can act non-trivially on two qubits for all $i\in\{0,1\}^{\log{n}}$ (registers $\mathtt{T}$ and $\mathtt{M}_i$), it is clear that $f$-$\mathsf{UCG}$s cannot simulate general $\mathsf{QMD}$s. However, if, for all $i\in\{0,1\}^{\log{n}}$, $\mathsf{V}(i)$ is of the form $f(i)\otimes \mathbb{I}_1$ for some  $f:\{0,1\}^{\log{n}}\to\mathcal{U}(\mathbb{C}^{2\times 2})$, then such $\mathsf{QMD}$ is simply the $f$-$\mathsf{UCG}^{(\log{n})}$. Similarly, an $f$-$\QRAM$ for $f:\{0,1\}^{\log{n}}\to\mathcal{U}(\mathbb{C}^{2\times 2})$ (which is a $\mathsf{QMD}$ such that $\mathsf{V}(i)=\mathbb{I}_1\otimes |0\rangle\langle 0|_{\mathtt{M}_i} + f(i)\otimes |1\rangle\langle 1|_{\mathtt{M}_i}$) is an $f'$-$\mathsf{UCG}^{(n+\log{n})}$ for some $f':\{0,1\}^{n}\times\{0,1\}^{\log{n}}\to\mathcal{U}(\mathbb{C}^{2\times 2})$ that is a $(J=[n],1)$-junta.

In the other direction, the requirement that $\mathcal{V}$ be a constant-size set limits the kind of $f$-$\mathsf{UCG}$ that can be simulated by a $\mathsf{QMD}$ to those where $f$ has a constant range. Relaxing the restriction on the size of $\mathcal{V}$, an $f$-$\mathsf{UCG}^{(1+\log{n})}$ can be simulated by a $\mathsf{QMD}$ of memory size $n$.

\section{Multi-qubit gates as building blocks}
\label{sec:multi-qubit_gates}
\subsection{The Fan-Out gate}
\label{sec:fanout}

The Fan-Out gate copies a specific register $|x_i\rangle$ into a subset of other registers. It can be thought of as a single-control multiple-target $\mathsf{CNOT}$ gate.
\begin{definition}[Fan-Out gate]\label{def:fanout}
    Let $m\in\mathbb{N}$. Let $i\in[m]$ and $S\subseteq[m]\setminus\{i\}$, with $|S\cup\{i\}|=:n$. The Fan-Out gate $\mathsf{FO}_{i\to S}^{(n)}$ of arity $n$ is defined as
    \begin{align*}
        |x_0\rangle|x_1\rangle\dots|x_{m-1}\rangle \mapsto \bigotimes_{j\in[m]} \begin{cases}
            |x_j\oplus x_i\rangle ~&\text{if}~j\in S,\\
            |x_j\rangle ~&\text{if}~j\notin S,
        \end{cases}
        \quad\quad\quad \forall x_0,\dots,x_{m-1}\in\{0,1\},
    \end{align*}
    which copies the bit $x_i$ into the registers in $S$. Similarly to $f$-$\mathsf{UCG}$, we shall sometimes omit either the superscript $(n)$, or the subscripts corresponding to the control $i$ and/or target $S$ from $\mathsf{FO}^{(n)}_{i\to S}$.
\end{definition}

The Fan-Out gate is known to be powerful, in that other multi-qubit gates can be efficiently implemented if one has access to Fan-Out. In particular, we have the following fact.
\begin{fact}[\cite{moore1999quantum,green2001counting}]\label{fact:fanoutparity}
    The Fan-Out gate is equivalent to the $\mathsf{PARITY}$ gate up to a Hadamard conjugation, i.e., for $i\in[n]$ and $S\subseteq [n]\setminus\{i\}$,
    \begin{align*}
        \mathsf{PARITY}_{S\to i} = \left(\prod_{j\in S\cup\{i\}} \mathsf{H}_{\to j} \right)\mathsf{FO}_{i\to S}\left(\prod_{j\in S\cup\{i\}} \mathsf{H}_{\to j} \right).   
\end{align*}
\end{fact}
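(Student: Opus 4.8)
The plan is to reduce the statement to the single-pair Hadamard conjugation identity $(\mathsf{H}_{\to a}\mathsf{H}_{\to b})\,\mathsf{C}_a\text{-}\mathsf{X}_{\to b}\,(\mathsf{H}_{\to a}\mathsf{H}_{\to b}) = \mathsf{C}_b\text{-}\mathsf{X}_{\to a}$ already recorded in the proof of the $\mathsf{QRAG}$-simulation lemma, which swaps the control and target of a $\mathsf{CNOT}$. First I would write the Fan-Out as a product of $\mathsf{CNOT}$s sharing the control qubit $i$, namely $\mathsf{FO}_{i\to S} = \prod_{k\in S}\mathsf{C}_i\text{-}\mathsf{X}_{\to k}$ (these commute, since they share the control), and recall that $\mathsf{PARITY}_{S\to i}$ is the gate sending $|x\rangle$ to the same string with $x_i$ replaced by $x_i\oplus\bigoplus_{k\in S}x_k$, i.e.\ $\mathsf{PARITY}_{S\to i} = \prod_{k\in S}\mathsf{C}_k\text{-}\mathsf{X}_{\to i}$. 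Thus the goal becomes to show that conjugating $\prod_{k\in S}\mathsf{C}_i\text{-}\mathsf{X}_{\to k}$ by $\prod_{j\in S\cup\{i\}}\mathsf{H}_{\to j}$ flips every $\mathsf{CNOT}$, turning the common control $i$ into a target and each target $k\in S$ into a control.

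The core of the argument is a careful redistribution of the Hadamards. Writing $\mathsf{H}_S:=\prod_{j\in S\cup\{i\}}\mathsf{H}_{\to j}=\mathsf{H}_{\to i}\prod_{j\in S}\mathsf{H}_{\to j}$, I would first push the target Hadamards inward: since the $\mathsf{H}_{\to j}$ with $j\in S$ commute with one another and with every $\mathsf{C}_i\text{-}\mathsf{X}_{\to k}$ for $j\ne k$ (disjoint supports), an induction on $|S|$ yields $\big(\prod_{j\in S}\mathsf{H}_{\to j}\big)\big(\prod_{k\in S}\mathsf{C}_i\text{-}\mathsf{X}_{\to k}\big)\big(\prod_{j\in S}\mathsf{H}_{\to j}\big) = \prod_{k\in S}\big(\mathsf{H}_{\to k}\,\mathsf{C}_i\text{-}\mathsf{X}_{\to k}\,\mathsf{H}_{\to k}\big)$. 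Next I would distribute the two copies of $\mathsf{H}_{\to i}$ across this product by inserting $\mathsf{H}_{\to i}^2=\mathbb{I}$ between consecutive factors (a telescoping step), obtaining $\mathsf{H}_S\,\mathsf{FO}_{i\to S}\,\mathsf{H}_S = \prod_{k\in S}\big((\mathsf{H}_{\to i}\mathsf{H}_{\to k})\,\mathsf{C}_i\text{-}\mathsf{X}_{\to k}\,(\mathsf{H}_{\to i}\mathsf{H}_{\to k})\big)$. Applying the conjugation identity to each factor (using $\mathsf{H}_{\to i}\mathsf{H}_{\to k}=\mathsf{H}_{\to k}\mathsf{H}_{\to i}$) collapses each bracket to $\mathsf{C}_k\text{-}\mathsf{X}_{\to i}$, so the whole expression equals $\prod_{k\in S}\mathsf{C}_k\text{-}\mathsf{X}_{\to i} = \mathsf{PARITY}_{S\to i}$.

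I expect the main obstacle to be bookkeeping rather than anything conceptual: the Hadamard on the shared control qubit $i$ appears only once on each side, so one cannot conjugate the constituent $\mathsf{CNOT}$s in isolation. The telescoping trick $\mathsf{H}_{\to i}^2=\mathbb{I}$ resolves this, but it must be justified that all the reorderings are legitimate, i.e.\ that the target Hadamards, the $\mathsf{CNOT}$s with common control, and (after flipping) the $\mathsf{CNOT}$s with common target all commute pairwise on their disjoint supports. A fully self-contained alternative, which I would keep in reserve should the commutation bookkeeping become cumbersome, is to verify the identity directly on computational basis states: expand $\prod_{j}\mathsf{H}_{\to j}|x\rangle$ as a uniform superposition with signs $(-1)^{x\cdot y}$, apply $\mathsf{FO}_{i\to S}$ to send $y_k\mapsto y_k\oplus y_i$ for $k\in S$, and re-expand; after the final layer of Hadamards the surviving basis vector is precisely the one with $x_i$ replaced by $x_i\oplus\bigoplus_{k\in S}x_k$, matching $\mathsf{PARITY}_{S\to i}$.
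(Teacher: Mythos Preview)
Your argument is correct: decomposing both $\mathsf{FO}_{i\to S}$ and $\mathsf{PARITY}_{S\to i}$ into products of $\mathsf{CNOT}$s and then using the single-pair identity $(\mathsf{H}\otimes\mathsf{H})\,\mathsf{CNOT}\,(\mathsf{H}\otimes\mathsf{H}) = \mathsf{CNOT}^{\text{flipped}}$ together with the $\mathsf{H}_{\to i}^2=\mathbb{I}$ telescoping is the standard proof, and all the commutation steps you invoke are valid. However, there is nothing to compare against: the paper does not prove this statement but simply records it as a Fact with citations to~\cite{moore1999quantum,green2001counting}.
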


It is known that the $\mathsf{EXACT}^{(n)}$ gate (including $\mathsf{OR}^{(n)}$ and $\mathsf{AND}^{(n)}$) can be simulated exactly in constant depth using Fan-Out and single-qubit gates~\cite{takahashi2016collapse}. Other known constructions with Fan-Out include $\mathsf{MAJORITY}$ and $\mathsf{THRESHOLD}$~\cite{hoyer2005quantum,takahashi2016collapse}.\footnote{The power of quantum $\mathsf{THRESHOLD}$ has recently been  explored in~\cite{grier2024quantum} one year after our work appeared online and shortly before publication.}
\begin{fact}[{\cite[Theorem~1]{takahashi2016collapse}}]
    \label{thr:or_constantdepth}
    The $\mathsf{EXACT}^{(n)}[t]$ gate can be implemented in $O(1)$-depth using $2n\log{n} + O(n)$ ancillae and $6n + O(\log{n})$ Fan-Out gates with arity $\leq 2n$.
\end{fact}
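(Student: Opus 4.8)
\textbf{Proof plan for Fact~\ref{thr:or_constantdepth}.}

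The plan is to follow the strategy of H\o{}yer and \v{S}palek~\cite{hoyer2005quantum} and Takahashi and Tani~\cite{takahashi2016collapse}, reducing the computation of $\mathsf{EXACT}^{(n)}[t]$ to computing the Hamming weight $|x|$ of the input in a phase-encoded form that can be extracted in constant depth. The key observation is that $|x| \bmod (n+1)$ can be read off from the phases $e^{2\pi i |x| k/(n+1)}$ for $k = 0,\dots,n$, and each such phase is a product of single-qubit phases $e^{2\pi i x_j k/(n+1)}$ applied to the individual input bits. First I would, for each of the $n+1$ values of $k$, prepare a cat state across a fresh block of $n$ ancillae using a single Fan-Out gate (which, by Fact~\ref{fact:fanoutparity}, is available), apply the controlled phase rotations $e^{2\pi i x_j k/(n+1)}$ in parallel onto distinct cat-state qubits controlled by the respective input bits $x_j$, and then disentangle with a second Fan-Out; this imprints the aggregate phase $e^{2\pi i |x| k/(n+1)}$ onto a designated qubit of that block. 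Collecting these phases over all $k$ gives a representation from which $|x|$ is recoverable.

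The next step is to convert this phase information back into a computational-basis value of the indicator $[\,|x| = t\,]$. I would apply an inverse quantum Fourier transform over $\Z_{n+1}$ to the block of $n+1$ phase qubits, which outputs $|x| \bmod (n+1) = |x|$ (since $0 \le |x| \le n$) in the computational basis on $O(\log n)$ qubits. The QFT over $\Z_{n+1}$ on $O(\log n)$ qubits is itself implementable in constant depth using Fan-Out gates by the parallelization results of~\cite{hoyer2005quantum}, since its constituent controlled-phase gates commute and can be performed simultaneously via the cat-state trick (prepare a cat state with one Fan-Out, apply the commuting controlled phases to distinct copies, uncompute with another Fan-Out). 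Once $|x|$ is held explicitly, a single comparison against the constant $t$ — computed in constant depth with a small number of Fan-Out and single-qubit gates, or via another $\mathsf{AND}$-type subroutine — yields the output bit $x_i \oplus [\,|x|=t\,]$ on the target register. Finally I would uncompute all intermediate ancillae by running the phase-estimation and cat-state subroutines in reverse so that the workspace is returned to $|0\rangle$.

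The resource accounting proceeds as follows. The phase-imprinting stage uses $O(1)$ Fan-Out gates per value of $k$ across $n+1$ disjoint ancilla blocks, giving $O(n)$ Fan-Out gates and $O(n^2/\,?)$—here one must be careful—so I would instead organize the ancillae hierarchically: the dominant cost is the cat states feeding the parallelized phase rotations, for which the bound of $2n\log n + O(n)$ ancillae arises from needing roughly $\log n$ phase-precision qubits each fanned out to $O(n)$ copies, and the Fan-Out count $6n + O(\log n)$ tracks the constant number of Fan-Outs per rotation layer times the number of layers. I expect the main obstacle to be this precise bookkeeping of ancillae and Fan-Out gates: one must verify that the cat-state preparation, the parallel controlled phases, the constant-depth $\Z_{n+1}$-QFT, and the full uncomputation each fit within the stated $2n\log n + O(n)$ ancillae and $6n + O(\log n)$ Fan-Outs, and crucially that every Fan-Out has arity at most $2n$. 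Ensuring the arity bound while keeping all phase rotations parallel — rather than letting the cat states grow too large — is the delicate point, and it is exactly where the careful layering of~\cite{takahashi2016collapse} is needed to achieve an \emph{exact} (rather than approximate) implementation.
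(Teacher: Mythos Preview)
The paper does not actually prove Fact~\ref{thr:or_constantdepth}; it cites it from Takahashi--Tani and only proves the key ingredient, the exact $\mathsf{OR}$ reduction of Fact~\ref{fact:or_reduction}. So the relevant comparison is between your proposal and that reduction, and there the approaches genuinely diverge. Your plan is phase estimation over $\Z_{n+1}$ followed by an inverse QFT to read off $|x|$; the paper's route (following~\cite{takahashi2016collapse}) instead computes the states $|\mu_{\theta_k}^{|x|}\rangle$ for the \emph{dyadic} angles $\theta_k=1/2^k$, $k\in[\lceil\log(n+1)\rceil]$, and uses that these $p=\lceil\log(n+1)\rceil$ qubits are deterministically $|0^p\rangle$ iff $|x|=0$ (Fact~\ref{fact:or_reduction}). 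No QFT is needed, and the reduction is \emph{exact} by construction; the $\mathsf{EXACT}[t]$ case follows by offsetting the rotation angles so the all-zero outcome corresponds to $|x|=t$.

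Your proposal has a concrete gap at precisely the point you flag yourself. First, the resource accounting is inconsistent: you begin with ``for each of the $n{+}1$ values of $k$'' (which would consume $\Theta(n^2)$ ancillae, as your ``$O(n^2/?)$'' aside concedes), but later invoke ``roughly $\log n$ phase-precision qubits'' to reach $2n\log n$. These are two different constructions, and only the second has the right ancilla count---but then the phases are $2^j/(n{+}1)$, not $k/(n{+}1)$, and extracting $|x|$ exactly requires a genuine QFT over $\Z_{n+1}$ on a $\lceil\log(n+1)\rceil$-qubit register, which you have not shown to be implementable exactly in $O(1)$ depth with the stated resources. Second, and more fundamentally, the H\o{}yer--\v{S}palek phase-estimation route you outline is the \emph{approximate} construction; the whole point of~\cite{takahashi2016collapse} is to replace it by the dyadic-angle argument above, which sidesteps the QFT altogether and gives a deterministic $p$-bit output that can then be reduced again (at cost $O(n)$ ancillae and $O(\log n)$ Fan-Outs) before a final direct computation. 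Your last paragraph correctly identifies that the exactness is ``where the careful layering of~\cite{takahashi2016collapse} is needed,'' but the proposal never supplies that layering---it ends at the approximate method.
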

The above result comes from a useful $\mathsf{OR}$ reduction from $n$ to $\lceil\log(n+1)\rceil$ qubits developed in~\cite{hoyer2005quantum}. We include the proof for completeness, and explicitly count the resources required.
\begin{fact}[{\cite[Lemma~5.1]{hoyer2005quantum}}]\label{fact:or_reduction}
    The $\mathsf{OR}^{(n)}$ gate can be reduced to $\mathsf{OR}^{(p)}$, $p=\lceil\log(n+1)\rceil$, in $O(1)$-depth using $2n\lceil\log(n+1)\rceil$ ancillae and $2n + 2\lceil\log(n+1)\rceil$ Fan-Out gates with arity at most~$n$. In other words, there is a $O(1)$-depth circuit that maps $|x\rangle|0\rangle^{\otimes p} \mapsto |x\rangle|\psi_x\rangle$ for $x\in\{0,1\}^n$, where $|\psi_x\rangle\in\mathbb{C}^{2^p}$ is such that $\langle 0^p|\psi_x\rangle = 1$ if $\mathsf{OR}(x) = 0$ and $\langle 0^p|\psi_x\rangle = 0$ if $\mathsf{OR}(x) = 1$.
\end{fact}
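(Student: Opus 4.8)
The plan is to encode the Hamming weight $w:=|x|$ of the input into a phase and then detect whether $w=0$ (equivalently $\mathsf{OR}(x)=0$) by a Fourier-sampling argument that sidesteps any full inverse quantum Fourier transform. Write $p=\ceil{\log(n+1)}$, so that $2^p>n\ge w$, and index the computational basis of the $p$ ancillae by $y=\sum_{k=0}^{p-1}y_k2^k$. First I would prepare the ancillae in the uniform superposition using a single layer of Hadamards, $|0\rangle^{\otimes p}\mapsto 2^{-p/2}\sum_{y\in\{0,1\}^p}|y\rangle$; this is exactly the QFT applied to $|0^p\rangle$, but here it is free since every controlled-phase of the QFT acts trivially on $|0^p\rangle$.

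The heart of the construction is to implement, in constant depth, the diagonal unitary $|x\rangle|y\rangle\mapsto e^{2\pi i\,wy/2^p}|x\rangle|y\rangle$. Using $wy=\sum_{i=0}^{n-1}\sum_{k=0}^{p-1}x_iy_k2^k$, this phase factors into a product of $np$ doubly-controlled phases, each applying $e^{2\pi i\,2^k/2^p}$ exactly when $x_i=y_k=1$, i.e.\ a $\mathsf{C}$-$\mathsf{Z}(2^{k+1-p})$ gate between input bit $i$ and ancilla bit $k$ (note $2^{k+1-p}\in(0,1]$, so this is a legal two-qubit gate). These $np$ diagonal gates commute but share qubits, so to run them in a single layer I would first spread the controls: a Fan-Out of each input qubit $x_i$ into its $p$ needed copies, and a Fan-Out of each ancilla qubit $y_k$ into $n$ copies (arity at most $n$). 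After spreading, each index pair $(i,k)$ owns a private copy of $x_i$ and of $y_k$, the $np$ controlled-$\mathsf{Z}$ gates act on disjoint pairs and hence execute in depth $1$, and two further Fan-Out layers uncompute the copies. This costs $2n+2p$ Fan-Out gates of arity at most $n$ and at most $2np=2n\ceil{\log(n+1)}$ ancillae, keeping the block at constant depth.

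Finally I would apply a second layer of Hadamards, yielding the ancilla state $|\psi_x\rangle$, and compute the overlap
\begin{align*}
  \langle 0^p|\psi_x\rangle=\langle +|^{\otimes p}\Big(2^{-p/2}\sum_{y}e^{2\pi i\,wy/2^p}|y\rangle\Big)=\frac{1}{2^p}\sum_{y=0}^{2^p-1}e^{2\pi i\,wy/2^p},
\end{align*}
a geometric sum equal to $1$ when $w\equiv 0\pmod{2^p}$ and $0$ otherwise. Since $0\le w\le n<2^p$, this is exactly the indicator $[\,w=0\,]=[\,\mathsf{OR}(x)=0\,]$, the desired property. The point of the closing Hadamards is that we never need the full state $|w\rangle$: it suffices to distinguish the single branch $w=0$ (for which the post-phase state is $|+\rangle^{\otimes p}$) from all $w>0$ (orthogonal to $|+\rangle^{\otimes p}$), and $\mathsf{H}^{\otimes p}$ rotates $|+\rangle^{\otimes p}$ to $|0^p\rangle$ in depth $1$; this is precisely why the expensive inverse QFT is unnecessary.

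I expect the main obstacle to be the parallelization step: verifying that the $np$ doubly-controlled phases can genuinely be collapsed to depth $1$ within the stated arity and ancilla budget, and that the outer Fan-Out layers correctly disentangle and reset every copy even though they are applied to the $y$-register in superposition. The latter is fine because every gate sandwiched between the spreading and unspreading layers is diagonal in the computational basis and Fan-Out is self-inverse, so the net effect on the $(x,y)$ registers is exactly the intended diagonal phase with all copy-ancillae returned to $|0\rangle$. The only arithmetic subtlety is the exact vanishing of the geometric sum, which relies entirely on the choice $p=\ceil{\log(n+1)}$ guaranteeing $w<2^p$.
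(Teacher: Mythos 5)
Your proposal is correct, and the circuit you build is in fact identical to the paper's, up to relabeling the ancilla qubits $k \leftrightarrow p-1-k$: your doubly-controlled phases $e^{2\pi i x_i y_k 2^k/2^p}$ are exactly the paper's controlled-$\mathsf{Z}(\theta_{k'} x_i)$ gates with $\theta_{k'} = 2^{-k'}$, your Hadamard--phase--Hadamard sandwich on each ancilla reproduces the paper's states $|\mu^{|x|}_{\theta_{k'}}\rangle$, and your fan-out bookkeeping (spread each $x_i$ into $p$ copies, each $y_k$ into $n$ copies, uncompute) gives the same counts of $2n+2p$ Fan-Outs, at most $2np$ ancillae, and arity at most $n$. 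The only genuine difference is the correctness analysis: the paper argues qubit-by-qubit, writing a nonzero weight as $|x| = 2^a(2b+1)$ and observing that ancilla $a$ is then deterministically $|1\rangle$, whereas you compute the global overlap $\langle 0^p|\psi_x\rangle = \frac{1}{2^p}\sum_{y=0}^{2^p-1} e^{2\pi i |x| y/2^p}$ and invoke the vanishing of the geometric sum unless $2^p$ divides $|x|$. Both arguments are valid and rest on the same choice $2^p > n$; the geometric-sum view is more compact and makes the Fourier/phase-estimation structure explicit, while the paper's odd-part argument gives slightly more information (it identifies \emph{which} ancilla certifies $\mathsf{OR}(x)=1$, which is what makes the per-qubit picture convenient when these ancillae are fed into the subsequent $\mathsf{OR}^{(p)}$ stage).
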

\begin{proof}
    Given the input $|x\rangle|0\rangle$, $x\in\{0,1\}^n$, we first show how to compute $|x\rangle|0\rangle \mapsto |x\rangle|\mu_\theta^{|x|}\rangle$ in constant depth, where $|\mu_\theta^{|x|}\rangle := \frac{1}{2}(1+e^{i\pi\theta|x|})|0\rangle + \frac{1}{2}(1-e^{i\pi\theta|x|})|1\rangle$, $\theta\in[-1,1]$. Attach an ancillary register $|0\rangle^{\otimes (n-1)}$ and apply a Hadamard gate on the first qubit of $|0\rangle^{\otimes n}$ followed by a Fan-Out gate copying this first qubit onto the remaining $n-1$ qubits. This leads to
    \begin{align*}
        |x\rangle|0\rangle^{\otimes n} \mapsto |x\rangle\frac{|0\rangle + |1\rangle}{\sqrt{2}}|0\rangle^{\otimes (n-1)} \mapsto |x\rangle\frac{|0\rangle^{\otimes n} + |1\rangle^{\otimes n}}{\sqrt{2}}.
    \end{align*}
    Apply a $\mathsf{Z}(\theta x_i)$ gate on the $i$-th qubit of $\frac{1}{\sqrt{2}}(|0\rangle^{\otimes n} + |1\rangle^{\otimes n})$ controlled on $|x_i\rangle$, for $i\in[n]$. Thus
    \begin{align*}
        |x\rangle\frac{|0\rangle^{\otimes n} + |1\rangle^{\otimes n}}{\sqrt{2}} \mapsto |x\rangle\frac{|0\rangle^{\otimes n} + e^{i\pi\theta|x|}|1\rangle^{\otimes n}}{\sqrt{2}}.
    \end{align*}
    Uncomputing the first step leads to $|x\rangle|\mu_\theta^{|x|}\rangle$ as required. In total, we have used $n-1$ ancillae and $2$ Fan-Out gates with arity $n$.

    The reduction works by computing in parallel the states $|\psi_k\rangle = |\mu_{\theta_k}^{|x|}\rangle$ with $\theta_k = 1/2^k$, for all $k\in[p]$, which requires copying the register $|x\rangle$ a number of $p-1$ times by using $n$ Fan-Out gates with arity $p$.
    The output $|\psi\rangle = |\psi_0\rangle|\psi_1\rangle\dots |\psi_{p-1}\rangle$ is the desired state. Indeed, if $|x| = 0$, then $\langle 0^p|\psi\rangle = 1$, since $|\psi_k\rangle = |0\rangle$ for each $k\in[p]$. On the other hand, if $|x| \neq 0$, then $\langle 0^p|\psi\rangle = 0$, since at least one qubit $|\psi_k\rangle$ is $|1\rangle$ with certainty. Indeed, there are integers $a\in[p]$ and $b\geq 0$ such that $|x| = 2^a(2b+1)$. Then a direct calculation shows that $\langle 1|\psi_a\rangle = 1$. This proves the correctness of the reduction. Finally, the whole reduction uses $n(p-1)+p(n-1) \leq 2np$ ancillae and $2m+2p$ Fan-Out gates with arity at most $n$.
\end{proof} 

Pham and Svore~\cite{pham20132d} showed how to implement an $n$-arity Fan-Out using a $O(1)$-depth and $O(n)$-size quantum circuit with intermediary measurements, classical feedback, and classical $O(\log{n})$-depth computation. Without intermediary measurements and classical feedback, it is folklore that $n$-arity Fan-Out gates can be implemented by a $\mathsf{CNOT}$ circuit of depth $O(\log n)$ and size $O(n)$. 
If low-arity Fan-Out gates are available, it is possible to improve the number and depth of $\mathsf{CNOT}$ gates required as follows. 

\begin{lemma}\label{lem:fanout-cascade}
    For $y \in \{0,1\}$, the unitary $\ket{y}\ket{0}^{\otimes n} \mapsto \ket{y}^{\otimes(n+1)}$ can be implemented with $\lceil n/(k-1)\rceil$ $k$-arity Fan-Out gates in depth $\lceil\log_k(n+1)\rceil$.
\end{lemma}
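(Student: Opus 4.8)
The plan is to build the $n+1$ copies of $y$ by a balanced $k$-ary broadcasting tree, applying Fan-Out gates layer by layer so that the number of qubits holding the value $y$ is multiplied by a factor of at most $k$ at each depth step. Recall from Definition~\ref{def:fanout} that a $k$-arity Fan-Out gate has one control and $k-1$ targets, so from a single copy of $y$ it creates $k-1$ fresh copies out of $k-1$ qubits initialised to $\ket{0}$.

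Concretely, I would set $d := \ceil{\log_k(n+1)}$. For each of the first $d-1$ layers I branch fully: every qubit currently holding $y$ simultaneously controls its own $k$-arity Fan-Out acting on $k-1$ private fresh $\ket{0}$ qubits. Because these gates act on pairwise-disjoint registers, each layer has depth $1$, and after layer $j$ there are exactly $k^j$ copies, using $k^{j-1}$ gates in that layer. Since $d-1<\log_k(n+1)$ gives $k^{d-1}<n+1$, after $d-1$ full layers we hold $k^{d-1}\le n$ copies and still need $n+1-k^{d-1}\ge 1$ more; I then complete the construction with one final partial layer of $\ceil{(n+1-k^{d-1})/(k-1)}$ Fan-Out gates (the last one possibly of arity below $k$), again in depth $1$. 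The total depth is therefore $(d-1)+1=d=\ceil{\log_k(n+1)}$, as required.

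It remains to count the gates, and this is the only place needing care. The full layers contribute $\sum_{j=1}^{d-1}k^{j-1}=\frac{k^{d-1}-1}{k-1}=:m$, which is a nonnegative integer because $k-1\mid k^{d-1}-1$. Adding the final layer gives
\[
m+\ceil{\frac{n+1-k^{d-1}}{k-1}}=m+\ceil{\frac{n-m(k-1)}{k-1}}=m+\ceil{\frac{n}{k-1}}-m=\ceil{\frac{n}{k-1}},
\]
where the middle equality uses $k^{d-1}-1=m(k-1)$ and the outer one uses the identity $\ceil{x-m}=\ceil{x}-m$ for integer $m$. This yields exactly $\ceil{n/(k-1)}$ Fan-Out gates.

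The hard part is essentially bookkeeping: making sure the stopping layer is well defined (i.e.\ $k^{d-1}<n+1$, which follows from $\ceil{\,\cdot\,}<\,\cdot\,+1$), that no full layer overshoots $n+1$ copies, and that the two ceilings combine cleanly via the divisibility $k-1\mid k^{d-1}-1$. Correctness of the unitary itself is immediate: every Fan-Out copies $y$ onto freshly initialised $\ket{0}$ qubits, so on input $\ket{y}\ket{0}^{\otimes n}$ the circuit deterministically outputs $\ket{y}^{\otimes(n+1)}$, and the degenerate cases $n=0$ (empty circuit, depth $0$) and $n=1$ (a single gate) are consistent with both formulas.
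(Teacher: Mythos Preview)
Your proof is correct and follows essentially the same $k$-ary broadcast-tree construction as the paper: apply full layers until $k^{d-1}$ copies exist, then one partial layer. Your gate-counting is in fact cleaner than the paper's---by exploiting the divisibility $k-1 \mid k^{d-1}-1$ and the identity $\ceil{x-m}=\ceil{x}-m$ for integer $m$, you obtain the count $\ceil{n/(k-1)}$ exactly, whereas the paper uses a slightly different expression for the final layer and only derives it as an upper bound.
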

\begin{proof}
    Note that, starting from the initial state at depth $d=0$ until the final state at maximum depth $d=d_c$, the $i$-th Fan-Out layer maps the $i$-th state onto the $(i+1)$-th state, for $i\in[d_c].$ We prove by induction that, at depth $d$, our state is $|y\rangle^{\otimes k^d}|0\rangle^{\otimes (n+1-k^d)}$. The case $d=0$ is obvious. Assume the induction hypothesis for $d$. Then, after applying one layer of $k^d$ $k$-arity Fan-Out gates we obtain
    \begin{align*}
        |y\rangle^{\otimes k^d}|0\rangle^{\otimes (n+1-k^d)} \mapsto |y\rangle^{\otimes k^{d+1}}|0\rangle^{\otimes (n+1-k^{d+1})},
    \end{align*}
    as wanted. The circuit depth $d_c$ is the minimum $d$ such that $n+1 - k^{d} \leq 0$, i.e., $d_c = \lceil \log_k(n+1) \rceil$. Regarding the size, from depth $d=0$ to $d=d_c-1$ we require $\sum_{j=0}^{d_c-2}k^j = (k^{d_c-1} - 1)/(k-1)$ Fan-Out gates. In the final layer there are only $n+1 - k^{d_c-1}$ qubits left in the $|0\rangle$ state, thus another $\lceil\frac{n+1}{k} - k^{d_c-2} \rceil$ Fan-Out gates are required. In total, the number of Fan-Outs is at most
    \begin{equation*}
    \scalemath{0.975}{\frac{k^{d_c-1} - 1}{k-1} + \left\lceil \frac{n+1}{k} - k^{d_c-2} \right\rceil = \left\lceil\frac{n+1}{k} + \frac{k^{d_c-2} - 1}{k-1}\right\rceil \leq \left\lceil\frac{n+1}{k} + \frac{(n+1)/k - 1}{k-1}\right\rceil = \left\lceil\frac{n}{k-1}\right\rceil}. \qedhere\end{equation*}
\end{proof}

\subsection{The Global Tunable gate}\label{sec:GT}

\begin{definition}[Global Tunable gate]\label{def:globaltunable}
    Let $\Theta\in[-1,1]^{n\times n}$. The $n$-arity Global Tunable gate $\mathsf{GT}^{(n)}_\Theta$ is the unitary operator
    \begin{align*}
        \mathsf{GT}^{(n)}_{\Theta} = \prod_{1\leq i < j \leq n} \mathsf{C}_i\text{-}\mathsf{Z}(\Theta_{ij})_{\to j}.
    \end{align*}
\end{definition}
The $\mathsf{GT}$ gate is powerful in that it can perform many Fan-Out gates in parallel.
\begin{lemma}\label{claim:fanoutasGT}
    A number $l$ of pair-wise commuting Fan-Out gates $\mathsf{FO}^{(n_0)},\dots,\mathsf{FO}^{(n_{l-1})}$ can be performed in depth-$3$ using one $\mathsf{GT}$ gate with arity at most $n$ and at most $2(n-1)$ Hadamard gates, where $n := \sum_{j\in[l]} n_j$.
\end{lemma}
\begin{proof}
    Let $T\subseteq[l]$.
    Without lost of generality, for each $i\in[l]$, consider a Fan-Out gate $\mathsf{FO}_{q_i\to S_i}$ controlled on qubit $q_i\in T$ with target qubits in $S_i\subseteq[n]\setminus T$. All Fan-Out gates $\mathsf{FO}_{q_i\to S_i}$ commute since the sets of target and control qubits $T$ and $\bigcup_{i\in[l]}S_i$, respectively, are disjoint. Therefore
    \begin{align*}
        \scalemath{0.93}{\prod_{i\in[l]}\mathsf{FO}_{q_i\to S_i} = \prod_{i\in[l]}\prod_{j\in S_i}\mathsf{C}_{q_i}\text{-}\mathsf{X}_{\to j} = \prod_{i\in[l]}\prod_{j\in S_i}\mathsf{H}_{\to j}\cdot\mathsf{C}_{q_i}\text{-}\mathsf{Z}_{\to j}\cdot\mathsf{H}_{\to j} = \left(\prod_{j\in \bigcup_{i\in[l]}S_i}\mathsf{H}_{\to j}\right)\mathsf{GT}_{\Theta}\left(\prod_{j\in \bigcup_{i\in[l]}S_i}\mathsf{H}_{\to j}\right)}.
    \end{align*}
    Here $\Theta\in\{0,1\}^{n\times n}$ is the matrix $\Theta = \bigoplus_{i\in[l]}\Theta_{q_i}$, where $\Theta_{q_i}\in\{0,1\}^{n\times n}$ is the matrix whose $q_i$-th row is the characteristic vector of the set $S_i$, while the remaining rows are zero (the parity is taken entry-wise). In other words, the $(i,j)$-entry of $\Theta$ is the parity of the number of sets $S_0,\dots,S_{l-1}$ that contain $j\in [n]\setminus T$ and are controlled on qubit $i\in T$. This is because a $\mathsf{Z}$ gate is applied onto a qubit $j\in [n]\setminus T$ only if it is the target to an odd number of Fan-Out gates. The maximum arity of the $\mathsf{GT}$ gate happens when $(S_i \cup \{q_i\}) \cap (S_j \cup \{q_j\})=\emptyset$ for every $i\neq j \in [l]$, i.e., when each Fan-Out gate acts on a separate set of qubits. The maximum number of Hadamard gates happens when $q_0 = \dots = q_{l-1}$ and $S_i \cap S_j=\emptyset$ for every $i\neq j \in [l]$, i.e., when all Fan-Out gates share the same control qubit but copy it into separate sets of qubits.
\end{proof}

Thus, up to conjugation by Hadamards, a single $\mathsf{GT}$ gate can copy a control register into an arbitrary number of target registers. Moreover, from the above fact follows a simple yet interesting result concerning constant-depth quantum circuits.
\begin{lemma}
    \label{lem:fanout_to_gt_circuits}
    Consider a constant-depth circuit that uses $l$ Fan-Out gates $\mathsf{FO}^{(n_0)},\dots,\mathsf{FO}^{(n_{l-1})}$. There is an equivalent constant-depth circuit that uses $O(1)$ $\mathsf{GT}^{(n)}$ gates, where $n \leq \sum_{j\in[l]} n_j$.
\end{lemma}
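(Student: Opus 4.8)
The plan is to exploit the layered structure of a constant-depth circuit together with Claim~\ref{claim:fanoutasGT}. Write the given circuit as a product of $D = O(1)$ layers $L_1,\dots,L_D$, where within each layer all gates (including the Fan-Out gates, viewed as ``magic'' gates appended to the set $\mathcal{G}$ of Definition~\ref{def:qpu}) act on pair-wise disjoint sets of qubits. In particular, the Fan-Out gates appearing in a single layer $L_d$ act on mutually disjoint sets of qubits and therefore commute pair-wise: no control qubit of one such Fan-Out can be a target of another.

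First I would handle a single layer. Fix a layer $L_d$ and let $\mathsf{FO}_{q_1\to S_1},\dots,\mathsf{FO}_{q_r\to S_r}$ be the Fan-Out gates it contains. Taking $T := \{q_1,\dots,q_r\}$ to be the set of their control qubits, disjointness of supports guarantees that every target set $S_i$ lies in $[n]\setminus T$, so the hypotheses of Claim~\ref{claim:fanoutasGT} are satisfied. The claim then lets me replace $\prod_{i} \mathsf{FO}_{q_i\to S_i}$ by a single $\mathsf{GT}$ gate conjugated by Hadamards, at the cost of depth $3$. The arity of this $\mathsf{GT}$ is at most the number of qubits touched by the Fan-Outs of $L_d$, namely $\sum_{i} |S_i \cup \{q_i\}| \le \sum_{j=0}^{l-1} n_j =: n$; the Hadamards and the single- and two-qubit gates already present in $L_d$ are left untouched.

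Performing this substitution in every layer yields an equivalent circuit. Since there are $D = O(1)$ layers and each contributes at most one $\mathsf{GT}$ gate, the new circuit uses $O(1)$ gates $\mathsf{GT}^{(n)}$ with $n \le \sum_{j=0}^{l-1} n_j$. Its depth is bounded by $O(D)$ times the depth-$3$ overhead of Claim~\ref{claim:fanoutasGT}, plus the unchanged single/two-qubit sublayers, hence it remains $O(1)$.

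I expect the only point requiring genuine care to be the commutativity bookkeeping of the second step: one must confirm that within each layer the Fan-Out gates really do satisfy the control/target separation demanded by Claim~\ref{claim:fanoutasGT}, which is precisely where the disjoint-support property of a single circuit layer is used. Everything else is routine accounting of depth and arity, and the substantive content is already packaged inside Claim~\ref{claim:fanoutasGT}.
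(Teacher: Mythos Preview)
Your proposal is correct and follows essentially the same approach as the paper: decompose the circuit into its $O(1)$ layers, observe that the Fan-Out gates within a layer have disjoint supports and hence satisfy the hypotheses of Claim~\ref{claim:fanoutasGT}, and replace them by a single $\mathsf{GT}$ gate per layer. The paper's proof is just a terser version of exactly this argument.
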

\begin{proof}
    The circuit consists of a constant number of layers, each of which contains at most $l$ disjoint Fan-Out gates. The result follows from \Cref{claim:fanoutasGT}.
\end{proof}
An example of the above result applied to \Cref{thr:or_constantdepth} (up to an exact gate count) is the following result concerning the $\mathsf{EXACT}^{(n)}$ gate.
\begin{fact}[\cite{bravyi2022constant}]
    \label{thr:or_constantdepth_gt}
    The $\mathsf{EXACT}^{(n)}[t]$ gate can be implemented in $O(1)$-depth using $2n + O(\log{n})$ ancillae and $4$ $\mathsf{GT}$ gates with arity $\leq n + O(\log{n})$.
\end{fact}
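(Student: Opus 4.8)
The plan is to combine the H\o{}yer--\v{S}palek $\mathsf{OR}$-reduction (Fact~\ref{fact:or_reduction}) with the observation (Claim~\ref{claim:fanoutasGT}) that a single $\mathsf{GT}$ gate can realize an arbitrary product of two-qubit controlled phases in one layer. First I would reduce $\mathsf{EXACT}^{(n)}[t]$ to the detection of a single zero: since $|x|=t$ iff $|x|-t=0$, and $\big||x|-t\big|\le n < 2^p$ for $p:=\lceil\log(n+1)\rceil$, it suffices to prepare in parallel the $p$ single-qubit states $|\mu^{|x|-t}_{1/2^k}\rangle$ for $k\in[p]$ and test whether they are jointly $|0^p\rangle$. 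Correctness is exactly the argument behind Fact~\ref{fact:or_reduction}, extended to a signed count: writing a nonzero value $|x|-t=\pm 2^a(2b+1)$ with $a<p$ forces $|\mu^{|x|-t}_{1/2^a}\rangle=|1\rangle$, whereas $|x|-t=0$ makes every register $|0\rangle$.

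The crucial step, and the source of the saving over a direct translation of the Fan-Out construction of Fact~\ref{thr:or_constantdepth} via Lemma~\ref{lem:fanout_to_gt_circuits}, is that all the phases can be accumulated onto the $p$ ancillae with a single $\mathsf{GT}$ gate, rather than routed through $p$ separate cat states (which is what costs $\Theta(n\log n)$ ancillae in the Fan-Out version). Concretely, I would initialize $p$ ancillae $a_0,\dots,a_{p-1}$ in $|{+}\rangle^{\otimes p}$ and apply the gate $\prod_{i\in[n]}\prod_{k\in[p]}\mathsf{C}_{x_i}\text{-}\mathsf{Z}(1/2^k)_{\to a_k}$ together with the fixed single-qubit phases $\mathsf{Z}(-t/2^k)$ on each $a_k$; because controlled-$\mathsf{Z}$ is symmetric in its two qubits, this entire product of two-qubit controlled phases is a single $\mathsf{GT}^{(n+p)}$ of arity $n+O(\log n)$, matching the stated arity bound. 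Each $a_k$ then carries $\tfrac{1}{\sqrt 2}\big(|0\rangle+e^{i\pi(|x|-t)/2^k}|1\rangle\big)$, and conjugating the ancillae by $\mathsf{H}^{\otimes p}$ (free) turns these into $|\mu^{|x|-t}_{1/2^k}\rangle$, so the ancillae hold a state $|\psi_x\rangle$ with $\langle 0^p|\psi_x\rangle=[\,|x|=t\,]$.

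It remains to write the answer and clean up. Since $|\psi_x\rangle$ is orthogonal to $|0^p\rangle$ precisely when $|x|\neq t$, I would flip the target controlled on all $p$ ancillae being $|0\rangle$ --- an $\mathsf{AND}^{(p)}$-type operation on only $O(\log n)$ qubits, implementable in constant depth with $O(1)$ further $\mathsf{GT}$ gates and $O(\log n)$ additional ancillae --- and then run the phase layer in reverse (its inverse is again a single $\mathsf{GT}$, with negated angles) to disentangle and reset every ancilla. The correctness and constant-depth claims follow directly from the reduction above, so the main obstacle is purely quantitative: squeezing the total $\mathsf{GT}$ count down to exactly $4$ and the ancilla count to $2n+O(\log n)$. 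The phase accumulation and its uncomputation already consume two $\mathsf{GT}$ gates, so the comparison stage and its reversal must be absorbed into the remaining two; this is where the careful bookkeeping of~\cite{bravyi2022constant} --- merging the comparison with the phase layers and reusing ancillae rather than treating each stage in isolation --- is required to attain the stated constants.
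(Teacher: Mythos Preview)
Your approach is essentially the one the paper sketches: it does not prove Fact~\ref{thr:or_constantdepth_gt} in full but attributes it to~\cite{bravyi2022constant}, noting that the construction follows the Takahashi--Tani scheme of~\cite{takahashi2016collapse} with the H\o{}yer--\v{S}palek $\mathsf{OR}$-reduction replaced by its $\mathsf{GT}$ version (stated and proved immediately after as a separate fact, with exactly the single-$\mathsf{GT}$ phase-accumulation step you describe). Your acknowledgment that the precise constants require the bookkeeping of~\cite{bravyi2022constant} is also accurate.

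One point deserves a sharper remark. Your analysis of the comparison stage suggests $O(\log n)$ ancillae suffice in total, whereas the stated bound is $2n+O(\log n)$. This is not a slip on your part but reflects a genuine trade-off made explicit in the paper's introduction: \cite{bravyi2022constant} achieves $n$-arity $\mathsf{AND}$ either with $O(\log^\ast n)$ $\mathsf{GT}$ gates and $O(\log n)$ ancillae, \emph{or} with exactly $4$ $\mathsf{GT}$ gates and $O(n)$ ancillae. The recursive route you outline (reduce to $p=O(\log n)$, then recurse) naturally lands in the first regime, not the second. So when you say the comparison on $p$ qubits needs ``$O(1)$ further $\mathsf{GT}$ gates and $O(\log n)$ additional ancillae'' and that merging layers will recover the count of $4$, be aware that this merging is precisely where the $O(n)$ ancillae enter; they are not an artifact of loose accounting but the price paid for bringing the $\mathsf{GT}$ count down to a fixed constant rather than $O(\log^\ast n)$.
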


Note that the original construction of~\cite{bravyi2022constant} is for $\mathsf{OR}^{(n)}$ and requires fewer than $2n$ ancillae because they implement a slightly different gate, $|x\rangle \mapsto (-1)^{\mathsf{OR}(x)}|x\rangle$. Their construction is similar to the one from~\cite{takahashi2016collapse} and uses the $\mathsf{OR}$ reduction from~\cite{hoyer2005quantum} adapted to $\mathsf{GT}$ gates. We include the proof of the $\mathsf{OR}$ reduction for completeness and explicitly count the resources required.

\begin{fact}[\cite{bravyi2022constant}]
    The $\mathsf{OR}^{(n)}$ gate can be reduced to $\mathsf{OR}^{(p)}$, $p=\lceil\log(n+1)\rceil$, in $O(1)$-depth using $1$ $\mathsf{GT}$ gate with arity $n + \lceil\log(n+1)\rceil$ and no ancillae.
\end{fact}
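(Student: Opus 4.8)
The plan is to build the $\mathsf{GT}$ analogue of the Fan-Out reduction proved just above (the H\o{}yer--\v{S}palek version), targeting exactly the same output state so that its correctness argument carries over verbatim. That is, I would produce $|x\rangle\bigotimes_{k\in[p]}|\mu_{\theta_k}^{|x|}\rangle$ with $\theta_k = 1/2^k$, where $|\mu_\theta^{|x|}\rangle = \frac{1}{2}(1+e^{i\pi\theta|x|})|0\rangle + \frac{1}{2}(1-e^{i\pi\theta|x|})|1\rangle$. Then the same case analysis applies: if $|x|=0$ every factor equals $|0\rangle$ so $\langle 0^p|\psi_x\rangle = 1$, whereas if $|x|\neq 0$ one writes $|x| = 2^a(2b+1)$ with $a\in[p]$ (legal since $|x|\le n < 2^p$ as $2^p\ge n+1$), giving $\theta_a|x| = 2b+1$ odd, hence $e^{i\pi\theta_a|x|}=-1$, so the $a$-th factor is $|1\rangle$ and $\langle 0^p|\psi_x\rangle = 0$. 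The one new idea, which eliminates all the cat-state ancillae of the Fan-Out version, is that a single output qubit $y_k$ can be driven to $|\mu_{\theta_k}^{|x|}\rangle$ with no ancilla, just by Hadamard-conjugating a diagonal phase.

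Concretely, I would label the $n$ input qubits $0,\dots,n-1$ and the $p$ fresh output qubits $n,\dots,n+p-1$, apply $\mathsf{H}$ to every output qubit, then apply the single gate $\mathsf{GT}^{(n+p)}_\Theta$ whose only nonzero entries are $\Theta_{i,\,n+k} = \theta_k = 1/2^k$ for $i\in[n]$ and $k\in[p]$, and finally apply $\mathsf{H}$ to every output qubit again. Since $\theta_k\in(0,1]\subseteq[-1,1]$ this is a valid $\mathsf{GT}$ matrix, and since every nonzero entry pairs an input index with a strictly larger output index, the gate is exactly the mutually commuting product $\prod_{k\in[p]}\prod_{i\in[n]}\mathsf{C}_i\text{-}\mathsf{Z}(\theta_k)_{\to n+k}$, respecting the $i<j$ convention of Definition~\ref{def:globaltunable}.

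The verification is then a one-line Hadamard-conjugation calculation. After the first Hadamard layer, output qubit $y_k$ is in $\frac{1}{\sqrt{2}}(|0\rangle+|1\rangle)$; the controlled phases targeting it multiply its $|1\rangle$ branch by $\prod_{i\in[n]}e^{i\pi\theta_k x_i} = e^{i\pi\theta_k|x|}$, yielding $\frac{1}{\sqrt{2}}(|0\rangle + e^{i\pi\theta_k|x|}|1\rangle)$, and the final $\mathsf{H}$ turns this into $\frac{1}{2}(1+e^{i\pi\theta_k|x|})|0\rangle + \frac{1}{2}(1-e^{i\pi\theta_k|x|})|1\rangle = |\mu_{\theta_k}^{|x|}\rangle$, as required. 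The resulting circuit has depth $3$ (Hadamards, then $\mathsf{GT}$, then Hadamards), uses exactly one $\mathsf{GT}$ gate of arity $n+p = n+\lceil\log(n+1)\rceil$, and introduces no ancillae beyond the $p$ reduced output qubits. I do not anticipate a real obstacle here; the only things to watch are the indexing/ordering convention for $\Theta$ (each controlled-$\mathsf{Z}$ must couple an input qubit to an output qubit while respecting $i<j$) and confirming $a<p$ in the correctness case, both of which follow immediately from $2^p\ge n+1$.
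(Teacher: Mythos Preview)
Your proposal is correct and follows essentially the same approach as the paper: replace the cat-state ancillae of the Fan-Out reduction by applying all controlled-$\mathsf{Z}(\theta_k)$ phases directly onto single Hadamard-prepared output qubits via one $\mathsf{GT}$ gate of arity $n+p$, then reuse the H\o{}yer--\v{S}palek correctness argument verbatim. The paper states this more tersely, but your explicit $\Theta$ matrix and depth-$3$ layout are exactly what is intended.
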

\begin{proof}
    The general construction is the same as in \Cref{fact:or_reduction}, the difference being the number of ancillary qubits. When constructing the states $|\mu_\theta^{|x|}\rangle$, there is no need for the ancillary register $|0\rangle^{\otimes (n-1)}$, since all the $\mathsf{Z}(\theta x_i)$ gates controlled on $|x_i\rangle$, $i\in[n]$, can be performed using a single $\mathsf{GT}$ gate with arity $n+1$ on the state $\frac{1}{\sqrt{2}}(|0\rangle + |1\rangle)$, i.e., the mapping
    \begin{align*}
        |x\rangle\frac{|0\rangle + |1\rangle}{\sqrt{2}} \mapsto |x\rangle\frac{|0\rangle + e^{i\pi\theta |x|}|1\rangle}{\sqrt{2}}
    \end{align*}
    requires only one $\mathsf{GT}$ gate and no ancillae. This procedure actually scales to computing $|x\rangle|0\rangle^{\otimes p} \mapsto |x\rangle|\psi_0\rangle\dots|\psi_{p-1}\rangle$, i.e., all states $|\psi_k\rangle = |\mu^{|x|}_{\theta_k}\rangle$ can be computed in parallel with a single $\mathsf{GT}$ gate with arity $n+p$ and no ancillary qubits.
\end{proof}

As another example, it was shown in~\cite[Theorem~6]{rosenthal2021query} that every $n$-qubit state can be constructed by a $\mathsf{QAC}^0_f$ circuit with $\widetilde{O}(2^n)$ ancillae. It follows from \Cref{lem:fanout_to_gt_circuits} that every $n$-qubit state can be constructed with a constant number of $\mathsf{GT}$ gates and $\widetilde{O}(2^n)$ ancillae.

\paragraph{Comment on physical implementation of multi-qubit gates.}  The constant-depth architectures we consider make use of the multi-qubit Fan-Out and $\mathsf{GT}$ gates. However, the complexity and time required to implement such gates in practice may differ and may be both hardware and code-dependent. For example, if one considers logical qubits encoded via the surface code, then for a fixed code distance $d$, Fan-Out gates can be performed in a constant number of surface code cycles via braiding~\cite{fowler2012surface}. On the other hand, in the non-error-corrected ion-trap $\mathsf{GT}$ gate implementation proposed in~\cite{grzesiak2020efficient}, each of the $n$ qubits is simultaneously acted on by a separate sequence of at least $2n$ constant-duration laser pulses. Assuming a practical lower bound on the duration of any pulse in this sequence, the wall-clock time required to implement a single $\mathsf{GT}$ gate according to this scheme scales linearly with $n$ (and uses a linear number of laser sources) and the constant-depth $\mathsf{GT}$ gate constructions do not necessarily translate to constant-time constructions. This is not surprising, since the $\mathsf{GT}$ gate is strictly more powerful than Fan-Out.

\section{Constant-depth circuits based on one-hot encoding}
\label{sec:constructions_onehot}

In this section, we provide constant-depth circuits for $f$-$\mathsf{UCG}$s via our first technique based one-hot encoding. We rely on a simple fact regarding the unitary
\begin{align*}
    \mathsf{C}_{\mathtt{E}_{n-1}}\text{-}(\mathsf{U}_{n-1})_{\to\mathtt{T}}\cdots \mathsf{C}_{\mathtt{E}_{1}}\text{-}(\mathsf{U}_{1})_{\to\mathtt{T}}\cdot\mathsf{C}_{\mathtt{E}_{0}}\text{-}(\mathsf{U}_{0})_{\to\mathtt{T}}
\end{align*}
that sequentially applies the gates $\mathsf{U}_0,\dots,\mathsf{U}_{n-1}$ onto a target qubit $\mathtt{T}$ controlled on the single-qubit registers $\mathtt{E}_0,\dots,\mathtt{E}_{n-1}$, respectively, being in the $|1\rangle$ state (see \Cref{fig:circuit_reorderinga}). Let $|e\rangle_{\mathtt{E}} = \bigotimes_{j\in[n]}|e_j\rangle_{\mathtt{E}_j}$ be the state of the registers $\mathtt{E}_0,\dots,\mathtt{E}_{n-1}$. If $e\in\{0,1\}^n$ has Hamming weight at most $1$ (i.e., $|e| \leq 1$), then we can rearrange the gates from the $\mathsf{Z}$-decomposition of $\mathsf{U}_j$ as
\begin{align*}
    &\mathsf{C}_{\mathtt{E}_{n-1}}\text{-}(\mathsf{U}_{n-1})_{\to\mathtt{T}}\cdots \mathsf{C}_{\mathtt{E}_{1}}\text{-}(\mathsf{U}_{1})_{\to\mathtt{T}}\cdot\mathsf{C}_{\mathtt{E}_{0}}\text{-}(\mathsf{U}_{0})_{\to\mathtt{T}}|e\rangle_{\mathtt{E}}|b\rangle_{\mathtt{T}} = \\
    &\left(\prod_{j\in[n]} \mathsf{Z}(\alpha_j)_{\to\mathtt{E}_j}\right)\left(\prod_{j\in[n]} \mathsf{C}_{\mathtt{E}_j}\text{-}\mathsf{Z}(\beta_j)_{\to\mathtt{T}}\right) \mathsf{H}_{\to \mathtt{T}} \left(\prod_{j\in[n]} \mathsf{C}_{\mathtt{E}_j}\text{-}\mathsf{Z}(\gamma_j)_{\to\mathtt{T}}\right) \mathsf{H}_{\to \mathtt{T}} \left(\prod_{j\in[n]} \mathsf{C}_{\mathtt{E}_j}\text{-}\mathsf{Z}(\delta_j)_{\to\mathtt{T}}\right)|e\rangle_{\mathtt{E}}|b\rangle_{\mathtt{T}}.
\end{align*}
The above identity holds because at most one controlled gate $\mathsf{C}$-$\mathsf{U}_j$ is ``active'' for the state $|e\rangle_{\mathtt{E}}$. This allows us to group the $\mathsf{Z}$ operators from the $\mathsf{Z}$-decomposition of $\mathsf{U}_0,\dots,\mathsf{U}_{n-1}$ as shown in \Cref{fig:circuit_reorderingb}.

The idea of our quantum circuits is to compute the one-hot encoding of the input string $x\in\{0,1\}^n$, after which the correct phases $\alpha_j,\beta_j,\gamma_j,\delta_j$ can be selected. We note that Low et al.~\cite{Low2024tradingtgatesdirty} proposed a $O(n^2)$-depth circuit with $O(2^n)$ single and two-qubit gates and no ancillae to compute the one-hot encoding of $x\in\{0,1\}^n$.

\begin{figure}
    \centering
    \begin{subfigure}[b]{0.7\textwidth}
        \includegraphics[trim={1.4cm 24cm 10cm 0.7cm},clip,width=\textwidth]{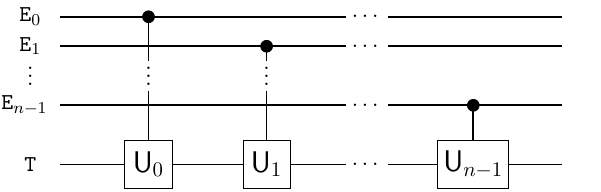}
        \caption{}
        \label{fig:circuit_reorderinga}
    \end{subfigure}
    \begin{subfigure}[b]{\textwidth}
        \includegraphics[trim={1.4cm 23.5cm 1cm 0.7cm},clip,width=\textwidth]{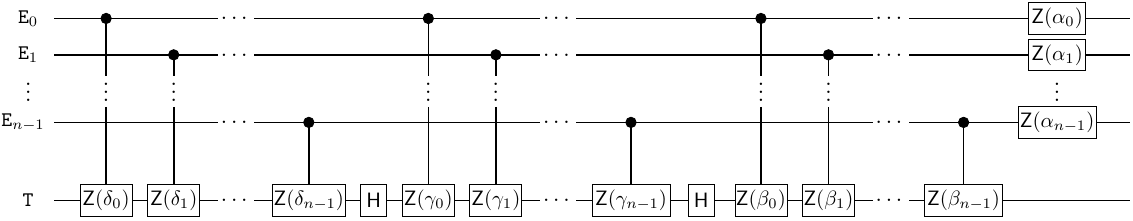}
        \caption{}
        \label{fig:circuit_reorderingb}
    \end{subfigure}
    \caption{(a) A serial circuit of unitaries $\mathsf{U}_j = e^{i\pi \alpha_j}\mathsf{Z}(\beta_j)\mathsf{H}\mathsf{Z}(\gamma_j)\mathsf{H}\mathsf{Z}(\delta_j)$ controlled on the single-qubit register $\mathtt{E}_j$ being in the $|1\rangle$ state, $j\in[n]$. (b) If the control qubits $|e\rangle_{\mathtt{E}} = \bigotimes_{j\in[n]}|e_j\rangle_{\mathtt{E}_j}$ are such that $e\in\{0,1\}^n$ has Hamming weight at most $1$ ($|e|\leq 1$), then the gates composing $\mathsf{U}_0,\dots,\mathsf{U}_{n-1}$ can be rearranged as shown in the equivalent circuit.}
    \label{fig:circuit_reordering}
\end{figure}

\subsection{Constant-depth circuits for $f$-$\mathsf{UCG}$s}

\begin{theorem}[One-hot-encoding implementation of $f$-$\mathsf{UCG}$]\label{thr:ucg_construction}
    Let $f:\{0,1\}^n\to\mathcal{U}(\mathbb{C}^{2\times 2})$ be a $(J,r)$-junta for $J\subseteq[n]$ with $|\overline{{J}}| = t$ and $r\in\mathbb{N}$. There is a $O(1)$-depth circuit for $f$-$\mathsf{UCG}$ that uses
    \begin{itemize}
        \item either $2(t+r)2^{t+r}\log(t+r) + O((t+r)2^{t+r})$ ancillae and $2n + 6(t+r)2^{t+r} + O(2^{t+r}\log(t+r))$ Fan-Out gates with arity $\leq 1+2^{t+r}$,
        \item or $3(t+r)2^{t+r} + O(2^{t+r}\log(t+r))$ ancillae and $9$ $\mathsf{GT}$ gates with arity $\leq n+(t+r)2^{t+r} + O(2^{t+r}\log(t+r))$.
    \end{itemize}
\end{theorem}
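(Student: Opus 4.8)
The plan is to realize the $f$-$\mathsf{UCG}$ in three stages---compute a one-hot encoding, apply the selected single-qubit gate in parallel, and uncompute---while exploiting the $(J,r)$-junta structure to keep the encoding only $2^{t+r}$ bits wide instead of $2^n$.

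First I would set up the compressed one-hot encoding. Since $f$ is a $(J,r)$-junta with $|\overline{J}| = t$, for every fixing $z\in\{0,1\}^t$ of the coordinates $x_{\overline{J}}$ there is a set $R_z\subseteq J$, $|R_z|\le r$, and a function $g_z:\{0,1\}^{R_z}\to\mathcal U(\mathbb C^{2\times 2})$ with $f(x)=g_{x_{\overline{J}}}(x_{R_{x_{\overline{J}}}})$. Thus $f(x)$ is determined by the $t+r$ bits $(x_{\overline{J}},x_{R_{x_{\overline{J}}}})$. For each pair $(z,w)\in\{0,1\}^t\times\{0,1\}^r$ I would compute the bit $E_{z,w}:=[x_{\overline{J}}=z]\wedge[x_{R_z}=w]$, which is an $\mathsf{AND}$ of at most $t+r$ literals (after applying $\mathsf X$ gates to encode the comparisons against $z$ and $w$). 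Exactly one of these $2^{t+r}$ bits is $1$---the one with $z=x_{\overline{J}}$, $w=x_{R_z}$---so $(E_{z,w})$ is a genuine weight-one one-hot vector, and the gate $g_z(w)$ controlled on $E_{z,w}$ selects precisely $f(x)$.

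Next I would implement the three stages in constant depth. (i) Copy the $\le n$ relevant input coordinates into the $\le(t+r)2^{t+r}$ wires feeding the $\mathsf{AND}$ gates, using one Fan-Out per coordinate (arity $\le 1+2^{t+r}$, since each coordinate is read by at most $2^{t+r}$ gates); this is $n$ Fan-Outs, with another $n$ to uncopy. (ii) Compute the $2^{t+r}$ bits $E_{z,w}$ in parallel, each by a constant-depth $\mathsf{AND}^{(t+r)}$ circuit from Fact~\ref{thr:or_constantdepth}, costing $2(t+r)\log(t+r)+O(t+r)$ ancillae and $6(t+r)+O(\log(t+r))$ Fan-Outs apiece. (iii) Apply $\prod_{(z,w)}\mathsf C_{E_{z,w}}\text{-}(g_z(w))_{\to\mathtt T}$; because the one-hot vector has weight at most $1$, I may use the reordering of Figure~\ref{fig:circuit_reordering}(b) to split each $g_z(w)=e^{i\pi\alpha}\mathsf Z(\beta)\mathsf H\mathsf Z(\gamma)\mathsf H\mathsf Z(\delta)$ into three commuting layers of controlled-$\mathsf Z$'s on $\mathtt T$ (plus the $\mathsf Z(\alpha)$ phases acting directly on the control wires). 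Each commuting layer is executed via the cat-state trick of H\o{}yer and \v{S}palek---one Fan-Out to expand $\mathtt T$ into a cat state, the controlled phases on distinct cat qubits, one Fan-Out to fold it back---so a constant number of extra Fan-Outs suffices. Finally I would uncompute (ii) and (i). Summing the contributions gives the stated Fan-Out and ancilla bounds, the $2n$ term coming from input (un)copying and the $(t+r)2^{t+r}$ terms from the $\mathsf{AND}$ layer.

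For the $\mathsf{GT}$ version I would reuse the same circuit but replace each layer of disjoint/commuting Fan-Outs by a single $\mathsf{GT}$ gate via Claim~\ref{claim:fanoutasGT} (equivalently, invoke Lemma~\ref{lem:fanout_to_gt_circuits}, using the $\mathsf{GT}$-native $\mathsf{AND}$ circuit of Fact~\ref{thr:or_constantdepth_gt}); counting the constantly many Fan-Out layers in the construction yields the $9$ $\mathsf{GT}$ gates and the $O(1)$-factor change in ancillae. I expect the main obstacle to be the compression bookkeeping: verifying that the $z$-dependent choice of $R_z$ still yields a single coherent weight-one encoding and that $g_z(w)$ is well defined on every branch, and then arranging the copy/compute/apply/uncompute so that each stage is genuinely depth-$O(1)$ and the Fan-Out arities and ancilla multiplicities match the claimed bounds. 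The parallel application of the controlled single-qubit gates (stage~(iii)) is the other delicate point, since it relies on the weight-at-most-$1$ property to legitimately commute the $\mathsf Z$-decomposition factors across all $2^{t+r}$ branches.
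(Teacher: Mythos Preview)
Your proposal is correct and follows essentially the same approach as the paper: compress the one-hot encoding to $2^{t+r}$ bits via the $(J,r)$-junta structure, compute each bit as an $\mathsf{AND}$ of at most $t+r$ literals using Fact~\ref{thr:or_constantdepth}/\ref{thr:or_constantdepth_gt}, exploit the weight-one property to reorder the $\mathsf{Z}$-decomposition factors as in Figure~\ref{fig:circuit_reordering}(b), apply the three commuting controlled-$\mathsf{Z}$ layers via the cat-state/Fan-Out (or $\mathsf{GT}$) trick, and uncompute. The paper's version differs only cosmetically---it indexes the encoding bits by $(z,j)$ with $j\in\{0,1\}^{|J_z|}$ rather than your $(z,w)\in\{0,1\}^t\times\{0,1\}^r$, and it tracks the $\mathsf{GT}$-layer count explicitly (pushing half of each $\mathsf{AND}$'s internal uncomputation to the final cleanup) to land on exactly~$9$.
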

\begin{proof}
    Given the initial state $|x\rangle_{\mathtt{I}}|b\rangle_{\mathtt{T}}$ for $x\in\{0,1\}^n$ and $b\in\{0,1\}$, we wish to perform the mapping $|x\rangle_{\mathtt{I}}|b\rangle_{\mathtt{T}} \mapsto |x\rangle_{\mathtt{I}}f(x)|b\rangle_{\mathtt{T}}$. For each $z\in\{0,1\}^{t}$, let $J_z\subseteq J$, with $|J_z| \leq r$, be the subset of coordinates that $f_{J|z}$ depends on. For $z\in\{0,1\}^t$, let $g_z:\{0,1\}^{|J_z|}\to \mathcal{U}(\mathbb{C}^{2\times 2})$ be such that $f_{J|z}(x_J) = g_z(x_{J_z})$. In the following, split the register $\mathtt{I}$ into registers $\mathtt{\overline{J}}$ and $\mathtt{J}$ such that $\mathtt{\overline{J}}$ contains the coordinates of $x$ in $\overline{J}$ and $\mathtt{J}$ contains the coordinates of $x$ in $J$, i.e., $|x\rangle_{\mathtt{I}} = |x_{\overline{J}}\rangle_{\mathtt{\overline{J}}}|x_{J}\rangle_{\mathtt{J}}$. For $i\in J$, let $m_i := \sum_{z\in\{0,1\}^t:i\in J_z} 2^{|J_z|} \leq 2^r\cdot|\{z\in\{0,1\}^t:i\in J_z\}|$ and $m:= \sum_{z\in\{0,1\}^t} 2^{|J_z|} \leq 2^{t+r}$. Let $f(x) = e^{i\pi\alpha(x)}\mathsf{Z}(\beta(x))\mathsf{H}\mathsf{Z}(\gamma(x))\mathsf{H}\mathsf{Z}(\delta(x))$ be the $\mathsf{Z}$-decomposition of each single-qubit gate $f(x)$, $x\in\{0,1\}^n$. Equivalently, write $g_z(j) = e^{i\pi\alpha_{z}(j)}\mathsf{Z}(\beta_{z}(j))\mathsf{H}\mathsf{Z}(\gamma_{z}(j))\mathsf{H}\mathsf{Z}(\delta_{z}(j))$ for the $\mathsf{Z}$-decomposition of the single-qubit gate $g_z(j)$, $z\in\{0,1\}^t$ and $j\in\{0,1\}^{|J_z|}$. From $f_{J|z}(x_J) = g_z(x_{J_z})$ we can establish the correspondence that $\alpha(x) = \alpha_z(j)$ for all $z\in\{0,1\}^t$, $j\in\{0,1\}^{|J_z|}$, and $x\in\{0,1\}^n$ such that $x_{\overline{J}} = z$ and $x_{J_z} = j$ (and similarly for $\beta,\gamma,\delta$).
    
    The main idea of the circuits is to compute the one-hot encoding of a compressed version of $x$. Naively, one could compute the one-hot encoding of the whole string $x$. However, by breaking $x$ into the sub-strings $x_{\overline{J}}$ and $x_J$ indexed by $\overline{J}$ and $J$, respectively, it is possible to compute the one-hot encoding of $x_{\overline{J}}$ separately from the one-hot encoding of $x_{J}$. In principle, this would not offer any advantage, but since $f_{J|z}$ depends only on a few coordinates of $x_J$, we can compute the one-hot encoding of the much shorter sub-string $x_{J_z}$ for $z=x_{\overline{J}}$ instead. Since we do not know $x_{\overline{J}}$, we must compute the one-hot encoding of $x_{J_z}$ for all $z\in\{0,1\}^t$. We first consider the Fan-Out-based circuit (see \Cref{fig:ucg_fanout}) and then the one based on $\mathsf{GT}$ gates. The algorithm is as follows:
    \begin{figure}
        \centering
        \includegraphics[trim={1.4cm 1.0cm 3.4cm 0.7cm},clip,width=0.79\textwidth]{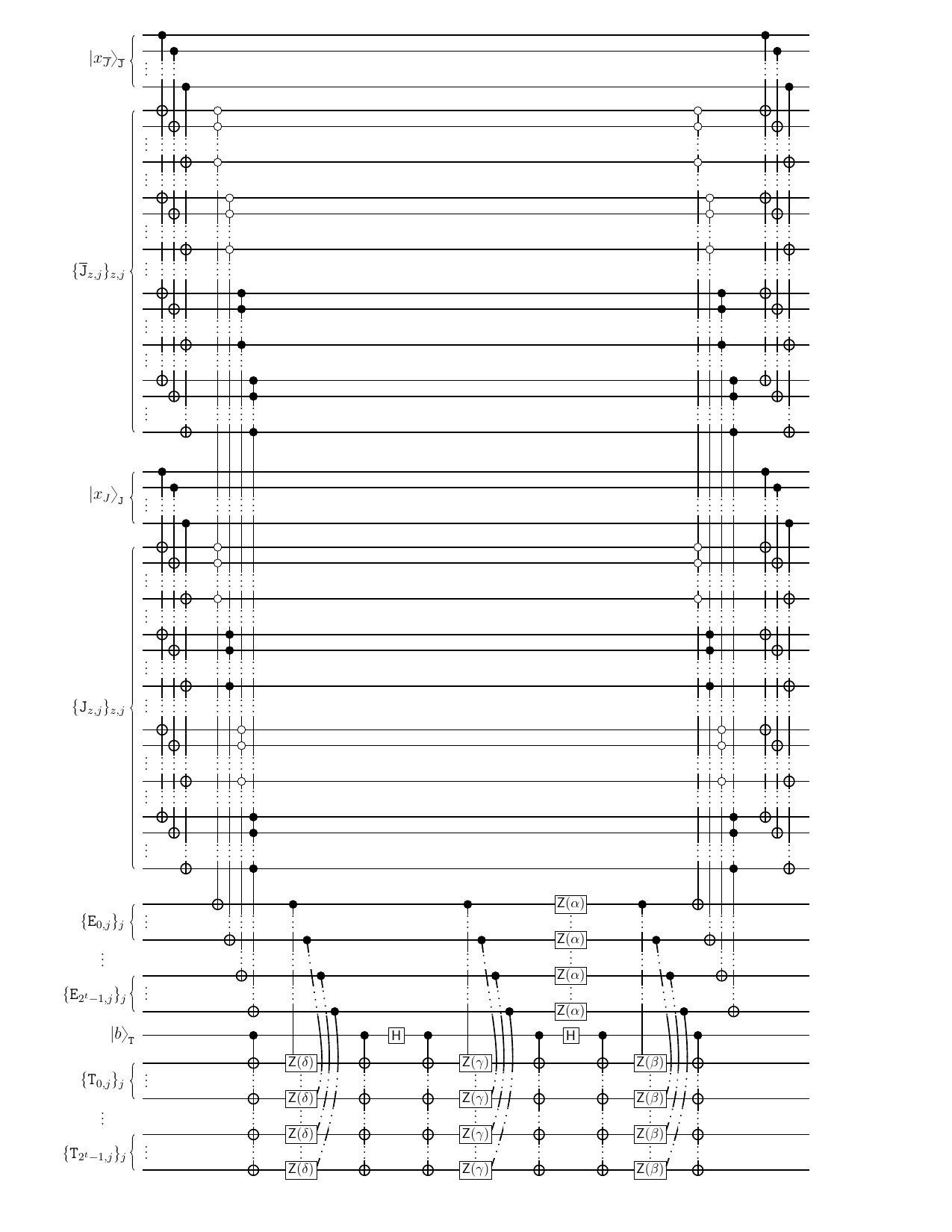}
        \caption{The circuit for an $f$-$\mathsf{UCG}^{(n)}$ using Fan-Out gates, where $f$ is a $(J,r)$-junta with $|\overline{J}| = t$. For simplicity, we include targets from $|x_i\rangle_{\mathtt{J}}$ onto all registers $\{\mathtt{J}_{z,j}\}_{z,j}$, but in reality $x_i$ is copied onto the registers $\{\mathtt{J}_{z,j}\}_j$ for all $z\in\{0,1\}^t$ such that $i\in J_z$, where $J_z$ is the set of coordinated that $f_{J|z}$ depends on. Moreover, we omit the indices in the parameters $\alpha_{z}(j)$, $\beta_{z}(j)$, $\gamma_{z}(j)$, $\delta_{z}(j)$. %The registers $\mathtt{E}_{z,j}$ and $\mathtt{T}_{z,j}$ have both size $|J_z|$.
        }
        \label{fig:ucg_fanout}
    \end{figure}
    \begin{enumerate}
        \item For each $z\in\{0,1\}^t$ and $j\in\{0,1\}^{|J_z|}$, attach a $t$-qubit ancillary register $\mathtt{\overline{J}}_{z,j}$ and copy the contents of the $|x_{\overline{J}}\rangle_{\mathtt{\overline{J}}}$ register onto it. Every qubit of $|x_{\overline{J}}\rangle_{\mathtt{\overline{J}}}$ is thus copied $m$ times.

        \item For each $z\in\{0,1\}^t$ and $j\in\{0,1\}^{|J_z|}$, attach a $|J_z|$-qubit ancillary register $\mathtt{J}_{z,j}$. The registers $\mathtt{J}_{z,j}$, for $j\in\{0,1\}^{|J_z|}$, will be used to compute the one-hot encoding of $x_{J_z}$. For each $i\in J$ in parallel, copy  $m_i$ number of times the qubit $|x_i\rangle_{\mathtt{J}}$ onto the registers $\{\mathtt{J}_{z,j}\}_j$, where $j\in\{0,1\}^{|J_z|}$, for all $z$ such that $i\in J_z$. Steps~$1$ and~$2$ lead to
        \begin{align*}
            |x\rangle_{\mathtt{I}}|b\rangle_{\mathtt{T}} \mapsto   |x\rangle_{\mathtt{I}}|b\rangle_{\mathtt{T}} \left(\bigotimes_{z\in\{0,1\}^t} \bigotimes_{j\in\{0,1\}^{|J_z|}}|x_{\overline{J}}\rangle_{\mathtt{\overline{J}}_{z,j}} |x_{J_z}\rangle_{\mathtt{J}_{z,j}} \right).
        \end{align*}
        
        \item For each $z\in\{0,1\}^t$ and $j\in\{0,1\}^{|J_z|}$, apply the gate $\bigotimes_{k\in[t]} \mathsf{X}^{\overline{z}_k}$ onto $|x_{\overline{J}}\rangle_{\mathtt{\overline{J}}_{z,j}}$ and the gate $\bigotimes_{k\in[|J_z|]} \mathsf{X}^{\overline{j}_k}$ onto $|x_{J_z}\rangle_{\mathtt{{J}}_{z,j}}$. This leads to
        \begin{align*}
            |x\rangle_{\mathtt{I}}|b\rangle_{\mathtt{T}} \left(\bigotimes_{z\in\{0,1\}^t}\bigotimes_{j\in\{0,1\}^{|J_z|}} |x_{\overline{J}}\oplus\overline{z}\rangle_{\mathtt{\overline{J}}_{z,j}}|x_{J_z}\oplus\overline{j}\rangle_{\mathtt{J}_{z,j}} \right).
        \end{align*}

        \item For each $z\in\{0,1\}^t$ and $j\in\{0,1\}^{|J_z|}$, attach a single-qubit ancillary register $\mathtt{E}_{z,j}$ and apply an $\mathsf{AND}^{(t+|J_z|)}_{\{\mathtt{\overline{J}}_{z,j},\mathtt{J}_{z,j}\}\to\mathtt{E}_{z,j}}$ gate from registers $\mathtt{\overline{J}}_{z,j}$ and $\mathtt{J}_{z,j}$ onto register $\mathtt{E}_{z,j}$ to obtain
        \begin{align*}
            &|x\rangle_{\mathtt{I}}|b\rangle_{\mathtt{T}} \left(\bigotimes_{z\in\{0,1\}^t} \bigotimes_{j\in\{0,1\}^{|J_z|}} |x_{\overline{J}}\oplus\overline{z}\rangle_{\mathtt{\overline{J}}_{z,j}}|x_{J_z}\oplus\overline{j}\rangle_{\mathtt{J}_{z,j}} |0\rangle_{\mathtt{E}_{z,j}}\right) \\
            \mapsto~ &|x\rangle_{\mathtt{I}}|b\rangle_{\mathtt{T}} \left(\bigotimes_{z\in\{0,1\}^t} \bigotimes_{j\in\{0,1\}^{|J_z|}} |x_{\overline{J}}\oplus\overline{z}\rangle_{\mathtt{\overline{J}}_{z,j}}|x_{J_z}\oplus\overline{j}\rangle_{\mathtt{J}_{z,j}} \big|{\bigwedge}_{k\in[t]}(x_{\overline{J}}\oplus\overline{z})_k\cdot {\bigwedge}_{l\in[|J_z|]}(x_{J_z}\oplus\overline{j})_l\big\rangle_{\mathtt{E}_{z,j}}\right)\\
            =~ &|x\rangle_{\mathtt{I}}|b\rangle_{\mathtt{T}} \left(\bigotimes_{z\in\{0,1\}^t} \bigotimes_{j\in\{0,1\}^{|J_z|}} |x_{\overline{J}}\oplus\overline{z}\rangle_{\mathtt{\overline{J}}_{z,j}}|x_{J_z}\oplus\overline{j}\rangle_{\mathtt{J}_{z,j}} |e(x_{\overline{J}})_z\cdot e(x_{J_z})_j\rangle_{\mathtt{E}_{z,j}}\right),
        \end{align*}
        where $e(x_{\overline{J}}) \in \{0,1\}^{2^t}$ and $e(x_{J_z}) \in \{0,1\}^{2^{|J_z|}}$ are the one-hot encodings of $x_{\overline{J}}$ and $x_{J_z}$, respectively. Note that $e(x_{\overline{J}})_z$ is the $z$-th bit of $e(x_{\overline{J}})$ and $e(x_{J_z})_j$ is $j$-th bit of $e(x_{J_z})$.

        \item [5a.] For each $z\in\{0,1\}^t$ and $j\in\{0,1\}^{|J_z|}$, attach a single-qubit ancillary register $\mathtt{T}_{z,j}$. Apply a $(1+m)$-arity Fan-Out gate $\mathsf{FO}^{(1+m)}_{\mathtt{T}\to\{\mathtt{T}_{z,j}\}_{z,j}}$ from register $\mathtt{T}$ onto registers $\{\mathtt{T}_{z,j}\}_{z,j}$, $z\in\{0,1\}^t$ and $j\in\{0,1\}^{|J_z|}$ (remember that $m:= \sum_{z\in\{0,1\}^t}2^{|J_z|}$). For each $z\in\{0,1\}^t$ and $j\in\{0,1\}^{|J_z|}$, apply a $\mathsf{C}_{\mathtt{E}_{z,j}}$-$\mathsf{Z}(\delta_{z}(j))_{\to\mathtt{T}_{z,j}}$ gate controlled on register $\mathtt{E}_{z,j}$ onto register $\mathtt{T}_{z,j}$. Finally, apply $\mathsf{FO}^{(1+m)}_{\mathtt{T}\to\{\mathtt{T}_{z,j}\}_{z,j}}$ again. We shall omit the registers $\mathtt{\overline{J}}_{z,j}$ and $\mathtt{J}_{z,j}$ from now on for simplicity. This chain of operations leads to
        \begin{align*}
            &|x\rangle_{\mathtt{I}}|b\rangle_{\mathtt{T}} \left(\bigotimes_{z\in\{0,1\}^t} \bigotimes_{j\in\{0,1\}^{|J_z|}}|e(x_{\overline{J}})_z e(x_{J_z})_j\rangle_{\mathtt{E}_{z,j}}|b\rangle_{\mathtt{T}_{z,j}}\right)\\
            \mapsto~ &|x\rangle_{\mathtt{I}}|b\rangle_{\mathtt{T}} \left(\bigotimes_{z\in\{0,1\}^t} \bigotimes_{j\in\{0,1\}^{|J_z|}} |e(x_{\overline{J}})_z e(x_{J_z})_j\rangle_{\mathtt{E}_{z,j}}\mathsf{Z}(\delta_{z}(j))^{e(x_{\overline{J}})_z e(x_{J_z})_j}|b\rangle_{\mathtt{T}_{z,j}}\right)\\
            \mapsto~ &|x\rangle_{\mathtt{I}}\mathsf{Z}(\delta(x))|b\rangle_{\mathtt{T}} \left(\bigotimes_{z\in\{0,1\}^t} \bigotimes_{j\in\{0,1\}^{|J_z|}} |e(x_{\overline{J}})_z e(x_{J_z})_j\rangle_{\mathtt{E}_{z,j}}\right),
        \end{align*}
        where we have used the definition of one-hot encoding, i.e., $e(x_{\overline{J}})_z = 1$ if and only if $z=x_{\overline{J}}$ and $e(x_{J_z})_j = 1$ if and only if $j=x_{J_z}$, and also that $\delta_{z}(j) = \delta(x)$ for $z=x_{\overline{J}}$ and $j=x_{J_z}$.
        \item [5b.] Apply a $\mathsf{H}_{\to\mathtt{T}}$ gate to register $\mathtt{T}$ followed by a $(1+m)$-arity Fan-Out gate $\mathsf{FO}^{(1+m)}_{\mathtt{T}\to\{\mathtt{T}_{z,j}\}_{z,j}}$ from register $\mathtt{T}$ to registers $\{\mathtt{T}_{z,j}\}_{z,j}$. For each $z\in\{0,1\}^t$ and $j\in\{0,1\}^{|J_z|}$, apply a $\mathsf{C}_{\mathtt{E}_{z,j}}$-$\mathsf{Z}(\gamma_{z}(j))_{\to\mathtt{T}_{z,j}}$ gate controlled on register $\mathtt{E}_{z,j}$ onto register $\mathtt{T}_{z,j}$. Finally, apply $\mathsf{FO}^{(1+m)}_{\mathtt{T}\to\{\mathtt{T}_{z,j}\}_{z,j}}$ again. For simplicity, write $\mathsf{H} \mathsf{Z}(\delta(x))|b\rangle_{\mathtt{T}} = r_{b,x}|0\rangle_{\mathtt{T}} + s_{b,x}|1\rangle_{\mathtt{T}}$. This chain of operations yields
        \begin{align*}
            &~~~~|x\rangle_{\mathtt{I}}\mathsf{H}\mathsf{Z}(\delta(x))|b\rangle_{\mathtt{T}} \left(\bigotimes_{z\in\{0,1\}^t} \bigotimes_{j\in\{0,1\}^{|J_z|}} |e(x_{\overline{J}})_z e(x_{J_z})_j\rangle_{\mathtt{E}_{z,j}}\right) \\
            &\begin{multlined}[b][0.94\textwidth]
                \mapsto r_{b,x}|x\rangle_{\mathtt{I}}|0\rangle_{\mathtt{T}}\bigotimes_{z\in\{0,1\}^t}\bigotimes_{j\in\{0,1\}^{|J_z|}}\big(|e(x_{\overline{J}})_z e(x_{J_z})_j\rangle_{\mathtt{E}_{z,j}}|0\rangle_{\mathtt{T}_{z,j}}\big) \\
                + s_{b,x}|x\rangle_{\mathtt{I}}|1\rangle_{\mathtt{T}}\bigotimes_{z\in\{0,1\}^t}\bigotimes_{j\in\{0,1\}^{|J_z|}}\big(|e(x_{\overline{J}})_z e(x_{J_z})_j\rangle_{\mathtt{E}_{z,j}}|1\rangle_{\mathtt{T}_{z,j}}\big)
            \end{multlined}\\
            &\begin{multlined}[b][0.94\textwidth]
                \mapsto r_{b,x}|x\rangle_{\mathtt{I}}|0\rangle_{\mathtt{T}}\bigotimes_{z\in\{0,1\}^t}\bigotimes_{j\in\{0,1\}^{|J_z|}}\big(|e(x_{\overline{J}})_z e(x_{J_z})_j\rangle_{\mathtt{E}_{z,j}}\mathsf{Z}(\gamma_{z}(j))^{e(x_{\overline{J}})_z e(x_{J_z})_j}|0\rangle_{\mathtt{T}_{z,j}}\big) \\
                + s_{b,x}|x\rangle_{\mathtt{I}}|1\rangle_{\mathtt{T}}\bigotimes_{z\in\{0,1\}^t}\bigotimes_{j\in\{0,1\}^{|J_z|}}\big(|e(x_{\overline{J}})_z e(x_{J_z})_j\rangle_{\mathtt{E}_{z,j}}\mathsf{Z}(\gamma_{z}(j))^{e(x_{\overline{J}})_z e(x_{J_z})_j}|1\rangle_{\mathtt{T}_{z,j}}\big)
            \end{multlined}\\
            &\mapsto |x\rangle_{\mathtt{I}}\mathsf{Z}(\gamma(x))\mathsf{H}\mathsf{Z}(\delta(x))|b\rangle_{\mathtt{T}} \left(\bigotimes_{z\in\{0,1\}^t} \bigotimes_{j\in\{0,1\}^{|J_z|}} |e(x_{\overline{J}})_z e(x_{J_z})_j\rangle_{\mathtt{E}_{z,j}}\right).
        \end{align*}

        \item [5c.] Apply a $\mathsf{H}_{\to\mathtt{T}}$ gate to register $\mathtt{T}$ followed by a $(1+m)$-arity Fan-Out gate $\mathsf{FO}^{(1+m)}_{\mathtt{T}\to\{\mathtt{T}_{z,j}\}_{z,j}}$ from register $\mathtt{T}$ to registers $\{\mathtt{T}_{z,j}\}_{z,j}$. For each $z\in\{0,1\}^t$ and $j\in\{0,1\}^{|J_z|}$, apply a $\mathsf{C}_{\mathtt{E}_{z,j}}$-$\mathsf{Z}(\beta_{z}(j))_{\to\mathtt{T}_{z,j}}$ gate controlled on register $\mathtt{E}_{z,j}$ onto register $\mathtt{T}_{z,j}$ followed by a $\mathsf{Z}(\alpha_{z}(j))_{\to\mathtt{E}_{z,j}}$ gate applied onto register $\mathtt{E}_{z,j}$. Finally, apply $\mathsf{FO}^{(1+m)}_{\mathtt{T}\to\{\mathtt{T}_{z,j}\}_{z,j}}$ again. Similarly to the previous step, we get (consider only registers $\mathtt{I}$ and $\mathtt{T}$ for simplicity)
        \begin{align*}
            |x\rangle_{\mathtt{I}}\mathsf{Z}(\gamma(x))\mathsf{H}\mathsf{Z}(\delta(x))|b\rangle_{\mathtt{T}} \mapsto |x\rangle_{\mathtt{I}}e^{i\pi\alpha(x)}\mathsf{Z}(\beta(x))\mathsf{H}\mathsf{Z}(\gamma(x))\mathsf{H}\mathsf{Z}(\delta(x))|b\rangle_{\mathtt{T}} = |x\rangle_{\mathtt{I}}f(x)|b\rangle_{\mathtt{T}}.
        \end{align*}
        \item [6.] Uncompute Steps $4$, $3$, $2$, and $1$. This leads to the desired state $|x\rangle_{\mathtt{I}}f(x)|b\rangle_{\mathtt{T}}$.
    \end{enumerate}

    We now consider the $\mathsf{GT}$-gate-based circuit (see \Cref{fig:ucg_gt}), which is basically the same as the Fan-Out-based circuit, but replacing Steps $5$a, $5$b, and $5$c with the following Step~$5$:
    \begin{figure}
        \centering
        \includegraphics[trim={1.4cm 5.3cm 0.7cm 0.7cm},clip,width=\textwidth]{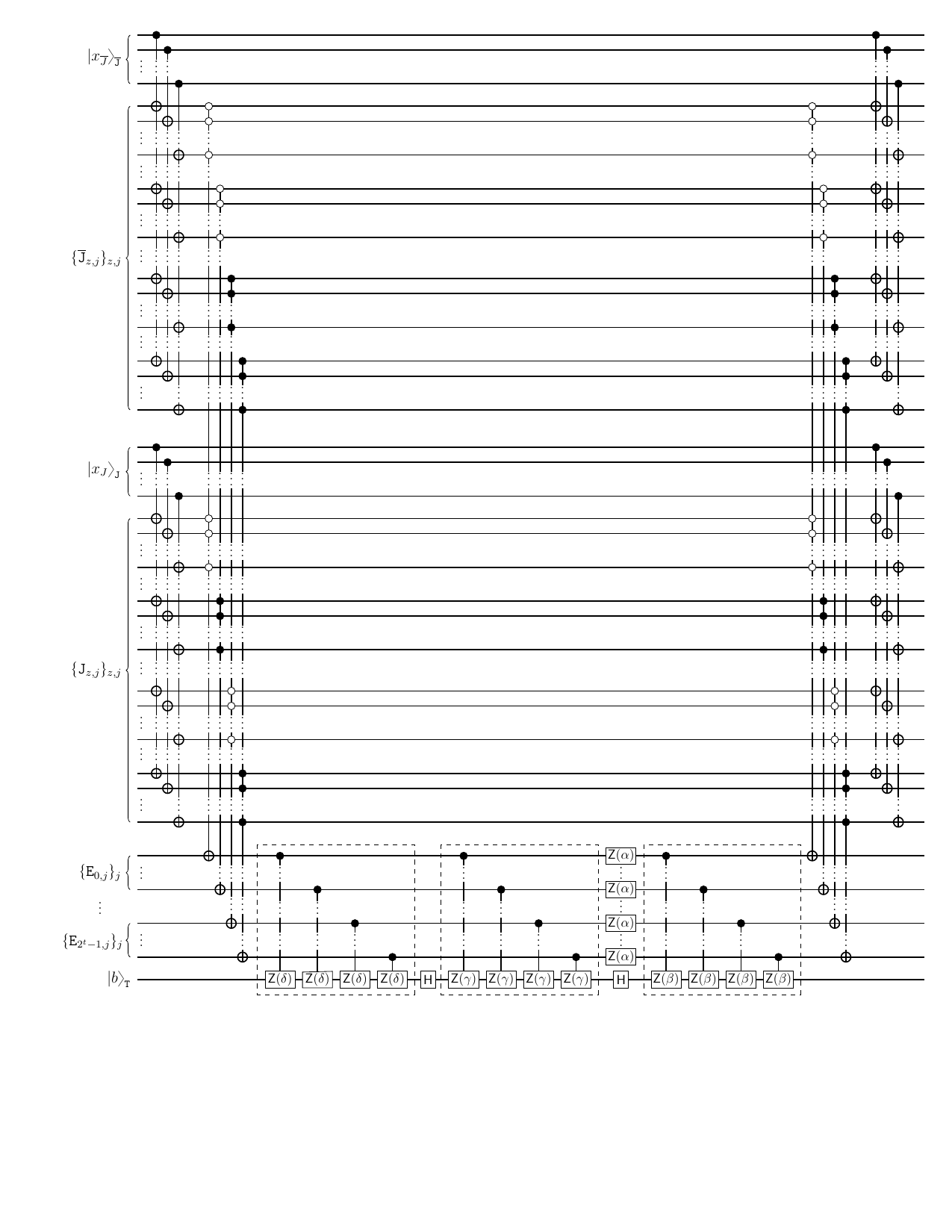}
        \caption{The circuit for an $f$-$\mathsf{UCG}^{(n)}$ using $\mathsf{GT}$ gates, where $f$ is a $(J,r)$-junta with $|\overline{J}| = t$. We highlight the $\mathsf{GT}$ gates inside the dashed boxes. For simplicity, we include targets from $|x_i\rangle_{\mathtt{J}}$ onto all registers $\{\mathtt{J}_{z,j}\}_{z,j}$, but in reality $x_i$ is copied onto the registers $\{\mathtt{J}_{z,j}\}_j$ for all $z\in\{0,1\}^t$ such that $i\in J_z$, where $J_z$ is the set of coordinated that $f_{J|z}$ depends on. Moreover, we omit the indices in the parameters $\alpha_{z}(j)$, $\beta_{z}(j)$, $\gamma_{z}(j)$, $\delta_{z}(j)$.}
        \label{fig:ucg_gt}
    \end{figure}
    \begin{enumerate}
        \item [5.] Apply the gate
        \begin{align*}
        \begin{multlined}[b][0.94\textwidth]
            \left(\prod_{z\in\{0,1\}^t}\prod_{j\in\{0,1\}^{|J_z|}}\mathsf{Z}(\alpha_{z}(j))_{\to\mathtt{E}_{z,j}} \right)\left(\prod_{z\in\{0,1\}^t}\prod_{j\in\{0,1\}^{|J_z|}}\mathsf{C}_{\mathtt{E}_{z,j}}\text{-}\mathsf{Z}(\beta_{z}(j))_{\to\mathtt{T}}\right)\mathsf{H}_{\to\mathtt{T}}\\
            \cdot\left(\prod_{z\in\{0,1\}^t}\prod_{j\in\{0,1\}^{|J_z|}}\mathsf{C}_{\mathtt{E}_{z,j}}\text{-}\mathsf{Z}(\gamma_{z}(j))_{\to\mathtt{T}}\right)\mathsf{H}_{\to\mathtt{T}}\left(\prod_{z\in\{0,1\}^t}\prod_{j\in\{0,1\}^{|J_z|}}\mathsf{C}_{\mathtt{E}_{z,j}}\text{-}\mathsf{Z}(\delta_{z}(j))_{\to\mathtt{T}}\right)
        \end{multlined}
        \end{align*}
        using $3$ $\mathsf{GT}$ gates (one for each $\prod_{z\in\{0,1\}^t}\prod_{j\in\{0,1\}^{|J_z|}}\mathsf{C}_{\mathtt{E}_{z,j}}\text{-}\mathsf{Z}(\cdot)_{\to\mathtt{T}}$). Since $|e(x_{\overline{{J}}})| = |e(x_{J_z})| = 1$, this leads to $|x\rangle_{\mathtt{I}}f(x)|b\rangle_{\mathtt{T}}$ up to the ancillary registers.
    \end{enumerate}

    We now analyse the resources required for each step:
    \begin{itemize}
        \item Step $1$: the registers $\mathtt{\overline{J}}_{z,j}$, for $z\in\{0,1\}^t$ and $j\in\{0,1\}^{|J_z|}$, use at most $t2^{t+r}$ ancillae and copying the register $\mathtt{\overline{J}}$ requires either $t$ Fan-Out gates with arity at most $1+2^{t+r}$ or $1$ $\mathsf{GT}$ gate with arity at most $t(1+2^{t+r})$;
        
        \item Step $2$: the registers $\mathtt{J}_{z,j}$, for $z\in\{0,1\}^t$ and $j\in\{0,1\}^{|J_z|}$, use at most $r2^{t+r}$ ancillae and copying the register $\mathtt{J}$ requires either $|J|=n-t$ Fan-Out gates with arity at most $1+\max_{i\in J}m_i \leq 1+2^{t+r}$ or $1$ $\mathsf{GT}$ (which can be absorbed by the one from Step $1$) with arity at most $n-t + \sum_{i\in J} m_i \leq n-t + r2^{t+r}$;
        
        \item Step $4$: the registers $\mathtt{E}_{z,t}$, for $z\in\{0,1\}^t$ and $j\in\{0,1\}^{|J_z|}$, use at most $2^{t+r}$ ancillae. The $m\leq 2^{t+r}$ $\mathsf{AND}^{(t+|J_z|)}_{\{\mathtt{\overline{J}}_{z,j},\mathtt{J}_{z,j}\}\to\mathtt{E}_{z,j}}$ gates require either $2(t+r)2^{t+r}\log{(t+r)} + O((t+r)2^{t+r})$ ancillae by using the construction based on Fan-Out gates (\Cref{thr:or_constantdepth}) or $2(t+r)2^{t+r} + O(2^{t+r}\log(t+r))$ ancillae by using the construction based on $\mathsf{GT}$ gates (\Cref{thr:or_constantdepth_gt}). Naively, one would expect the $m\leq 2^{t+r}$ $\mathsf{AND}^{(t+|J_z|)}_{\{\mathtt{\overline{J}}_{z,j},\mathtt{J}_{z,j}\}\to\mathtt{E}_{z,j}}$ gates to require either $6(t+r)2^{t+r} + O(2^{t+r}\log(t+r))$ Fan-Out gates with arity at most $2(t+r)$ (\Cref{thr:or_constantdepth}) or $4$ $\mathsf{GT}$ gates with arity at most $(t+r)2^{t+r} + O(2^{t+r}\log(t+r))$ (\Cref{thr:or_constantdepth_gt}), but we can postpone their inner uncomputation part until Step~$6$ and carry over all the required ancillae. This means that Step~$4$ requires only $3(t+r)2^{t+r} + O(2^{t+r}\log(t+r))$ Fan-Out gates or $2$ $\mathsf{GT}$ gates;
        
        \item Step $5$: the Fan-Out-based circuit requires at most $2^{t+r}$ ancillae for the registers $\mathtt{T}_{z,j}$, where $z\in\{0,1\}^t$ and $j\in\{0,1\}^{|J_z|}$, and $6$ Fan-Out gates with arity at most $1+2^{t+r}$. The $\mathsf{GT}$-gate-based circuit does not require any ancillae, and only $3$ $\mathsf{GT}$ gates with arity $\leq 1+2^{t+r}$;
        
        \item Step $6$: uncomputing uses either $n + 3(t+r)2^{t+r} + O(2^{t+r}\log(t+r))$ Fan-Outs or $3$ $\mathsf{GT}$ gates.
    \end{itemize} 

    In total, we require either $2(t+r)2^{t+r}\log(t+r) + O((t+r)2^{t+r})$ ancillae and $2n + 6(t+r)2^{t+r} + O(2^{t+r}\log(t+r))$ Fan-Out gates with arity at most $1+2^{t+r}$, or $3(t+r)2^{t+r} + O(2^{t+r}\log(t+r))$ ancillae and $9$ $\mathsf{GT}$ gates with arity at most $n+(t+r)2^{t+r} + O(2^{t+r}\log(t+r))$.
\end{proof}

\subsection{Constant-depth circuits for $f$-$\mathsf{FIN}$}
\label{sec:fin_onehot}

The circuits from the previous section can be used to implement an $f$-$\mathsf{FIN}$, since they are a special case of $f$-$\mathsf{UCG}$s, as explained before \Cref{def:ffingate}. Nonetheless, the circuits from the previous section can be simplified due to their simpler structure, i.e., the $\mathsf{Z}$-decomposition of an $f$-$\mathsf{FIN}$ is $\mathsf{H}\mathsf{Z}(f(x))\mathsf{H} = \mathsf{X}^{f(x)}$. In particular, the controlled gates $\mathsf{H}_{\to\mathtt{T}}\mathsf{C}_{\mathtt{E}_{z,j}}$-$\mathsf{Z}(\gamma_z(j))_{\rightarrow \mathtt{T}}\mathsf{H}_{\to\mathtt{T}} = \mathsf{C}_{\mathtt{E}_{z,j}}$-$\mathsf{X}_{\rightarrow \mathtt{T}}^{\gamma_z(j)}$, where $\gamma_z:\{0,1\}^{|J_z|}\to\{0,1\}$, that arise from the $\mathsf{Z}$-decomposition can be replaced by a single $\mathsf{PARITY}$ gate (recall that the $\mathsf{PARITY}$ gate can be implemented by a single Fan-Out gate (\Cref{fact:fanoutparity})).
We show how this can be done in the next result.
\begin{theorem}[One-hot-encoding implementation of $f$-$\mathsf{FIN}$]\label{thr:fin_onehot}
    Let $f:\{0,1\}^n\to\{0,1\}$ be a $(J,r)$-junta for $J\subseteq[n]$ with $|\overline{{J}}| = t$ and $r\in\mathbb{N}$. There is a $O(1)$-depth circuit for $f$-$\mathsf{FIN}$ that uses
    \begin{itemize}
        \item either $2(t+r)2^{t+r}\log(t+r) + O((t+r)2^{t+r})$ ancillae and $2n + 6(t+r)2^{t+r} + O(2^{t+r}\log(t+r))$ Fan-Out gates with arity $\leq 1+2^{t+r}$,
        \item or $3(t+r)2^{t+r} + O(2^{t+r}\log(t+r))$ ancillae and $6$ $\mathsf{GT}$ gates with arity $\leq n+(t+r)2^{t+r} + O(2^{t+r}\log(t+r))$.
    \end{itemize}
\end{theorem}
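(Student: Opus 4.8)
The plan is to reuse the circuit of Theorem~\ref{thr:ucg_construction} essentially verbatim and only exploit that the $\mathsf{Z}$-decomposition of an $f$-$\mathsf{FIN}$ is trivial. Writing $f(x)=\mathsf{X}^{f(x)}=\mathsf{H}\mathsf{Z}(f(x))\mathsf{H}$, the components $\alpha,\beta,\delta$ all vanish and only $\gamma=f$ survives, and it is Boolean-valued. Hence Steps~1--4 and the final uncomputation of the one-hot-encoding circuit — copying $x_{\overline{J}}$ and the relevant coordinates of $x_J$ into the registers $\mathtt{\overline{J}}_{z,j},\mathtt{J}_{z,j}$, applying the $\mathsf{X}$-flips, and computing the one-hot bits $e(x_{\overline{J}})_z\, e(x_{J_z})_j$ into the ancillae $\mathtt{E}_{z,j}$ — are carried over unchanged. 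All of the simplification happens in Step~5.

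First I would record the single identity that drives everything: since $\gamma_z(j)\in\{0,1\}$,
\[
\mathsf{H}_{\to\mathtt{T}}\,\mathsf{C}_{\mathtt{E}_{z,j}}\text{-}\mathsf{Z}(\gamma_z(j))_{\to\mathtt{T}}\,\mathsf{H}_{\to\mathtt{T}}=\mathsf{C}_{\mathtt{E}_{z,j}}\text{-}\mathsf{X}^{\gamma_z(j)}_{\to\mathtt{T}},
\]
so the three Hadamard-separated controlled-phase layers of the $f$-$\mathsf{UCG}$ circuit collapse to the single layer $\prod_{(z,j):\gamma_z(j)=1}\mathsf{C}_{\mathtt{E}_{z,j}}\text{-}\mathsf{X}_{\to\mathtt{T}}$. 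Because the one-hot property forces exactly one register $\mathtt{E}_{z,j}$ (namely $z=x_{\overline{J}}$, $j=x_{J_z}$) to hold a $1$, this product applies $\mathsf{X}^{f(x)}$ to $\mathtt{T}$, as required. This layer is precisely a $\mathsf{PARITY}$ gate $\mathsf{PARITY}_{\{\mathtt{E}_{z,j}:\gamma_z(j)=1\}\to\mathtt{T}}$, implementable by a single Fan-Out up to Hadamard conjugation (Fact~\ref{fact:fanoutparity}).

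For the sharpest gate counts, rather than keeping a persistent one-hot register I would fold Step~5 into Step~4: for each $(z,j)$ with $\gamma_z(j)=1$ apply a \emph{self-cleaning} $\mathsf{AND}$/$\mathsf{EXACT}$ gate (Facts~\ref{thr:or_constantdepth} and~\ref{thr:or_constantdepth_gt}) that XORs the one-hot bit $\bigwedge_k(x_{\overline{J}}\oplus\overline{z})_k\wedge\bigwedge_l(x_{J_z}\oplus\overline{j})_l$ directly into $\mathtt{T}$, the sum over the active $(z,j)$ collapsing to $f(x)$. Since these gates clean their own internal ancillae, no deferred uncomputation is needed and the only thing the final step must reverse is the copying of Steps~1--2. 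Counting resources then proceeds as in Theorem~\ref{thr:ucg_construction}: the copy/uncopy and $\mathsf{AND}$ layers contribute the same $2n+6(t+r)2^{t+r}+O(2^{t+r}\log(t+r))$ Fan-Out gates and the same ancilla count, while in the $\mathsf{GT}$ model the layers merge (Claim~\ref{claim:fanoutasGT}, Lemma~\ref{lem:fanout_to_gt_circuits}) into $1$ $\mathsf{GT}$ (copy) $+\,4$ $\mathsf{GT}$ (the self-cleaning $\mathsf{AND}$s targeting $\mathtt{T}$) $+\,1$ $\mathsf{GT}$ (uncopy) $=6$ $\mathsf{GT}$, down from $9$.

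The main obstacle is the $\mathsf{GT}$ bookkeeping. I must check that the $m'\le 2^{t+r}$ self-cleaning $\mathsf{AND}$ gates, which act on disjoint input and ancilla registers but share the single target $\mathtt{T}$, still collapse to only $4$ $\mathsf{GT}$ layers; this requires verifying that within each of the four time slots all the constituent Fan-Outs pairwise commute (the shared-target XORs into $\mathtt{T}$ behave as a single $\mathsf{PARITY}$ layer and commute, while the cat-state and $\mathsf{OR}$-reduction Fan-Outs act on disjoint ancillae), so that Claim~\ref{claim:fanoutasGT} applies layer by layer. The one genuinely conceptual point, as opposed to routine counting, is recognizing that discarding the $\alpha,\beta,\delta$ layers removes the need for a persistent one-hot register, which is exactly what lets the self-cleaning $\mathsf{AND}$ absorb Step~5 and cut the $\mathsf{GT}$ count by three.
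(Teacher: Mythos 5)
Your Steps 1--4 and the driving observation --- that for an $f$-$\mathsf{FIN}$ the three controlled-phase layers collapse to $\prod_{(z,j):\gamma_z(j)=1}\mathsf{C}_{\mathtt{E}_{z,j}}\text{-}\mathsf{X}_{\to\mathtt{T}}$, which by the one-hot property equals $\mathsf{X}^{f(x)}_{\to\mathtt{T}}$ and is a single $\mathsf{PARITY}$ gate --- coincide exactly with the paper's proof (this is its ``first method''). The gap is in how you reach $6$ $\mathsf{GT}$ gates. You discard the registers $\mathtt{E}_{z,j}$ and run $m'\le 2^{t+r}$ self-cleaning $\mathsf{AND}$ gates that \emph{share} the target $\mathtt{T}$, claiming they still fit in the $4$ $\mathsf{GT}$ of Fact~\ref{thr:or_constantdepth_gt}. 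That fact, however, is a black-box statement about a \emph{single} $\mathsf{EXACT}$ gate; merging several instances is immediate only when they are pairwise disjoint (which is exactly what Theorem~\ref{thr:ucg_construction} exploits for the $\mathsf{AND}$s targeting the $\mathtt{E}_{z,j}$). For instances overlapping in $\mathtt{T}$ you would have to open up the construction of~\cite{bravyi2022constant} and verify that $\mathtt{T}$ enters it only through an initial and final Hadamard and otherwise only as a control of diagonal ($\mathsf{GT}$-type) operations; no fact stated in the paper gives you this. Worse, your own justification --- ``the shared-target XORs into $\mathtt{T}$ behave as a single $\mathsf{PARITY}$ layer'' --- presupposes the opposite internal structure (compute an explicit result bit, XOR it into $\mathtt{T}$, uncompute), and under that structure the merged block costs $2+1+2=5$ $\mathsf{GT}$, since the Hadamard-conjugated $\mathsf{PARITY}$ layer cannot merge with the neighbouring compute/uncompute $\mathsf{GT}$s (non-diagonal gates on the result bits sit between them). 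That gives $1+5+1=7$ $\mathsf{GT}$ in total, which is precisely the count of the paper's first method and precisely what its second method is designed to beat.

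The clean way to close the gap --- and what the paper actually does --- is to keep the one-hot registers $\mathtt{E}_{z,j}$, so that every $\mathsf{AND}$ gate in the circuit stays disjoint from every other ($2$ $\mathsf{GT}$ to compute, with uncomputation deferred), and to realize $\prod_{(z,j)}\mathsf{C}_{\mathtt{E}_{z,j}}\text{-}\mathsf{X}_{\to\mathtt{T}}$ by the cat-state identity $\mathsf{H}_{\to\mathtt{T}}\,\mathsf{FO}_{\mathtt{T}\to\{\mathtt{T}_{z,j}\}}\big(\prod_{(z,j)}\mathsf{C}_{\mathtt{E}_{z,j}}\text{-}\mathsf{Z}_{\to\mathtt{T}_{z,j}}\big)\mathsf{FO}_{\mathtt{T}\to\{\mathtt{T}_{z,j}\}}\,\mathsf{H}_{\to\mathtt{T}}$: the middle layer consists of disjoint two-qubit gates (zero $\mathsf{GT}$ cost), and the two Fan-Outs, viewed as $\mathsf{GT}$ gates acting on qubits disjoint from the $\mathtt{E}$-computing and $\mathtt{E}$-uncomputing $\mathsf{GT}$s, are absorbed into them. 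This yields $1+2+0+3=6$ $\mathsf{GT}$ using only facts already stated in the paper, with the same ancilla and Fan-Out counts you derived.
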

\begin{proof}
    Construct the state
    \begin{align*}
        |x\rangle_{\mathtt{I}}|b\rangle_{\mathtt{T}} \left(\bigotimes_{z\in\{0,1\}^t} \bigotimes_{j\in\{0,1\}^{|J_z|}} |x_{\overline{J}}\oplus\overline{z}\rangle_{\mathtt{\overline{J}}_{z,j}}|x_{J_z}\oplus\overline{j}\rangle_{\mathtt{J}_{z,j}} |e(x_{\overline{J}})_z\cdot e(x_{J_z})_j\rangle_{\mathtt{E}_{z,j}}\right)
    \end{align*}
    by following the same steps as in \Cref{thr:ucg_construction}. To perform the $f$-$\mathsf{FIN}$ gate, we must apply a $\mathsf{C}_{\mathtt{E}_{z,j}}$-$\mathsf{X}_{\to\mathtt{T}}$ gate for all $z\in\{0,1\}^t$ and $j\in g_{z}^{-1}(1)$, where $g_z:\{0,1\}^{|J_z|}\to \{0,1\}$ is such that $f_{J|z}(x_J) = g_z(x_{J_z})$, since this leads to (consider only register $\mathtt{T}$)
    \begin{align*}
        |b\rangle_{\mathtt{T}} \mapsto \big|b\oplus {\bigoplus}_{z\in\{0,1\}^t}{\bigoplus}_{j\in g^{-1}_{z}(1)}e(x_{\overline{J}})_z\cdot e(x_{J_z})_j \big\rangle_{\mathtt{T}} = |b\oplus g_{x_{\overline{J}}}(x_{J_{x_{\overline{J}}}})\rangle_{\mathtt{T}} = |b\oplus f(x) \rangle_{\mathtt{T}},
    \end{align*}
    where we used (i) the definition of one-hot encoding, $e(x_{\overline{J}})_z = 1$ if and only if $z=x_{\overline{J}}$, and $e(x_{J_z})_j = 1$ if and only if $j=x_{J_z}$; (ii) the fact that $\bigoplus_{j\in g^{-1}_{z}(1)} e(y)_j = g_{z}(y)$ for any $z\in\{0,1\}^{t}$ and $y\in \{0,1\}^{|J_z|}$; (iii) the identity $g_{x_{\overline{J}}}(x_{J_{x_{\overline{J}}}}) = f_{J|x_{\overline{J}}}(x_{J}) = f(x)$.

    \begin{figure}
        \centering
        \includegraphics[trim={1.45cm 23.5cm 2.2cm 0.7cm},clip,width=\textwidth]{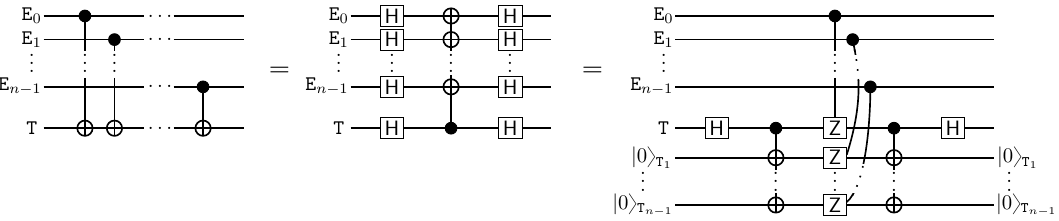}
        \caption{The gate $\prod_{j\in[n]}\mathsf{C}_{\mathtt{E}_j}\text{-}\mathsf{X}_{\to\mathtt{T}}$ (first circuit) is equivalent to $\mathsf{PARITY}_{\{\mathtt{E}_j\}_{j\in[n]}\to\mathtt{T}}$ (second circuit) and to $\mathsf{H}_{\to\mathtt{T}}\mathsf{FO}_{\mathtt{T}\to\{\mathtt{T}_{j}\}_{j=1}^{n-1}}\big(\prod_{j\in[n]}\mathsf{C}_{\mathtt{E}_{j}}\text{-}\mathsf{Z}_{\to\mathtt{T}_{j}}\big)\mathsf{FO}_{\mathtt{T}\to\{\mathtt{T}_{j}\}_{j=1}^{n-1}}\mathsf{H}_{\to\mathtt{T}}$ (third circuit) (define $\mathtt{T}_0 := \mathtt{T}$), where the ancillary registers $\{\mathtt{T}_j\}_{j=1}^{n-1}$ are initialized in the $|0\rangle$ state.
        }
        \label{fig:two_methods}
    \end{figure}
    
    There are two methods to apply the $\mathsf{C}_{\mathtt{E}_{z,j}}$-$\mathsf{X}_{\to\mathtt{T}}$ gates in parallel (see \Cref{fig:two_methods}). The first method is via~\cite{moore1999quantum}
    \begin{align*}
        \prod_{z\in\{0,1\}^t}\prod_{j\in g_{z}^{-1}(1)}\mathsf{C}_{\mathtt{E}_{z,j}}\text{-}\mathsf{X}_{\to\mathtt{T}} = \mathsf{PARITY}_{\{\mathtt{E}_{z,j}\}_{z\in\{0,1\}^t,j\in g^{-1}_{z}(1)}\to\mathtt{T}},
    \end{align*}
    i.e., applying an $\mathsf{X}$ onto $|b\rangle_{\mathtt{T}}$ controlled on $|e(x_{\overline{J}})_z\cdot e(x_{J_z})_j\rangle_{\mathtt{E}_{z,j}}$ for all $z\in\{0,1\}^t$ and $j\in g_{z}^{-1}(1)$ is equivalent to applying a $\mathsf{PARITY}$ gate onto $|b\rangle_{\mathtt{T}}$ from input $\bigotimes_{z\in\{0,1\}^t} \bigotimes_{j\in g^{-1}_{z}(1)}  |e(x_{\overline{J}})_z e(x_{J_z})_j\rangle_{\mathtt{E}_{z,j}}$. The $\mathsf{PARITY}$ gate costs $1$ Fan-Out or $\mathsf{GT}$ gate with arity $1 + \sum_{z\in\{0,1\}^t}|g_{z}^{-1}(1)|$.  

    The second method is to use the parallelisation method from~\cite{green2001counting,moore2001parallel,hoyer2005quantum}. More specifically, by using the $\sum_{z\in\{0,1\}^t}|g^{-1}_z(1)| \leq 2^{t+r}$ ancillary registers $\mathtt{T}_{z,j}$, $z\in\{0,1\}^t$ and $j\in g^{-1}_z(1)$, initialized in the $|0\rangle$ state (note that we do not require all registers $\mathtt{T}_{z,j}$ from \Cref{thr:ucg_construction}), then
    \begin{align*}
    \begin{multlined}[\textwidth]
        \prod_{z\in\{0,1\}^t}\prod_{j\in g_{z}^{-1}(1)}\mathsf{C}_{\mathtt{E}_{z,j}}\text{-}\mathsf{X}_{\to\mathtt{T}} \\
        = \mathsf{H}_{\to\mathtt{T}}\mathsf{FO}_{\mathtt{T}\to\{\mathtt{T}_{z,j}\}_{z,j}}\left(\prod_{z\in\{0,1\}^t}\prod_{j\in g_{z}^{-1}(1)}\mathsf{C}_{\mathtt{E}_{z,j}}\text{-}\mathsf{Z}_{\to\mathtt{T}_{z,j}}\right)\mathsf{FO}_{\mathtt{T}\to\{\mathtt{T}_{z,j}\}_{z,j}}\mathsf{H}_{\to\mathtt{T}} \bigotimes_{z\in\{0,1\}^t}\bigotimes_{j\in g^{-1}_z(1)}|0\rangle_{\mathtt{T}_{z,j}}.
    \end{multlined}
    \end{align*}
    In more details, we have (consider only registers $\mathtt{T}$ and $\mathtt{T}_{z,j}$)
    \begin{align*}
        |b\rangle_{\mathtt{T}} &\mapsto \frac{1}{\sqrt{2}}|0\rangle_{\mathtt{T}}\bigotimes_{z\in\{0,1\}^t}\bigotimes_{j\in g^{-1}_z(1)}|0\rangle_{\mathtt{T}_{z,j}} + \frac{(-1)^b}{\sqrt{2}}|1\rangle_{\mathtt{T}}\bigotimes_{z\in\{0,1\}^t}\bigotimes_{j\in g^{-1}_z(1)}|1\rangle_{\mathtt{T}_{z,j}}\\
        &\mapsto \frac{1}{\sqrt{2}}|0\rangle_{\mathtt{T}}\bigotimes_{z\in\{0,1\}^t}\bigotimes_{j\in g^{-1}_z(1)}|0\rangle_{\mathtt{T}_{z,j}} + \frac{(-1)^b}{\sqrt{2}}|1\rangle_{\mathtt{T}}\bigotimes_{z\in\{0,1\}^t}\bigotimes_{j\in g^{-1}_z(1)}(-1)^{e(x_{\overline{J}})_ze(x_{J_z})_j}|1\rangle_{\mathtt{T}_{z,j}}\\
        &\mapsto \frac{|0\rangle_{\mathtt{T}} + (-1)^{b+f(x)}|1\rangle_{\mathtt{T}}}{\sqrt{2}} \\
        &\mapsto |b\oplus f(x)\rangle_{\mathtt{T}}.
    \end{align*}
    The above requires $2$ Fan-Out or $\mathsf{GT}$ gates with arity at most $1+2^{t+r}$. We crucially remark that the $\mathsf{GT}$ gates can be absorbed by the ones from computing and uncomputing registers $\mathtt{E}_{z,j}$. 
    
    The rest of the circuit is identical to \Cref{thr:ucg_construction}: uncompute the ancillary registers. The number of ancillae and Fan-Out gates is asymptotically the same as in \Cref{thr:ucg_construction}. The number of $\mathsf{GT}$ gates is reduced from $9$ to $7$ by using the first method or to $6$ by using the second method.
\end{proof}

\subsection{Constant-depth circuits for quantum memory devices via one-hot encoding}

In this section, we apply our previous circuit constructions based on one-hot encoding to the case of $\mathsf{QRAM}$ and $\QRAG$. As mentioned before, $\mathsf{QRAM}$ is simply an $f$-$\mathsf{FIN}$ with the Boolean function $f:\{0,1\}^n\times\{0,1\}^{\log{n}}\to\{0,1\}$ defined by $f(x,i) = x_i$. Furthermore, this Boolean function is a $(J,1)$-junta with $J = [n]$ and $|\overline{J}|= \log{n}$. Indeed, by fixing the coordinates of $i\in[n]$, $f_{[n]|i}(x) = x_i$ depends only on one input coordinate. \Cref{thr:fin_onehot} thus immediately applies to any $\QRAM$ by setting $r=1$ and $t = |\overline{J}| = \log{n}$. (Actually, there is no need to compute the one-hot encoding of register $\mathtt{J}$ since $r=1$. This means that we only require registers $\mathtt{E}_z$ for $z\in\{0,1\}^t$. The number of ancillae is thus halved). For completeness we depict the circuit in \Cref{fig:qram_construction}.

\begin{figure}[t]
    \centering
    \includegraphics[trim={1.4cm 16.65cm 3cm 0.8cm},clip,width=\textwidth]{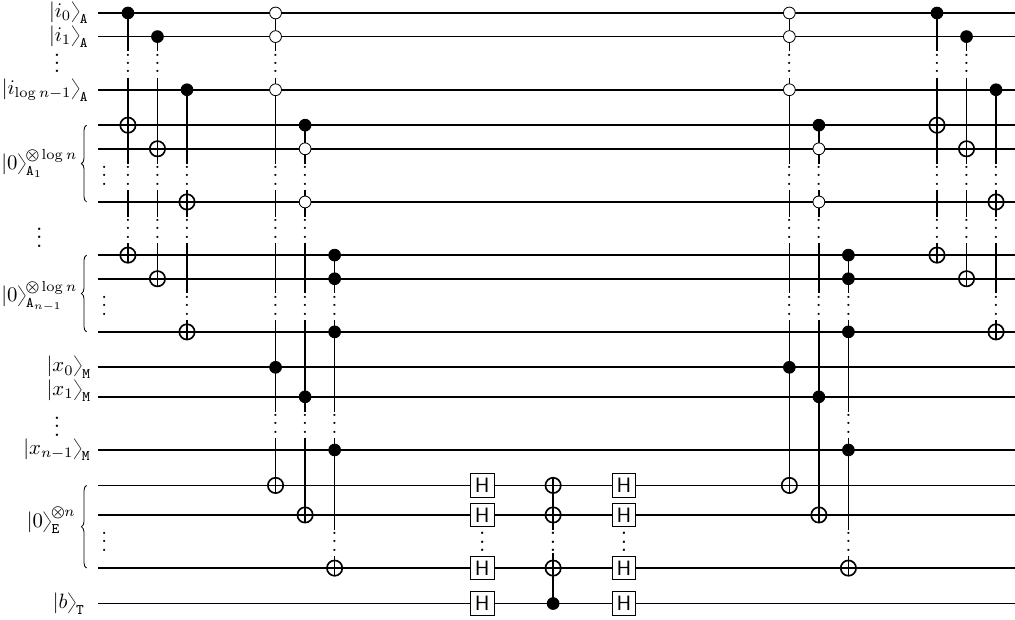}
    \caption{The circuit for the $\mathsf{QRAM}$ $|i\rangle_{\mathtt{A}}|b\rangle_{\mathtt{T}}|x\rangle_{\mathtt{M}} \mapsto |i\rangle_{\mathtt{A}}|b\oplus x_i\rangle_{\mathtt{T}}|x\rangle_{\mathtt{M}}$.}
    \label{fig:qram_construction}
\end{figure}

\begin{theorem}[One-hot-encoding implementation of $\QRAM$]\label{thr:qram in qac0f}
    For every $n \in \mathbb{N}$ a power of $2$, a $\mathsf{QRAM}$ of memory size $n$ can be implemented in $O(1)$-depth using
    \begin{itemize}
        \item either $2n\log{n}\log\log{n} + O(n\log{n})$ ancillae and $6n\log{n} + O(n\log\log{n})$ Fan-Out gates with arity $\leq n+1$, 
        \item or $3n\log{n} + O(n\log\log{n})$ ancillae and $6$ $\mathsf{GT}$ gates with arity $\leq n\log{n} + O(n\log\log{n})$.
    \end{itemize}
\end{theorem}

Even though $\mathsf{QRAG}$ is not an $f$-$\mathsf{FIN}$ or even an $f$-$\mathsf{UCG}$, it is possible to use the one-hot encoding framework from the previous circuit constructions to implement $\mathsf{QRAG}$ in constant depth. We mention that a similar $O(1)$-depth and $O(n\log{n}\log\log{n})$-size circuit for $\QRAG$ based on one-hot encoding and using Fan-Out gates had previously appeared in~\cite[Lemma~4.3]{rosenthal2021query}. We note that author missed the $\log\log{n}$-factor by ignoring the $\log{n}$-size overhead in \Cref{thr:or_constantdepth}.

\begin{theorem}[One-hot-encoding implementation of $\QRAG$]
    \label{thr:qrag in qac0f}
     For every $n \in \mathbb{N}$ a power of $2$, a $\mathsf{QRAG}$ of memory size $n$ can be implemented in $O(1)$-depth using
     \begin{itemize}
         \item either $2n\log{n}\log\log{n} + O(n\log{n})$ ancillae and $6n\log{n} + O(n\log\log{n})$ Fan-Out gates with arity $\leq n+1$,
         \item or $3n\log{n} + O(n\log\log{n})$ ancillae and $9$ $\mathsf{GT}$ gates with arity $\leq n\log{n} + O(n\log\log{n})$.
     \end{itemize}
\end{theorem}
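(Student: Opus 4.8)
The plan is to realize the $\QRAG$ as a swap between the target register $\mathtt{T}$ and the addressed memory cell $\mathtt{M}_i$, and to implement that swap through the standard three-$\mathsf{CNOT}$ identity
\[
\mathsf{SWAP}_{\mathtt{T}\leftrightarrow\mathtt{M}_i} = \mathsf{C}_{\mathtt{M}_i}\text{-}\mathsf{X}_{\to\mathtt{T}}\cdot\mathsf{C}_{\mathtt{T}}\text{-}\mathsf{X}_{\to\mathtt{M}_i}\cdot\mathsf{C}_{\mathtt{M}_i}\text{-}\mathsf{X}_{\to\mathtt{T}},
\]
where each $\mathsf{CNOT}$ is understood as \emph{addressed}, i.e.\ acting on the cell selected by $|i\rangle_{\mathtt{A}}$. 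The first and third gates are exactly the read operation of a $\QRAM$ ($b\mapsto b\oplus x_i$), while the middle one is the dual write operation ($x_i\mapsto x_i\oplus b$). I would therefore compute the one-hot encoding of the address once, obtaining single-qubit registers $\mathtt{E}_z$ holding $e(i)_z$ for $z\in\{0,1\}^{\log n}$, using precisely the circuit of Theorem~\ref{th:qram in qac0f} (copying $|i\rangle_{\mathtt{A}}$ followed by a layer of $\mathsf{AND}$ gates), then sandwich the read--write--read triple between this computation and its uncomputation, thereby paying for the expensive one-hot step only once.

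Given the one-hot register $\{\mathtt{E}_z\}_z$, each addressed $\mathsf{CNOT}$ becomes a family of doubly-controlled gates indexed by $z$, of which only the $z=i$ member fires since $|e(i)|=1$. For the two read steps I would compute $R_z=\mathtt{E}_z\wedge\mathtt{M}_z$ into fresh ancillae, apply a single $\mathsf{PARITY}_{\{R_z\}_z\to\mathtt{T}}$ gate (so that $\mathtt{T}\mapsto\mathtt{T}\oplus\bigoplus_z e(i)_z\wedge x_z=\mathtt{T}\oplus x_i$, by Fact~\ref{fact:fanoutparity}), and then uncompute $R_z$. For the write step the gates $\mathsf{C}_{\{\mathtt{E}_z,\mathtt{T}\}}\text{-}\mathsf{X}_{\to\mathtt{M}_z}$ share the control $\mathtt{T}$, so I would first copy $\mathtt{T}$ into ancillae $\{\mathtt{T}_z\}_z$ with one $\mathsf{FO}$ gate, apply the now pairwise-disjoint gates $\mathsf{C}_{\{\mathtt{E}_z,\mathtt{T}_z\}}\text{-}\mathsf{X}_{\to\mathtt{M}_z}$, and undo the copy with a second $\mathsf{FO}$. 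The crucial observation is that in every layer the doubly-controlled $\mathsf{X}$ gates act on \emph{pairwise disjoint} triples of qubits, hence each family can be realized in parallel in constant depth using only the basic one- and two-qubit gate set $\mathcal{G}$, contributing \emph{no} $\mathsf{FO}$ or $\mathsf{GT}$ gates.

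For the Fan-Out construction the resource counts then coincide, up to constants, with those of Theorem~\ref{th:qram in qac0f}: the one-hot encoding dominates, costing $2n\log n\log\log n+O(n\log n)$ ancillae and $O(n\log n)$ Fan-Out gates of arity $\leq n+1$, while the read--write--read triple adds only $O(n)$ further ancillae (for the $R_z$ and $\mathtt{T}_z$) and a constant number of $\mathsf{FO}$ gates. For the $\mathsf{GT}$ construction I would group the commuting Fan-Out and $\mathsf{PARITY}$ layers into $\mathsf{GT}$ gates via Claim~\ref{claim:fanoutasGT}: the one-hot computation and its uncomputation cost $3+3$ $\mathsf{GT}$s exactly as in the $\QRAM$ case, and the triple contributes the remaining $3$, for a total of $9$. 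The main obstacle is precisely this last bookkeeping. I must verify that the $\mathsf{PARITY}$ gates of the two read steps and the two $\mathsf{FO}$ gates of the write step -- all of which route through $\mathtt{T}$, but in different bases and separated by the (free) Toffoli layers -- can be absorbed into only three $\mathsf{GT}$ gates rather than four, and that the Hadamards on $\mathtt{T}$ produced by the $\mathsf{PARITY}$--to--Fan-Out conversion do not obstruct this merging. A secondary point is to confirm that the write step leaves $\mathtt{T}$ invariant (it appears only as a control there), so that its $\mathsf{FO}$ copy uncomputes cleanly and the one-hot register $\{\mathtt{E}_z\}_z$ survives untouched for its final uncomputation, yielding the $\QRAG$ action on the clean registers.
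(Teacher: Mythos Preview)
Your three-$\mathsf{CNOT}$ decomposition is a valid route and, for the Fan-Out construction, it gives the stated bounds: the one-hot encoding of the address (compute and uncompute) already costs $6n\log n+O(n\log\log n)$ Fan-Outs and $2n\log n\log\log n+O(n\log n)$ ancillae, and your read--write--read triple adds only $O(1)$ Fan-Out/$\mathsf{PARITY}$ gates and $O(n)$ ancillae. The observation that the Toffoli layers act on pairwise disjoint triples and are therefore free (constant-depth in $\mathcal{G}$) is correct.

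The paper does \emph{not} use the three-$\mathsf{CNOT}$ identity. Instead it copies $|b\rangle_{\mathtt{T}}$ into an $n$-qubit register $\mathtt{B}$, applies $n$ disjoint $\mathsf{C}_{\mathtt{E}_j}$-$\mathsf{SWAP}_{\mathtt{B}_j\leftrightarrow\mathtt{M}_j}$ gates, and then runs a short cleanup sequence ($\mathsf{FO}$, $\mathsf{PARITY}$, two Toffoli layers, and one auxiliary Fan-Out into a fresh register $\mathtt{C}$) to restore $\mathtt{B}$ to zero while leaving $\mathtt{T}=x_i$ and $\mathtt{M}_i=b$. This asymmetry is what makes the $\mathsf{GT}$ count work out: the middle block contributes exactly three $\mathsf{GT}$ gates (Steps~6,~7,~9), and the Fan-Out that undoes the last of them acts on qubits disjoint from the one-hot registers, so it can be absorbed into the one-hot uncomputation, giving $3+3+3=9$.

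Your acknowledged obstacle is real. In your scheme the middle block is $\mathsf{PARITY}$--$\mathsf{FO}$--$\mathsf{FO}$--$\mathsf{PARITY}$, four operations that all pass through $\mathtt{T}$ and are separated by Toffoli layers touching $\mathtt{E}_z$ or $\mathtt{M}_z$; none of them commutes past an adjacent Toffoli layer to merge with the one-hot $\mathsf{GT}$s, and no two adjacent ones can be fused because the intervening Toffolis alter their control data. I do not see a way to compress these four into three without changing the structure, so as written your plan yields $10$ $\mathsf{GT}$ gates, not $9$. If the exact constant matters to you, the paper's $\mathsf{C}$-$\mathsf{SWAP}$ route is the cleaner path; if you only need $O(1)$ $\mathsf{GT}$ gates, your argument is already complete.
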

\begin{proof}
    Given the state $|i\rangle_{\mathtt{A}}|b\rangle_{\mathtt{T}}|x_0,\dots,x_{n-1}\rangle_{\mathtt{M}}$, we shall compute the one-hot encoding $e(i)\in\{0,1\}^n$ of the address $i\in\{0,1\}^{\log{n}}$ given by $e(i)_j = \bigwedge_{k\in[\log{n}]} (i\oplus \overline{j})_k$, where $j\in\{0,1\}^{\log{n}}$. Since $e(i)_j = 1$ if and only if $j=i$, the one-hot encoding is then used to swap the correct entry $x_i$ from the memory $\mathtt{M}$ onto an $n$-qubit ancillary register $\mathtt{B}$. The swapped entry in register $\mathtt{B}$ is then mapped onto the target register $\mathtt{T}$ by using a $\mathsf{PARITY}$ gate. At this point, both registers $\mathtt{M}$ and $\mathtt{T}$ are in the desired state. The final step is uncomputing register $\mathtt{B}$ with an additional ancillary register $\mathtt{C}$.
    Consider the following circuit (see \Cref{fig:qrag_construction}):
    \begin{figure}[t]
        \centering
        \includegraphics[trim={1.4cm 13cm 3cm 0.8cm},clip,width=\textwidth]{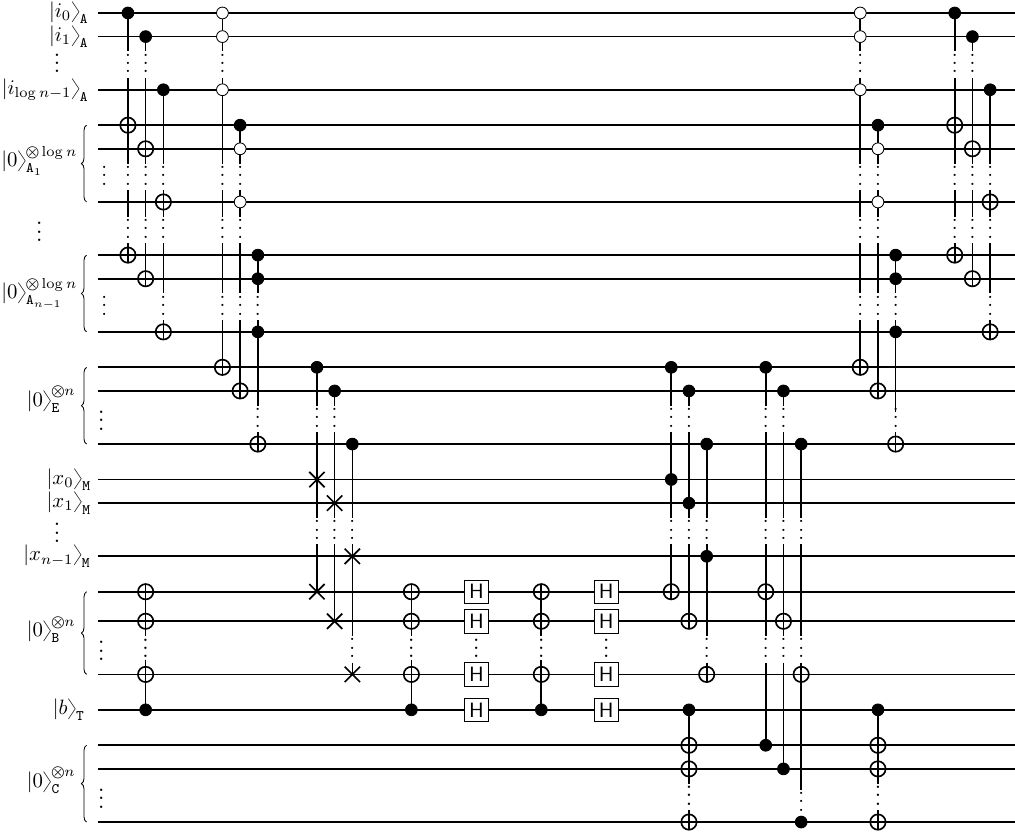}
        \caption{The circuit for the $\mathsf{QRAG}$ $|i\rangle_{\mathtt{A}}|b\rangle_{\mathtt{T}}|x\rangle_{\mathtt{M}} \mapsto |i\rangle_{\mathtt{A}}|x_i\rangle_{\mathtt{T}}|x_0,\dots,x_{i-1},b,x_{i+1},\dots,x_{n-1}\rangle_{\mathtt{M}}$. 
        The symbol $\bigtimes$---$\bigtimes$ means a $\mathsf{SWAP}$ gate.}
        \label{fig:qrag_construction}
    \end{figure}
    \begin{enumerate}
         \item Attach an $((n-1)\log{n})$-qubit ancillary register $\bigotimes_{j=1}^{n-1}|0\rangle^{\otimes \log{n}}_{\mathtt{A}_j}$ and copy $n - 1$ times the register $|i\rangle_{\mathtt{A}}$ using either $\log{n}$ Fan-Out gates with arity $n$ or $1$ $\mathsf{GT}$ gate with arity $n\log{n}$.

         \item Attach an $n$-qubit ancillary register $|0\rangle^{\otimes n}_{\mathtt{B}} = \bigotimes_{j\in[n]}|0\rangle_{\mathtt{B}_j}$ and apply an $(n+1)$-arity Fan-Out gate $\mathsf{FO}^{(n+1)}_{\mathtt{T}\to\mathtt{B}}$ from register $\mathtt{T}$ to register $\mathtt{B}$ to copy $n$ times the register $|b\rangle_{\mathtt{T}}$.
        
        \item For each $j\in[n]$, apply the gate $\bigotimes_{k\in[\log{n}]} \mathsf{X}^{\overline{j}_k}$ to $|i\rangle_{\mathtt{A}_j}$ (define $\mathtt{A}_0 := \mathtt{A}$). This leads to 
        \begin{align*}
            \left(\bigotimes_{j\in[n]}\ket{i}_{\mathtt{A}_j}|b\rangle_{\mathtt{B}_j}|x_j\rangle_{\mathtt{M}_j}\right)\ket{b}_{\mathtt{T}} \mapsto \left(\bigotimes_{j\in[n]} \ket{i\oplus \overline{j}}_{\mathtt{A}_j}|b\rangle_{\mathtt{B}_j}|x_j\rangle_{\mathtt{M}_j}\right)\ket{b}_{\mathtt{T}}.
        \end{align*}

        \item Attach a new $n$-qubit ancillary register $|0\rangle^{\otimes n}_{\mathtt{E}} = \bigotimes_{j\in[n]}|0\rangle_{\mathtt{E}_j}$ and apply an $\mathsf{AND}^{(\log{n})}_{\mathtt{A}_j\to\mathtt{E}_j}$ gate from register $\mathtt{A}_j$ onto register $\mathtt{E}_j$ for all $j\in[n]$ to obtain
        \begin{align*}
            \left(\bigotimes_{j\in[n]} \ket{i\oplus \overline{j}}_{\mathtt{A}_j}|b\rangle_{\mathtt{B}_j}|x_j\rangle_{\mathtt{M}_j}|0\rangle_{\mathtt{E}_j}\right)\ket{b}_{\mathtt{T}} \mapsto \left(\bigotimes_{j\in[n]}\ket{i\oplus \overline{j}}_{\mathtt{A}_j}|b\rangle_{\mathtt{B}_j}|x_j\rangle_{\mathtt{M}_j}|e(i)_j\rangle_{\mathtt{E}_j}\right)\ket{b}_{\mathtt{T}},
        \end{align*}
        where $e(i)\in \{0,1\}^{n}$ is the one-hot encoding of $i$.

        \item Apply $n$ $\mathsf{C}_{\mathtt{E}_j}$-$\mathsf{SWAP}_{\mathtt{B}_j\leftrightarrow\mathtt{M}_j}$ gates in parallel for $j\in[n]$, i.e., swap registers $\mathtt{B}_j$ and $\mathtt{M}_j$ controlled on $\mathtt{E}_j$. Since $e(i)_j = 1$ if and only if $j=i$, we obtain (ignore the register $\bigotimes_{j\in[n]} \ket{i\oplus \overline{j}}_{\mathtt{A}_j}$ for clarity)
        \begin{align*}
            \ket{e(i)}_{\mathtt{E}}\ket{b,\dots,b, x_i,b,\dots,b}_{\mathtt{B}}\ket{x_0,\dots,x_{i-1},b,x_{i+1},\dots,x_{n-1}}_{\mathtt{M}}\ket{b}_{\mathtt{T}}.
        \end{align*}
        \item Apply an $(n+1)$-arity Fan-Out gate $\mathsf{FO}^{(n+1)}_{\mathtt{T}\to\mathtt{B}}$ from register $\mathtt{T}$ onto register $\mathtt{B}$ to get
        \begin{align*}
            \ket{e(i)}_{\mathtt{E}}\ket{0,\dots,0, b\oplus x_i,0,\dots,0}_{\mathtt{B}}\ket{x_0,\dots,x_{i-1},b,x_{i+1},\dots,x_{n-1}}_{\mathtt{M}}\ket{b}_{\mathtt{T}}.
        \end{align*}
        \item Apply a $\mathsf{PARITY}_{\mathtt{B}\to\mathtt{T}}$ gate from register $\mathtt{B}$ onto register $\mathtt{T}$ to obtain
        \begin{align*}
            \ket{e(i)}_{\mathtt{E}}\ket{0,\dots,0, b\oplus x_i,0,\dots,0}_{\mathtt{B}}\ket{x_0,\dots,x_{i-1},b,x_{i+1},\dots,x_{n-1}}_{\mathtt{M}}\ket{x_i}_{\mathtt{T}}.
        \end{align*}
        \item Apply $n$ $\mathsf{C}_{\{\mathtt{E}_j,\mathtt{M}_j\}}$-$\mathsf{X}_{\to\mathtt{B}_j}$ gates in parallel for $j\in[n]$, i.e., apply an $\mathsf{X}$ gate onto register $\mathtt{B}_j$ controlled on registers $\mathtt{E}_j$ and $\mathtt{M}_j$. This yields
        \begin{align*}
            \ket{e(i)}_{\mathtt{E}}\ket{0,\dots,0,  x_i,0,\dots,0}_{\mathtt{B}}\ket{x_0,\dots,x_{i-1},b,x_{i+1},\dots,x_{n-1}}_{\mathtt{M}}\ket{x_i}_{\mathtt{T}}.
        \end{align*}
        \item Attach a new $n$-qubit ancillary register $|0\rangle^{\otimes n}_{\mathtt{C}}$ and apply an $(n+1)$-arity Fan-Out gate $\mathsf{FO}^{(n+1)}_{\mathtt{T}\to\mathtt{C}}$ from register $\mathtt{T}$ to register $\mathtt{C}$ to get (ignore register $\mathtt{M}$ for clarity)
        \begin{align*}
        \ket{e(i)}_{\mathtt{E}}\ket{0,\dots,0,x_i,0,\dots,0}_{\mathtt{B}}\ket{0}_{\mathtt{C}}^{\otimes n}\ket{x_i}_{\mathtt{T}} \mapsto \ket{e(i)}_{\mathtt{E}}\ket{0,\dots,0,x_i,0,\dots,0}_{\mathtt{B}}\ket{x_i}_{\mathtt{C}}^{\otimes n}\ket{x_i}_{\mathtt{T}}.
        \end{align*}
        \item Apply $n$ $\mathsf{C}_{\{\mathtt{E}_j,\mathtt{C}_j\}}$-$\mathsf{X}_{\to\mathtt{B}_j}$ gates in parallel for $j\in[n]$ to uncompute register $\mathtt{B}$ and get
        \begin{align*}
            \ket{e(i)}_{\mathtt{E}}\ket{0,\dots,0,x_i,0,\dots,0}_{\mathtt{B}}\ket{x_i}_{\mathtt{C}}^{\otimes n}\ket{x_i}_{\mathtt{T}} \mapsto \ket{e(i)}_{\mathtt{E}}\ket{0}_{\mathtt{B}}^{\otimes n}\ket{x_i}_{\mathtt{C}}^{\otimes n}\ket{x_i}_{\mathtt{T}}.
        \end{align*}
        \item Uncompute Steps $9$, $4$, $3$, $2$, and $1$. This leads to $|i\rangle_{\mathtt{A}}|x_i\rangle_{\mathtt{T}}|x_0,\dots,x_{i-1},b,x_{i+1},\dots,x_{n-1}\rangle_{\mathtt{M}}$.
    \end{enumerate}
    
\noindent    We now analyse the resources for each step:
    \begin{itemize}
        \item Steps $1$ and $2$: the registers $\mathtt{A}_1,\dots,\mathtt{A}_{n-1}$ and $\mathtt{B}_0,\dots,\mathtt{B}_{n-1}$ use $(n-1)\log{n} + n$ ancillae. Copying register $\mathtt{A}$ onto $\mathtt{A}_1,\dots,\mathtt{A}_{n-1}$ and register $\mathtt{T}$ onto $\mathtt{B}_0,\dots,\mathtt{B}_{n-1}$ requires either $\log{n} + 1$ Fan-Out gates with arity at most $n+1$ or $1$ $\mathsf{GT}$ gate with arity $n\log{n} + n+1$;
        
        \item Step $3$: the $n$ $\mathsf{AND}^{(\log{n})}_{\mathtt{A}_j\to\mathtt{E}_j}$ gates use either $2n\log{n}\log\log{n} + O(n\log{n})$ ancillae and $3n\log{n} + O(n\log\log{n})$ Fan-Out gates with arity at most $2\log{n}$ (\Cref{thr:or_constantdepth}) or $2n\log{n} + O(n\log\log{n})$ ancillae and $2$ $\mathsf{GT}$ gates with arity $n\log{n} + O(n\log\log{n})$ (\Cref{thr:or_constantdepth_gt}), where for both cases we pushed the uncomputation part of the $\mathsf{AND}^{(\log{n})}_{\mathtt{A}_j\to\mathtt{E}_j}$ gates to Step~$11$;
        \item Step $6$: copying register $\mathtt{T}$ onto $\mathtt{B}$ uses $1$ Fan-Out or $1$ $\mathsf{GT}$ gate with arity $n+1$;
        \item Step $7$: the $\mathsf{PARITY}_{\mathtt{B}\to\mathtt{T}}$ gate uses $1$ Fan-Out or $1$ $\mathsf{GT}$ gate with arity $n+1$;
        \item Step $9$: copying register $\mathtt{T}$ onto $\mathtt{C}$ uses $1$ Fan-Out or $1$ $\mathsf{GT}$ gate with arity $n+1$;
        \item Step $11$: uncomputing the previous steps uses $3n\log{n} + O(n\log\log{n})$ Fan-Out gates or $3$ $\mathsf{GT}$ gates, since the $\mathsf{GT}$ gate from uncomputing Step $9$ can be absorbed by the ones from uncomputing Steps $1$, $2$, or $3$.
    \end{itemize}

   In total, we require either $2n\log{n}\log\log{n} + O(n\log{n})$ ancillae and $6n\log{n} + O(n\log\log{n})$ Fan-Out gates with arity at most $n+1$ or $3n\log{n} + O(n\log\log{n})$ ancillae and $9$ $\mathsf{GT}$ gates with arity at most $n\log{n} + O(n\log\log{n})$.
\end{proof}

We note that the number of ancillae for $\mathsf{QRAM}$ and $\mathsf{QRAG}$ can be asymptotically reduced at the expense of at most a constant factor increase in depth. This is achieved by reducing the problem into small blocks, each of which is solved using a small $\mathsf{QRAM}$/$\mathsf{QRAG}$ circuit. The outcome of all the small circuits is then solved by another small $\mathsf{QRAM}$/$\mathsf{QRAG}$ circuit. This can be done recursively in a tree-wise fashion, where the output of one level of $\mathsf{QRAM}$/$\mathsf{QRAG}$ circuits is broken into small blocks and inputted into a new level of $\mathsf{QRAM}$/$\mathsf{QRAG}$ circuits. This is formalised in the next result. Note that the same idea was used for the $\mathsf{OR}$ function, see~\cite[Theorem~6.3]{hoyer2005quantum} and~\cite[Section~3.2]{takahashi2016collapse}.
\begin{theorem}\label{thr:qram_recursive_procedure}
    For every $n,d \in \mathbb{N}$, a $\mathsf{QRAM}$ of memory size $n$ can be performed in $O(d)$-depth~using
    \begin{itemize}
        \item either $O(n\log^{(d)}{n}\log^{(d+1)}{n})$ ancillae and $O(n\log^{(d)}{n})$ Fan-Out gates,
        \item or $O(n\log^{(d)}{n})$ ancillae and $16d-10$ $\mathsf{GT}$ gates.
    \end{itemize}
    Moreover, a $\mathsf{QRAG}$ of memory size $n$ can be performed in $O(d)$-depth using
     \begin{itemize}
         \item either $O(n\log^{(d)}{n}\log^{(d+1)}{n})$ ancillae and $O(n\log^{(d)}{n})$ Fan-Out gates,
         \item or $O(n\log^{(d)}{n})$ ancillae and $21d-12$ $\mathsf{GT}$ gates.
     \end{itemize}
\end{theorem}
\begin{proof}
    We first focus on $\mathsf{QRAM}$. The proof is by induction on $d$. For $d=1$, the result follows from \Cref{thr:qram in qac0f}. Assume that the result is true for $d-1$. Divide the input $x\in\{0,1\}^n$ into $m := n/\log^{(d-1)}{n}$ blocks of $b := \log^{(d-1)}{n}$ qubits each. Given $i\in[n]$, compute $r \equiv i~(\text{mod}~ b)$ into an ancillary register $\mathtt{B}_0$ using a $O(1)$-depth $\poly\log{n}$-size quantum circuit~\cite{hoyer2005quantum,siu1993depth}. We then copy $|r\rangle_{\mathtt{B}_0}$ a number of $m-1$ times to obtain $\bigotimes_{j=0}^{m-1}|r\rangle_{\mathtt{B}_j}$. For each input $|x_{jb},x_{jb+1},\dots,x_{j(b+1)-1}\rangle_{\mathtt{M}}|r\rangle_{\mathtt{B}_j}$, $j\in[m]$, we apply the $\mathsf{QRAM}$ circuit from \Cref{thr:qram in qac0f} with target qubit $|0\rangle_{\mathtt{T}_j}$, which yields $|x_{jb+r}\rangle_{\mathtt{T}_j}$. This $d$-th $\mathsf{QRAM}$-level uses either $O(mb\log{b}\log\log{b}) = O(n\log^{(d)}{n}\log^{(d+1)}{n})$ ancillae and $O(mb\log{b}) = O(n\log^{(d)}{n})$ Fan-Outs, or $O(n\log^{(d)}{n})$ ancillae and $6$ $\mathsf{GT}$ gates (the $\mathsf{GT}$ gates from different blocks can be done in parallel). We are left with $m = n/\log^{(d-1)}{n}$ output qubits $\bigotimes_{j=0}^{m-1}|x_{jb+r}\rangle_{\mathtt{T}_j}$. Compute now $q := \lfloor i /b \rfloor$ into a separate quantum register using a $O(1)$-depth $\poly\log{n}$-size quantum circuit~\cite{hoyer2005quantum,siu1993depth}. Using the induction hypothesis, we can input $|q\rangle\bigotimes_{j=0}^{m-1}|x_{jb+r}\rangle_{\mathtt{T}_j}$ into a $O(d)$-depth $\mathsf{QRAM}$ circuit (the remaining $d-1$ $\mathsf{QRAM}$-levels) that uses either $O(m\log^{(d-1)}{m}\log^{(d)}{m}) = O(n\log^{(d)}{n})$ ancillae and $O(m\log^{(d-1)}{m}) = O(n)$ Fan-Outs, or $O(n)$ ancillae and $16(d-1) - 10$ $\mathsf{GT}$ gates. The output qubit is $|b\oplus x_{bq+r}\rangle_{\mathtt{T}} = |b\oplus x_{i}\rangle_{\mathtt{T}}$ as required. We then uncompute all the intermediary steps. 
    
    Let us compute the amount of resources. First note that computing and uncomputing the $d$-th $\mathsf{QRAM}$-level uses $6$ $\mathsf{GT}$ gates since we can wait and perform the second half of the $\mathsf{QRAM}$ circuits (see \Cref{fig:qram_construction}) until after $|b\oplus x_{i}\rangle_{\mathtt{T}}$ is outputted. Moreover, copying and uncopying $|r\rangle_{\mathtt{B}_0}$ requires $2$ $\mathsf{GT}$ gates. Finally, we must take into consideration the resources to compute $q$ and~$r$. For the Fan-Out-based construction, it only requires $\poly\log{n}$ Fan-Outs. The $\mathsf{GT}$-gate-based construction, on the other hand, requires more care. It is  well known that $q$ can be computed using a depth-$4$ polynomial-size threshold circuit, while $r$ can be computed with a depth-$2$ polynomial-size threshold circuit~\cite{siu1993depth}. In order to compute $\mathsf{THRESHOLD}$ functions, we employ the Boolean construction from \Cref{thr:fin_boolean_construction1}. Due to that, we shall postpone the proof till \Cref{sec:boolean_qram_construction} (see proof of \Cref{thr:qram_recursive_procedure_boolean}) and just claim for now that computing (plus uncomputing) $q$ uses $O(n)$ ancillae and $16$ $\mathsf{GT}$ gates, while computing (plus uncomputing) $r$ uses $O(n)$ ancillae and $8$ $\mathsf{GT}$ gates. Computing $q$ ($16$ $\mathsf{GT}$s) can be done in parallel to computing plus copying $r$ and performing the $d$-th $\mathsf{QRAM}$ circuit ($8+2+6 = 16$ $\mathsf{GT}$s), so the $d$-th level-$\mathsf{QRAM}$ uses $16$ $\mathsf{GT}$ gates. In total, we use $16d - 10$ $\mathsf{GT}$ gates. 

    The analysis is the same for $\mathsf{QRAG}$, the difference being the number of $\mathsf{GT}$ gates. Computing and uncomputing the $d$-th $\mathsf{QRAG}$ circuit from \Cref{thr:qrag in qac0f} uses $11$ $\mathsf{GT}$s instead of $9$, since the $\mathsf{GT}$ gates next to the Hadamard layers in \Cref{fig:qrag_construction} must be uncomputed. Computing $q$ ($16$ $\mathsf{GT}$s) can be done in parallel to computing plus copying $r$ and performing the $d$-th $\mathsf{QRAG}$ circuit ($8+2+11 = 21$ $\mathsf{GT}$s), so the $d$-th level-$\mathsf{QRAG}$ uses $21$ $\mathsf{GT}$ gates. In total, we use $21d-12$ $\mathsf{GT}$ gates. %the only difference is that, since we are using $\poly\log{n}$ Fan-Out gates to perform the small computation of the quotient $q = \lfloor i/m\rfloor$ and the remainder $r\equiv i~(\text{mod}~b)$, $2$ $\mathsf{GT}$ gates in the $\mathsf{QRAG}$ circuit can be replaced with $2$ Fan-Out gates. This reduces the $\mathsf{GT}$-gate count from $9$ to $7$ per $\mathsf{QRAG}$ call. However, the copying of $|r\rangle_{\mathtt{B}_0}$ requires another $\mathsf{GT}$ gate. Thus we require $8(d-1) + 7 = 8d - 1$ $\mathsf{GT}$ gates in total.
\end{proof}

\begin{remark}
    If $d = \log^\ast{n}$, then $\mathsf{QRAM}$ requires only $O(n)$ ancillae and either $O(n)$ Fan-Outs or $16\log^\ast{n} - 10$ $\mathsf{GT}$ gates. Similarly for $\mathsf{QRAG}$. However, the depth becomes $O(\log^\ast{n})$, which is no longer constant. Nonetheless, the log-star of the estimated number of atoms in the universe is $5$, thus one can consider $\log^\ast{n}$ to be a constant in practice.
\end{remark}
% \begin{remark}
%     It is possible to reduce the number of $\mathsf{GT}$ gates by trading in extra Fan-Outs. The $q$ and $r$ computation and copying uses $\poly\log{n}$ Fan-Outs, and $2$ $\mathsf{GT}$ gates in the basic $\mathsf{QRAG}$ circuit (cf.\ {\rm \Cref{fig:qrag_construction}}) can be traded by $2$ Fan-Outs. Thus $\mathsf{QRAM}$ and $\mathsf{QRAG}$ can use $6d$ and $7d$ $\mathsf{GT}$ gates, respectively, plus $\poly\log{n}$ Fan-Outs.
% \end{remark}

\section{Constant-depth circuits based on Boolean analysis}\label{sec:fourier}

In this section, we explore the Boolean analysis connection between constant-depth gates and Fan-Outs made by Takahashi and Tani~\cite{takahashi2016collapse} and propose several constructions for $f$-$\mathsf{UCG}$s. Given $f:\{0,1\}^n\to\mathcal{U}(\mathbb{C}^{2\times 2})$, consider its $\mathsf{Z}$-decomposition $\alpha,\beta,\gamma,\delta:\{0,1\}^n\to[-1,1]$. Recall that $\operatorname{supp}(f) := \operatorname{supp}(\alpha)\cup \operatorname{supp}(\beta)\cup \operatorname{supp}(\gamma)\cup \operatorname{supp}(\delta)$ and $\operatorname{deg}(f) := \max\{\operatorname{deg}(\alpha),\operatorname{deg}(\beta),\operatorname{deg}(\gamma),\operatorname{deg}(\delta)\}$. Similar definitions apply to $\operatorname{supp}^{>k}(f)$, $\operatorname{supp}^{\leq k}(f)$, $\operatorname{supp}^{=k}(f)$, $\operatorname{supp}_{\{0,1\}}(f)$, and $\operatorname{supp}^{> k}_{\{0,1\}}(f)$.

\subsection{Constant-depth circuits for $f\text{-}\mathsf{UCG}$s}

\begin{theorem}[Real implementation of $f$-$\mathsf{UCG}$]\label{thr:ucg_boolean_construction1}
    Given $f:\{0,1\}^n\to\mathcal{U}(\mathbb{C}^{2\times 2})$, there is a $O(1)$-depth circuit for $f\text{-}\mathsf{UCG}$ that uses 
    \begin{itemize}
        \item either $\sum_{S\in\operatorname{supp}(f)}|S| + 2|\operatorname{supp}^{>1}(f)|$ ancillae and $2|\bigcup_{S\in\operatorname{supp}^{>1}(f)}S|+2|\operatorname{supp}^{>1}(f)|+6$ Fan-Out gates with arity $\leq 1+\max\{|\operatorname{supp}^{>0}(f)|,\operatorname{deg}(f)\}$, 
        \item or $|\operatorname{supp}^{>1}(f)|$ ancillae and $5$ $\mathsf{GT}$ gates with arity $\leq 2\sum_{S\in\operatorname{supp}(f)} |S|$.
    \end{itemize}
\end{theorem}
\begin{proof}
    Consider the initial state $|x\rangle_{\mathtt{I}}|b\rangle_{\mathtt{T}}$ for $x\in\{0,1\}^n$ and $b\in\{0,1\}$. We wish to implement $|x\rangle_{\mathtt{I}}|b\rangle_{\mathtt{T}} \mapsto |x\rangle_{\mathtt{I}}f(x)|b\rangle_{\mathtt{T}}$. Let $\alpha,\beta,\gamma,\delta:\{0,1\}^n\to[-1,1]$ be the $\mathsf{Z}$-decomposition of $f$. Let $\alpha(x) = \sum_{S\subseteq[n]} \widehat{\alpha}(S)\chi_S(x)$ be the Fourier expansion of $\alpha$, and similarly for $\beta$, $\gamma$, and $\delta$. For ease of notation, write $m:= |\operatorname{supp}^{>0}(f)|$. Let also $m_i := |\{S\in\operatorname{supp}^{>1}(f):i\in S\}|$ be the number of sets of size greater than $1$ that contain the coordinate $i\in[n]$. 
    
    Consider first the Fan-Out-based circuit (see \Cref{fig:boolean_construction_fanout}):
    \begin{figure}[t]
        \centering
        \includegraphics[trim={1.45cm 16cm 0.6cm 0.8cm},clip,width=\textwidth]{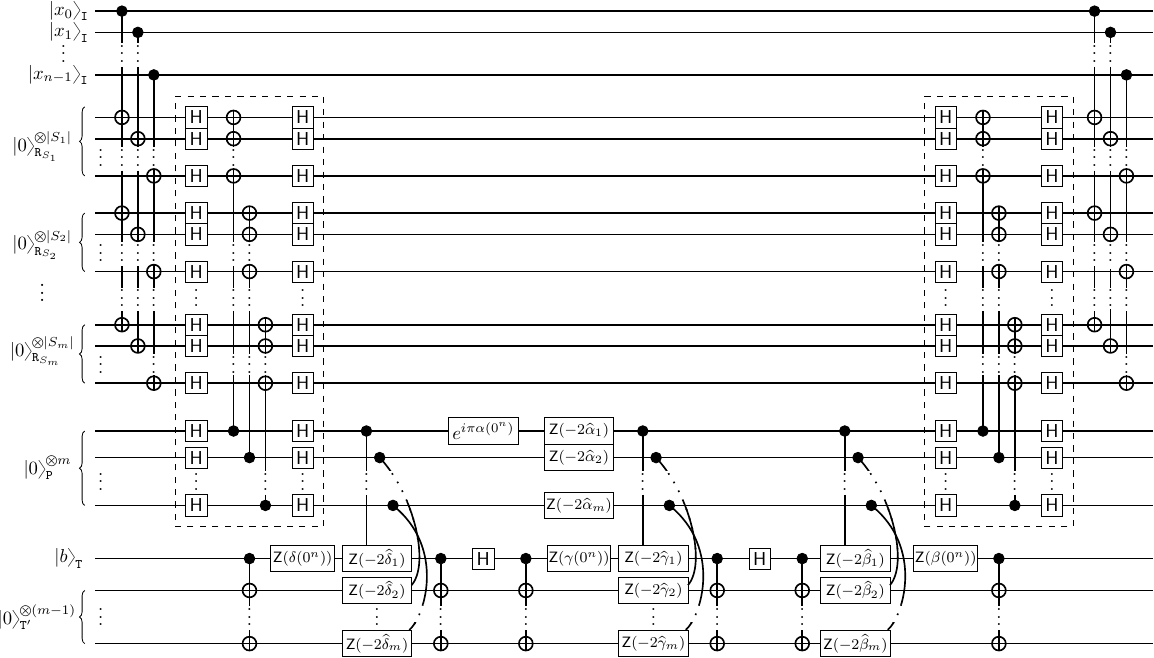}
        \caption{The circuit for an $f$-$\mathsf{UCG}^{(n)}$ using Fan-Out gates. Here $m:=|\operatorname{supp}^{>0}(f)|$. We highlight the $\mathsf{PARITY}$ operations inside dashed boxes. For simplicity, we write $\widehat{\alpha}(S_j)$ as $\widehat{\alpha}_j$ (and similarly for $\beta$, $\gamma$, $\delta$). Moreover, we depict $S_1,\dots,S_m\in\operatorname{supp}^{>0}(f)$, but in reality there is no need to compute the parities of sets with size $1$ (hence why the register $\mathtt{P}$ is shown with size $m$). Moreover, we include targets onto all registers $\{\mathtt{R}_S\}_S$ in the Fan-Out gates copying $x_0,\dots,x_{n-1}$, but in reality $x_i$ is copied only onto the registers such that $S\ni i$.}
        \label{fig:boolean_construction_fanout}
    \end{figure}
    
    \begin{enumerate}
    \setcounter{enumi}{1}
        \item [1a.] Attach an ancillary register $\bigotimes_{S\in\operatorname{supp}^{>1}(f)}|0\rangle^{\otimes|S|}_{\mathtt{R}_S}$. For each $i\in[n]$ in parallel, copy $m_i$ number of times the qubit $|x_i\rangle_{\mathtt{I}}$ using a $(1+m_i)$-arity Fan-Out gate. This leads to
        \begin{align*}
            |x\rangle_{\mathtt{I}}|b\rangle_{\mathtt{T}} \mapsto |x\rangle_{\mathtt{I}}|b\rangle_{\mathtt{T}} \bigotimes_{S\in\operatorname{supp}^{>1}(f)}|x_S\rangle_{\mathtt{R}_S}.
        \end{align*}
        \item [1b.] Attach an ancillary register $|0\rangle^{\otimes |\operatorname{supp}^{>1}(f)|}_{\mathtt{P}} = \bigotimes_{S\in\operatorname{supp}^{>1}(f)}|0\rangle_{\mathtt{P}_S}$. For each $S\in\operatorname{supp}^{>1}(f)$ in parallel, apply a $\mathsf{PARITY}^{(|S|)}_{\mathtt{R}_S \to \mathtt{P}_S}$ gate using a $(1+|S|)$-arity Fan-Out gate. We obtain the state 
        \begin{align*}
            |x\rangle_{\mathtt{I}}|b\rangle_{\mathtt{T}} \bigotimes_{S\in\operatorname{supp}^{>1}(f)}|x_S\rangle_{\mathtt{R}_S} \mapsto |x\rangle_{\mathtt{I}}|b\rangle_{\mathtt{T}} \bigotimes_{S\in\operatorname{supp}^{>1}(f)}|x_S\rangle_{\mathtt{R}_S}\big|{\bigoplus}_{i\in S}x_i\big\rangle_{\mathtt{P}_S}.
        \end{align*}
        \item [2a.] Attach an ancillary register $|0\rangle^{\otimes (m-1)}_{\mathtt{T}'}$ and apply an $m$-arity Fan-Out gate $\mathsf{FO}^{(m)}_{\mathtt{T}\to\mathtt{T}'}$ from register $\mathtt{T}$ onto register $\mathtt{T}'$. Apply a $\mathsf{Z}(\delta(0^n))_{\to\mathtt{T}}$ gate onto register $\mathtt{T}$. Notice that $\delta(0^n) = \sum_{S\subseteq[n]}\widehat{\delta}(S)$. Then, for each $S\in\operatorname{supp}^{>0}(\delta)$ in parallel, apply a $\mathsf{Z}(-2\widehat{\delta}(S))$ gate controlled on register $\mathtt{P}_S$ onto the $S$-th qubit in register $\mathtt{T}'$ (if $|S|=1$, the gate is controlled on $|x_S\rangle_{\mathtt{I}}$). Finally, apply $\mathsf{FO}^{(m)}_{\mathtt{T}\to\mathtt{T}'}$ again. This chain of operations leads to (omit registers $\mathtt{R}_S$ and $\mathtt{P}_S$ for simplicity)
        \begin{align*}
            |x\rangle_{\mathtt{I}}|b\rangle_{\mathtt{T}} \mapsto 
            |x\rangle_{\mathtt{I}} |b\rangle^{\otimes m}_{\mathtt{T},\mathtt{T}'}
            &\mapsto |x\rangle_{\mathtt{I}} \mathsf{Z}\left(\sum_{S\subseteq[n]}\widehat{\delta}(S)\left(1 - 2\bigoplus_{i\in S} x_i\right) \right) |b\rangle^{\otimes m}_{\mathtt{T},\mathtt{T}'}\\
            &= |x\rangle_{\mathtt{I}} \mathsf{Z}\left(\sum_{S\subseteq[n]}\widehat{\delta}(S)\chi_S(x) \right) |b\rangle^{\otimes m}_{\mathtt{T},\mathtt{T}'}
            \mapsto |x\rangle_{\mathtt{I}} \mathsf{Z}(\delta(x)) |b\rangle_{\mathtt{T}}.
        \end{align*}
        \item [2b.] Apply a $\mathsf{H}_{\to\mathtt{T}}$ gate onto register $\mathtt{T}$ followed by an $m$-arity Fan-Out gate $\mathsf{FO}^{(m)}_{\mathtt{T}\to\mathtt{T}'}$ from register $\mathtt{T}$ onto register $\mathtt{T}'$. Apply a $\mathsf{Z}(\gamma(0^n))_{\to\mathtt{T}}$ gate onto register $\mathtt{T}$. Then, for each $S\in\operatorname{supp}^{>0}(\gamma)$ in parallel, apply a $\mathsf{Z}(-2\widehat{\gamma}(S))$ gate controlled on register $\mathtt{P}_S$ onto the $S$-th qubit in register $\mathtt{T}'$ (if $|S|=1$, the gate is controlled on $|x_S\rangle_{\mathtt{I}}$). Finally, apply $\mathsf{FO}^{(m)}_{\mathtt{T}\to\mathtt{T}'}$ again. For simplicity, write $\mathsf{H} \mathsf{Z}(\delta(x))|b\rangle_{\mathtt{T}} = r_{b,x}|0\rangle_{\mathtt{T}} + s_{b,x}|1\rangle_{\mathtt{T}}$. This chain of operations leads to
        \begin{align*}
            |x\rangle_{\mathtt{I}}\mathsf{H}\mathsf{Z}(\delta(x))|b\rangle_{\mathtt{T}} &\mapsto 
            |x\rangle_{\mathtt{I}}\big(r_{b,x}|0\rangle^{\otimes m}_{\mathtt{T},\mathtt{T}'} + s_{b,x}|1\rangle^{\otimes m}_{\mathtt{T},\mathtt{T}'}\big)\\
            &\mapsto |x\rangle_{\mathtt{I}} \mathsf{Z}\left(\sum_{S\subseteq[n]}\widehat{\gamma}(S)\left(1 - 2\bigoplus_{i\in S} x_i\right) \right) \big(r_{b,x}|0\rangle^{\otimes m}_{\mathtt{T},\mathtt{T}'} + s_{b,x}|1\rangle^{\otimes m}_{\mathtt{T},\mathtt{T}'}\big)\\
            &\mapsto |x\rangle_{\mathtt{I}} \mathsf{Z}\left(\sum_{S\subseteq[n]}\widehat{\gamma}(S)\chi_S(x) \right)\mathsf{H}\mathsf{Z}(\delta(x)) |b\rangle_{\mathtt{T}}\\
            &= |x\rangle_{\mathtt{I}} \mathsf{Z}(\gamma(x)) \mathsf{H}\mathsf{Z}(\delta(x))|b\rangle_{\mathtt{T}}.
        \end{align*}

        \item [2c.] Apply a $\mathsf{H}_{\to\mathtt{T}}$ gate onto register $\mathtt{T}$ followed by an $m$-arity Fan-Out gate $\mathsf{FO}^{(m)}_{\mathtt{T}\to\mathtt{T}'}$ from register $\mathtt{T}$ onto register $\mathtt{T}'$. Apply a $\mathsf{Z}(\beta(0^n))_{\to\mathtt{T}}$ gate onto register $\mathtt{T}$. Then, for each $S\in\operatorname{supp}^{>0}(\beta)$ in parallel, apply a $\mathsf{Z}(-2\widehat{\beta}(S))$ gate controlled on register $\mathtt{P}_S$ onto the $S$-th qubit in register $\mathtt{T}'$ (if $|S|=1$, the gate is controlled on $|x_S\rangle_{\mathtt{I}}$). Finally, apply $\mathsf{FO}^{(m)}_{\mathtt{T}\to\mathtt{T}'}$ again. Similarly to the previous step, this chain of operations leads to
        \begin{align*}
            |x\rangle_{\mathtt{I}} \mathsf{H}\mathsf{Z}(\gamma(x)) \mathsf{H}\mathsf{Z}(\delta(x))|b\rangle_{\mathtt{T}}
            \mapsto |x\rangle_{\mathtt{I}} \mathsf{Z}(\beta(x))\mathsf{H}\mathsf{Z}(\gamma(x)) \mathsf{H}\mathsf{Z}(\delta(x))|b\rangle_{\mathtt{T}}.
        \end{align*}

        \item [2d.] Apply an overall $e^{i\pi \alpha(0^n)}$ phase. For each $S\in\operatorname{supp}^{>0}(\alpha)$ in parallel, apply a $\mathsf{Z}(-2\widehat{\alpha}(S))$ gate onto register $\mathtt{P}_S$ (if $|S|=1$, apply the gate onto $|x_S\rangle_{\mathtt{I}}$). This leads to
        \begin{align*}
            |x\rangle_{\mathtt{I}} e^{i\pi \alpha(x)} \mathsf{Z}(\beta(x))\mathsf{H}\mathsf{Z}(\gamma(x)) \mathsf{H}\mathsf{Z}(\delta(x))|b\rangle_{\mathtt{T}} = |x\rangle_{\mathtt{I}}f(x)|b\rangle_{\mathtt{T}}.
        \end{align*}

        \item [3.] Uncompute Step~$1$.
    \end{enumerate}
    We now consider the $\mathsf{GT}$-gate-based circuit (see \Cref{fig:boolean_construction_gt}), which is basically the same as the Fan-Out-based circuit, the main difference being that it is no longer necessary to copy the register $|x\rangle_{\mathtt{I}}$ several times into $\bigotimes_{S\in\operatorname{supp}^{>1}(f)}\bigotimes_{i\in S}|x_i\rangle_{\mathtt{R}_S}$ as an intermediate step in order to compute the terms $\bigoplus_{i\in S} x_i$ for all $S\in\operatorname{supp}^{>1}(f)$. A single $\mathsf{GT}$ gate can compute all the parity terms in parallel according to \Cref{claim:fanoutasGT}. In the following, write register $\mathtt{I}$ as $|x\rangle_{\mathtt{I}} = \bigotimes_{i\in[n]}|x_i\rangle_{\mathtt{I}_i}$.
    \begin{figure}[t]
        \centering
        \includegraphics[trim={1.45cm 22.8cm 1.0cm 0.75cm},clip,width=\textwidth]{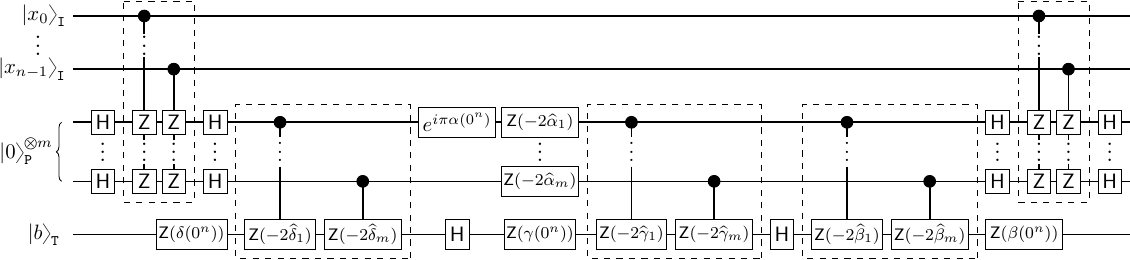}
        \caption{The circuit for an $f$-$\mathsf{UCG}^{(n)}$ using $\mathsf{GT}$ gates. Here $m:=|\operatorname{supp}^{>0}(f)|$. We highlight the $\mathsf{GT}$ gates inside dashed boxes. For simplicity, we write $\widehat{\alpha}(S_j)$ as $\widehat{\alpha}_j$ (and similarly for $\beta$, $\gamma$,~$\delta$). Moreover, we depict $S_1,\dots,S_m\in\operatorname{supp}^{>0}(f)$, but in reality there is no need to compute the parities of sets with size $1$ (hence why the register $\mathtt{P}$ is shown with size $m$). Moreover, we apply $\mathsf{Z}$ gates onto all registers $\{\mathtt{P}_S\}_S$ controlled on $x_0,\dots,x_{n-1}$ being in the $|1\rangle$ state, but in reality the $\mathsf{Z}$ gates controlled on $x_i$ are only applied onto the registers indexed by sets $S$ such that $S\ni i$.}
        \label{fig:boolean_construction_gt}
    \end{figure}
    
    \begin{enumerate}
        \item Apply Hadamard gates to an ancillary register $|0\rangle_{\mathtt{P}}^{\otimes|\operatorname{supp}^{>1}(f)|} = \bigotimes_{S\in\operatorname{supp}^{>1}(f)}|0\rangle_{\mathtt{P}_S}$. Then, using $1$ $\mathsf{GT}$ gate with arity $|\bigcup_{S\in\operatorname{supp}^{>1}(f)}S| + \sum_{S\in\operatorname{supp}^{>1}(f)}|S|$, apply, for each $i\in[n]$, a $\mathsf{Z}$ gate controlled on $|x_i\rangle_{\mathtt{I}_i}$ to the registers $\mathtt{P}_S$ indexed by the sets $S\in\operatorname{supp}^{>1}(f)$ that contain~$i$. Finally, apply another layer of Hadamard gates to the ancillary register $\mathtt{P}$. We obtain
        \begin{align*}
            |x\rangle_{\mathtt{I}}|b\rangle_{\mathtt{T}} \mapsto |x\rangle_{\mathtt{I}}|b\rangle_{\mathtt{T}}\bigotimes_{S\in\operatorname{supp}^{>1}(f)}\big|{\bigoplus}_{i\in S}x_i\big\rangle_{\mathtt{P}_S}.
        \end{align*}

        \item [2.] Apply the gate (write $\mathtt{P}_S := \mathtt{I}_S$ if $|S|=1$)
        \begin{align*}
            &\scalemath{0.97}{\left(e^{i\pi\alpha(0^n)}\prod_{S\in\operatorname{supp}^{>0}(\alpha)}\mathsf{Z}(\widehat{\alpha}(-2S))_{\to\mathtt{P}_{S}}\right)\left(\mathsf{Z}(\beta(0^n))_{\to\mathtt{T}}\prod_{S\in\operatorname{supp}^{>0}(\beta)}\mathsf{C}_{\mathtt{P}_S}\text{-}\mathsf{Z}(\widehat{\beta}(-2S))_{\to\mathtt{T}}\right)\mathsf{H}_{\to\mathtt{T}}} \\
            &\scalemath{0.97}{\cdot\left(\mathsf{Z}(\gamma(0^n))_{\to\mathtt{T}}\prod_{S\in\operatorname{supp}^{>0}(\gamma)}\mathsf{C}_{\mathtt{P}_S}\text{-}\mathsf{Z}(\widehat{\gamma}(-2S))_{\to\mathtt{T}}\right)\mathsf{H}_{\to\mathtt{T}}\left(\mathsf{Z}(\delta(0^n))_{\to\mathtt{T}}\prod_{S\in\operatorname{supp}^{>0}(\delta)}\mathsf{C}_{\mathtt{P}_{S}}\text{-}\mathsf{Z}(\widehat{\delta}(-2S))_{\to\mathtt{T}}\right)}
        \end{align*}
        using $3$ $\mathsf{GT}$ gates (one for each $\prod_{S\in\operatorname{supp}^{>0}(\cdot)}\mathsf{C}_{\mathtt{P}_S}\text{-}\mathsf{Z}(\cdot)_{\to\mathtt{T}}$).
        \item [3.] Uncompute Step~$1$.
    \end{enumerate}
    We now analyse the resources for each step:
    \begin{itemize}
        \item Step $1$: in the Fan-Out-based construction we need to copy each $x_i$, $i\in[n]$, for every\footnote{We do not need to copy $|x_i\rangle$ for $S\ni i$ if $|S|=1$ since the state $\big|\bigoplus_{j\in S}x_j\big\rangle$ already equals $|x_i\rangle$.} $S\in\operatorname{supp}^{>1}(f)$ such that $i\in S$. %Equivalently, we need to copy $x_i$ for $i\in \bigcup_{S\in\operatorname{supp}^{>1}(f)}S$. 
        Registers $\{\mathtt{R}_S\}_S$ thus require $\sum_{i=0}^{n-1} m_i = \sum_{S\in\operatorname{supp}^{>1}(f)} |S|$ ancillae. Such copying can be done with $|\bigcup_{S\in\operatorname{supp}^{>1}(f)}S|$ Fan-Outs with arity $\leq 1+\max_{i\in[n]}m_i$. Moreover, all the parity terms $\bigoplus_{i\in S} x_i$ in registers $\{\mathtt{P}_S\}_S$ for $S\in\operatorname{supp}^{>1}(f)$ require $|\operatorname{supp}^{>1}(f)|$ ancillae and can be computed using either $|\operatorname{supp}^{>1}(f)|$ Fan-Outs with arity $\leq 1+\operatorname{deg}(f)$ or $1$ $\mathsf{GT}$ gate with arity $|\bigcup_{S\in\operatorname{supp}^{>1}(f)}S| + \sum_{S\in\operatorname{supp}^{>1}(f)}|S| \leq 2\sum_{S\in\operatorname{supp}^{>1}(f)}|S|$;
        
        \item Step $2$: constructing $f(x)$ requires either $m-1$ ancillae and $6$ Fan-Out gates with arity $m$ or no ancillae and $3$ $\mathsf{GT}$ gates with arity $m$;
        
        \item Step $3$: either $|\bigcup_{S\in\operatorname{supp}^{>1}(f)}S| + |\operatorname{supp}^{>1}(f)|$ Fan-Out gates or $1$ $\mathsf{GT}$ gate.
    \end{itemize}
    We use $\sum_{S\in\operatorname{supp}^{>1}(f)}|S| + |\operatorname{supp}^{>0}(f)| + |\operatorname{supp}^{>1}(f)| = \sum_{S\in\operatorname{supp}(f)}|S| + 2|\operatorname{supp}^{>1}(f)|$ ancillae and $2|\bigcup_{S\in\operatorname{supp}^{>1}(f)}S|+2|\operatorname{supp}^{>1}(f)|+6$ Fan-Outs with arity $\leq 1+\max\{m,\operatorname{deg}(f)\}$, or $|\operatorname{supp}^{>1}(f)|$ ancillae and $5$ $\mathsf{GT}$ gates with arity $\leq \max\{m,2 \sum_{S\in\operatorname{supp}^{>1}(f)} |S|\} \leq 2\sum_{S\in\operatorname{supp}(f)} |S|$.
\end{proof}

Instead of exactly simulating an $f$-$\mathsf{UCG}^{(n)}$ as in the previous result, it is possible to approximate it by a simpler $\mathsf{UCG}^{(n)}$. For such, we can employ real polynomials that equal $\alpha,\beta,\gamma,\delta$ on all inputs up to a small additive error. We shall use the next technical result.
\begin{lemma}[{\cite[Theorem~5.12]{o2014analysis}}]
    \label{thr:polynomial_approx}
    Let $f:\{0,1\}^n\to\mathbb{R}$ be nonzero, $\epsilon>0$, and $s\geq 4n\hat{\|}f\hat{\|}_1^2/\epsilon^2$ an integer. Then there is a multilinear polynomial $p:\{0,1\}^n\to\mathbb{R}$ of degree and sparsity at most $\operatorname{deg}(f)$ and $s$, respectively, such that $\max_{x\in\{0,1\}^n}|f(x)-p(x)| < \epsilon$.
\end{lemma}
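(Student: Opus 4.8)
The plan is to use the probabilistic method, sampling monomials at random from the Fourier spectrum of $f$ in proportion to the magnitudes of its coefficients. Write $W := \hat{\|}f\hat{\|}_1 = \sum_{S\subseteq[n]}|\widehat{f}(S)|$, which is nonzero since $f$ is nonzero. First I would define a probability distribution $\mu$ on $\operatorname{supp}(f)$ by $\mu(S) := |\widehat{f}(S)|/W$, and for a set $S$ drawn from $\mu$ consider the random function $Y(x) := W\cdot\sgn(\widehat{f}(S))\,\chi_S(x)$. A one-line computation shows that $\E_{S\sim\mu}[Y(x)] = \sum_{S}\mu(S)\,W\,\sgn(\widehat{f}(S))\,\chi_S(x) = \sum_S \widehat{f}(S)\chi_S(x) = f(x)$ for every fixed $x$, so $Y$ is an unbiased single-sample estimator of $f$ pointwise, and crucially it is uniformly bounded: $|Y(x)|\le W$ always.

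Next I would draw $S_1,\dots,S_s$ independently from $\mu$ and set $p := \tfrac{1}{s}\sum_{j=1}^s W\,\sgn(\widehat{f}(S_j))\,\chi_{S_j}$. By linearity $\E[p(x)] = f(x)$ for each fixed $x$, and each summand takes values in $[-W,W]$. Fixing $x$ and applying Hoeffding's inequality to the average of $s$ independent variables each of range $2W$ yields $\Pr\big[\,|p(x)-f(x)|\geq\epsilon\,\big]\leq 2\exp\!\big(-s\epsilon^2/(2W^2)\big)$. A union bound over all $2^n$ inputs then gives that $|p(x)-f(x)|\geq\epsilon$ for some $x$ with probability at most $2^{n+1}\exp\!\big(-s\epsilon^2/(2W^2)\big)$. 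Choosing $s\geq 4nW^2/\epsilon^2$ drives this strictly below $1$, since $4n > 2(n+1)\ln 2$ for all $n\geq 1$; hence with positive probability the sampled $p$ satisfies $\max_{x}|f(x)-p(x)| < \epsilon$, so such a $p$ exists.

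It remains to verify the structural claims about $p$. Every character $\chi_{S_j}$ is a multilinear monomial, so any linear combination of them, in particular $p$, is multilinear. Each sampled index $S_j$ lies in $\operatorname{supp}(f)$ (the distribution $\mu$ is supported there), so $|S_j|\leq\operatorname{deg}(f)$, giving $\operatorname{deg}(p)\leq\operatorname{deg}(f)$. Finally $p$ is a combination of at most $s$ characters (coinciding samples only reduce the count), so its sparsity is at most $s$.

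The argument is a standard concentration-plus-union-bound, so the only genuine care points are choosing the estimator $Y$ so that it is simultaneously unbiased and uniformly bounded by $W$ — this is exactly what makes Hoeffding applicable with the correct range $2W$ — and checking that the prescribed threshold $s\geq 4nW^2/\epsilon^2$ dominates the $2(n+1)\ln 2\cdot W^2/\epsilon^2$ demanded by the union bound, a routine inequality valid for all $n\geq 1$. I expect no difficulty beyond this bookkeeping of constants.
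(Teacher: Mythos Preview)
The paper does not prove this lemma; it is quoted verbatim from O'Donnell's textbook and used as a black box. Your proposal is correct and is precisely the standard argument (random sampling from the spectral measure, Hoeffding, union bound) that appears in the cited source, so there is nothing to compare.
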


\begin{theorem}[Approximate real implementation of $f$-$\mathsf{UCG}$]\label{thr:ucg_boolean_construction2}
    Let $\epsilon>0$. Given $f:\{0,1\}^n\to\mathcal{U}(\mathbb{C}^{2\times 2})$, let $\alpha,\beta,\gamma,\delta:\{0,1\}^n\to[-1,1]$ be its $\mathsf{Z}$-decomposition. For $\nu\in\{\alpha,\beta,\gamma,\delta\}$, define $s_\nu = \min\{\operatorname{supp}^{>1}(\nu),\big\lceil 64\pi^2n\hat{\|}\nu^{>1}\hat{\|}_1^2/\epsilon^2\big\rceil\}$ and $s = s_\alpha + s_\beta + s_\gamma + s_\delta$. There is a $O(1)$-depth circuit that implements an $f'$-$\mathsf{UCG}$ with $f':\{0,1\}^n\to\mathcal{U}(\mathbb{C}^{2\times 2})$ such that $\max_{x\in\{0,1\}^n}\|f(x) - f'(x)\| \leq \epsilon$ and uses 
    \begin{itemize}
        \item either $s(\operatorname{deg}(f)+2) + |\operatorname{supp}^{=1}(f)|$ ancillae and $2s+2|\bigcup_{S\in\operatorname{supp}^{>1}(f)}S|+6$ Fan-Out gates with arity $\leq 1 + \max\{s+|\operatorname{supp}^{=1}(f)|,\operatorname{deg}(f)\}$,
        \item or $s$ ancillae and $5$ $\mathsf{GT}$ gates with arity $\leq 2s\operatorname{deg}(f) + 2|\operatorname{supp}^{= 1}(f)|$.
    \end{itemize} 
\end{theorem}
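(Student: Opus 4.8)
The plan is to deduce this approximate statement from the exact construction of Theorem~\ref{thr:ucg_boolean_construction1}, applied not to $f$ itself but to a nearby uniformly controlled gate $f'$ whose $\mathsf{Z}$-decomposition keeps the degree-$\le 1$ Fourier parts of $f$ intact but replaces each high-degree part by a \emph{sparse} approximation. The first step is an error-propagation bound: I would write $f(x)=e^{i\pi\alpha(x)}\mathbb{I}\cdot\mathsf{Z}(\beta(x))\cdot\mathsf{H}\cdot\mathsf{Z}(\gamma(x))\cdot\mathsf{H}\cdot\mathsf{Z}(\delta(x))$ as a product of six unitaries, and likewise for $f'$. Since the two $\mathsf{H}$ factors are common to both products, the telescoping identity for products of unitaries together with $\|\mathsf{Z}(\theta)-\mathsf{Z}(\theta')\|=|e^{i\pi\theta}-e^{i\pi\theta'}|\le\pi|\theta-\theta'|$ and $|e^{i\pi a}-e^{i\pi b}|\le\pi|a-b|$ yields
\[
\|f(x)-f'(x)\|\le\pi\!\!\sum_{\nu\in\{\alpha,\beta,\gamma,\delta\}}\!\!|\nu(x)-\nu'(x)|\qquad\text{for all }x.
\]

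Next I would build the sparsified $\nu'$. For each $\nu$ I keep the degree-$\le 1$ part $\nu^{\le 1}$ exactly and approximate only $\nu^{>1}$. If $|\operatorname{supp}^{>1}(\nu)|\le\lceil 64\pi^2 n\hat{\|}\nu^{>1}\hat{\|}_1^2/\epsilon^2\rceil$ I take $p_\nu:=\nu^{>1}$ exactly (so $s_\nu=|\operatorname{supp}^{>1}(\nu)|$); otherwise I apply Lemma~\ref{thr:polynomial_approx} to the nonzero function $\nu^{>1}$ with error parameter $\epsilon/(4\pi)$ and sparsity $s_\nu=\lceil 64\pi^2 n\hat{\|}\nu^{>1}\hat{\|}_1^2/\epsilon^2\rceil\ge 4n\hat{\|}\nu^{>1}\hat{\|}_1^2/(\epsilon/(4\pi))^2$, obtaining $p_\nu$ with $\max_x|\nu^{>1}(x)-p_\nu(x)|<\epsilon/(4\pi)$. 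The structural point I would stress is that in either case $\operatorname{supp}(p_\nu)\subseteq\operatorname{supp}^{>1}(\nu)$ (the sparse approximator of Lemma~\ref{thr:polynomial_approx} only samples from the existing Fourier support), so $p_\nu$ contains no monomials of degree $\le 1$. Setting $\nu':=\nu^{\le 1}+p_\nu$ and letting $f'$ be the UCG with $\mathsf{Z}$-decomposition $(\alpha',\beta',\gamma',\delta')$, the bound above gives $\max_x\|f(x)-f'(x)\|<\pi\cdot 4\cdot\epsilon/(4\pi)=\epsilon$.

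For the resources I would run the exact circuit of Theorem~\ref{thr:ucg_boolean_construction1} on $f'$ and translate its cost using the support relations forced by the construction: $\operatorname{supp}^{=1}(f')=\operatorname{supp}^{=1}(f)$, $\operatorname{supp}^{>1}(f')\subseteq\operatorname{supp}^{>1}(f)$ with $|\operatorname{supp}^{>1}(f')|\le\sum_\nu s_\nu=s$, $\operatorname{deg}(f')\le\operatorname{deg}(f)$, and $\bigcup_{S\in\operatorname{supp}^{>1}(f')}S\subseteq\bigcup_{S\in\operatorname{supp}^{>1}(f)}S$. Substituting these into the counts of Theorem~\ref{thr:ucg_boolean_construction1} and using $\sum_{S\in\operatorname{supp}(f')}|S|\le|\operatorname{supp}^{=1}(f)|+s\,\operatorname{deg}(f)$ reproduces the claimed ancilla count $s(\operatorname{deg}(f)+2)+|\operatorname{supp}^{=1}(f)|$ and gate counts; for the $\mathsf{GT}$ arity one also invokes $s\le 2s\,\operatorname{deg}(f)$, valid since $\operatorname{deg}(f)\ge 1$ whenever $s>0$.

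The main obstacle I anticipate is the bookkeeping in the final step, specifically the need to argue that sparsifying only the high-degree part leaves $|\operatorname{supp}^{=1}(f')|$ exactly equal to $|\operatorname{supp}^{=1}(f)|$; this hinges on $\operatorname{supp}(p_\nu)\subseteq\operatorname{supp}^{>1}(\nu)$ rather than the weaker degree bound $\operatorname{deg}(p_\nu)\le\operatorname{deg}(\nu)$ stated in Lemma~\ref{thr:polynomial_approx}, so that fact must be extracted from the proof of the lemma. One must also treat the $\min$ defining $s_\nu$ and the degenerate case $\nu^{>1}\equiv 0$ (where Lemma~\ref{thr:polynomial_approx} does not apply and one simply sets $p_\nu=0$). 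The error-propagation estimate itself is routine, but it has to be stated so that the factor $\pi$ and the choice of approximation error $\epsilon/(4\pi)$ combine to give precisely $\epsilon$.
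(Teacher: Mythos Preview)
Your proposal is correct and follows essentially the same route as the paper: sparsify each $\nu^{>1}$ via Lemma~\ref{thr:polynomial_approx} with per-function error $\epsilon/(4\pi)$, define $f'$ via the $\mathsf{Z}$-decomposition $(\alpha^{\le1}+p_\alpha,\dots,\delta^{\le1}+p_\delta)$, bound $\|f(x)-f'(x)\|$ by a telescoping sum (the paper uses the equivalent $2|\sin(\pi(\cdot)/2)|$ form), and then invoke Theorem~\ref{thr:ucg_boolean_construction1} on $f'$ with the support substitutions you list. You are in fact more careful than the paper on two points it leaves implicit---the case split in the $\min$ defining $s_\nu$, and the need for $\operatorname{supp}(p_\nu)\subseteq\operatorname{supp}^{>1}(\nu)$ (rather than merely $\deg(p_\nu)\le\deg(\nu)$) to preserve $\operatorname{supp}^{=1}$ exactly---so your identification of these as the main bookkeeping obstacles is well placed.
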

\begin{proof}
    Consider the initial state $|x\rangle_{\mathtt{I}}|b\rangle_{\mathtt{T}}$ for  $x\in\{0,1\}^n$ and $b\in\{0,1\}$. We wish to perform the operation $|x\rangle_{\mathtt{I}}|b\rangle_{\mathtt{T}} \mapsto |x\rangle_{\mathtt{I}}f'(x)|b\rangle_{\mathtt{T}}$ for some $f':\{0,1\}^n\to\mathcal{U}(\mathbb{C}^{2\times 2})$ such that $\max_{x\in\{0,1\}^n}\|f(x) - f'(x)\| \leq \epsilon$. Consider, for $\nu\in\{\alpha,\beta,\gamma,\delta\}$, a multilinear polynomial $p_\nu:\{0,1\}^n\to\mathbb{R}$ of degree and sparsity at most $\operatorname{deg}(f)$ and $s_\nu$, respectively, such that $\max_{x\in\{0,1\}^n}|p_\nu(x)-\nu^{>1}(x)| < \sqrt{2\epsilon}/\pi$ according to \Cref{thr:polynomial_approx}.
    Assume without lost of generality that $|\operatorname{supp}(p_\nu)| < |\operatorname{supp}^{>1}(\nu)|$ for $\nu\in\{\alpha,\beta,\gamma,\delta\}$. Our construction is the same as from \Cref{thr:ucg_boolean_construction1}, the only difference is that we now use $p_\alpha$ in place of $\alpha^{>1}$, so $\alpha(x)$ is replaced with $\alpha^{\leq 1}(x) + p_\alpha(x)$ (and similarly for $\beta$, $\gamma$,~$\delta$). This means that $\supp^{>1}(f)$ is replaced with $\supp(p_\alpha)\cup\supp(p_\beta)\cup\supp(p_\gamma)\cup\supp(p_\delta)$. The number of resources follows from \Cref{thr:ucg_boolean_construction1} by replacing
    \begin{align*}
        |\supp^{>1}(f)| &\to |\supp(p_\alpha)\cup\supp(p_\beta)\cup\supp(p_\gamma)\cup\supp(p_\delta)|,\\
        |\supp^{>0}(f)| &\to |\supp^{=1}(f)| + |\supp(p_\alpha)\cup\supp(p_\beta)\cup\supp(p_\gamma)\cup\supp(p_\delta)|,\\
        \Big|{\bigcup}_{S\in\supp^{>1}(f)}S\Big| &\to \Big|{\bigcup}_{S\in\supp(p_\alpha)\cup\supp(p_\beta)\cup\supp(p_\gamma)\cup\supp(p_\delta)}S\Big|,\\
        {\sum}_{S\in\supp(f)}|S| &\to |\supp^{=1}(f)| + {\sum}_{ S\in\supp(p_\alpha)\cup\supp(p_\beta)\cup\supp(p_\gamma)\cup\supp(p_\delta)}|S|,
    \end{align*}
    and bounding
    \begin{align*}
        |\supp(p_\alpha)\cup\supp(p_\beta)\cup\supp(p_\gamma)\cup\supp(p_\delta)| &\leq s,\\
        \Big|{\bigcup}_{S\in\supp(p_\alpha)\cup\supp(p_\beta)\cup\supp(p_\gamma)\cup\supp(p_\delta)}S\Big| &\leq \Big|{\bigcup}_{S\in\supp^{>1}(f)}S\Big|, \\
        {\sum}_{S\in\supp(p_\alpha)\cup\supp(p_\beta)\cup\supp(p_\gamma)\cup\supp(p_\delta)}|S| &\leq s\operatorname{deg}(f).
    \end{align*}

    To show correctness of the circuit, define $\overline{p}_\alpha := \alpha^{\leq 1} + p_\alpha$ for simplicity (and similarly for $\overline{p}_\beta$, $\overline{p}_\gamma$, $\overline{p}_\delta$). Then (omit the $x$ dependence for clarity)
    \begin{align*}
        \|&e^{i\pi \alpha}\mathsf{Z}(\beta)\mathsf{H}\mathsf{Z}(\gamma)\mathsf{H}\mathsf{Z}(\delta) - e^{i\pi \overline{p}_\alpha}\mathsf{Z}(\overline{p}_\beta)\mathsf{H}\mathsf{Z}(\overline{p}_\gamma)\mathsf{H}\mathsf{Z}(\overline{p}_\delta) \|\\
        &\leq \|(e^{i\pi \alpha}-e^{i\pi\overline{p}_\alpha})\mathsf{Z}(\beta)\mathsf{H}\mathsf{Z}(\gamma)\mathsf{H}\mathsf{Z}(\delta)\| + \|e^{i\pi \overline{p}_\alpha}\mathsf{Z}(\beta)\mathsf{H}\mathsf{Z}(\gamma)\mathsf{H}\mathsf{Z}(\delta) - e^{i\pi \overline{p}_\alpha}\mathsf{Z}(\overline{p}_\beta)\mathsf{H}\mathsf{Z}(\overline{p}_\gamma)\mathsf{H}\mathsf{Z}(\overline{p}_\delta) \|\\
        &= 2|\sin(\pi(\alpha - \overline{p}_\alpha)/2)| + \|\mathsf{Z}(\beta)\mathsf{H}\mathsf{Z}(\gamma)\mathsf{H}\mathsf{Z}(\delta) - \mathsf{Z}(\overline{p}_\beta)\mathsf{H}\mathsf{Z}(\overline{p}_\gamma)\mathsf{H}\mathsf{Z}(\overline{p}_\delta) \|\\
        &\leq 2|\sin(\pi(\alpha - \overline{p}_\alpha)/2)| + \|\big(\mathsf{Z}(\beta)-\mathsf{Z}(\overline{p}_\beta)\big)\mathsf{H}\mathsf{Z}(\gamma)\mathsf{H}\mathsf{Z}(\delta)\| + \|\mathsf{Z}(\overline{p}_\beta)\mathsf{H}\big(\mathsf{Z}(\gamma)\mathsf{H}\mathsf{Z}(\delta)-\mathsf{Z}(\overline{p}_\gamma)\mathsf{H}\mathsf{Z}(\overline{p}_\delta)\big) \|\\
        &= 2|\sin(\pi(\alpha - \overline{p}_\alpha)/2)| + 2|\sin(\pi(\beta-\overline{p}_\beta)/2)| + \|\mathsf{Z}(\gamma)\mathsf{H}\mathsf{Z}(\delta)-\mathsf{Z}(\overline{p}_\gamma)\mathsf{H}\mathsf{Z}(\overline{p}_\delta) \|\\
        &\leq 2|\sin(\pi(\alpha - \overline{p}_\alpha)/2)| + 2|\sin(\pi(\beta-\overline{p}_\beta)/2)| + \|\big(\mathsf{Z}(\gamma) - \mathsf{Z}(\overline{p}_\gamma)\big)\mathsf{H}\mathsf{Z}(\delta)\| + \|\mathsf{Z}(\overline{p}_\gamma)\mathsf{H}\big(\mathsf{Z}(\delta)-\mathsf{Z}(\overline{p}_\delta)\big) \| \\
        &= 2|\sin(\pi(\alpha - \overline{p}_\alpha)/2)| + 2|\sin(\pi(\beta-\overline{p}_\beta)/2)| + 2|\sin(\pi(\gamma - \overline{p}_\gamma)/2)| + 2|\sin(\pi(\delta-\overline{p}_\delta)/2)|\\
        &\leq 8|\sin(\epsilon/8)| \leq \epsilon. \qedhere
    \end{align*}
\end{proof}

Our final construction uses the real-polynomial $\{0,1\}$-representation based on $\mathsf{AND}$ functions which can be computed using \Cref{thr:or_constantdepth} or \Cref{thr:or_constantdepth_gt}.
\begin{theorem}[Real $\{0,1\}$-implementation of $f$-$\mathsf{UCG}$]\label{thr:ucg_boolean_construction3}
    Given $f:\{0,1\}^n\to\mathcal{U}(\mathbb{C}^{2\times 2})$, there is a $O(1)$-depth circuit for $f\text{-}\mathsf{UCG}$ that uses 
    \begin{itemize}
        \item either $\sum_{S\in\operatorname{supp}^{>0}_{\{0,1\}}(f)} \big(2|S|\log|S| + O(|S|)\big)$ ancillae and $\sum_{S\in\operatorname{supp}^{>1}_{\{0,1\}}(f)} \big(8|S| + O(\log|S|)\big)$ Fan-Out gates with arity $\leq \max\{1+|\operatorname{supp}^{>0}_{\{0,1\}}(f)|,2\operatorname{deg}(f)\}$,
        \item or $\sum_{S\in\operatorname{supp}^{>1}_{\{0,1\}}(f)} \big(3|S| + O(\log|S|)\big)$ ancillae and $9$ $\mathsf{GT}$ gates with arity $\leq 2\sum_{S\in\operatorname{supp}_{\{0,1\}}(f)} |S|$.
    \end{itemize}
\end{theorem}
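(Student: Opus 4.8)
The plan is to follow the architecture of Theorem~\ref{thr:ucg_boolean_construction1} almost verbatim, but to work with the real-polynomial $\{0,1\}$-representation of the $\mathsf{Z}$-decomposition instead of the Fourier ($\mathsf{PARITY}$) expansion. Writing $\nu(x) = \sum_{S\subseteq[n]}\widetilde{\nu}(S)\,x^S$ for each $\nu\in\{\alpha,\beta,\gamma,\delta\}$, with $x^S = \prod_{i\in S}x_i\in\{0,1\}$, the key identity driving the construction is $\mathsf{Z}(\nu(x)) = \prod_{S}\mathsf{Z}(\widetilde{\nu}(S)\,x^S)$, where each factor $\mathsf{Z}(\widetilde{\nu}(S)\,x^S)$ is simply the fixed phase $\mathsf{Z}(\widetilde{\nu}(S))$ applied conditioned on the Boolean value $x^S$. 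Thus, exactly as before, reconstructing $f(x) = e^{i\pi\alpha(x)}\mathsf{Z}(\beta(x))\mathsf{H}\mathsf{Z}(\gamma(x))\mathsf{H}\mathsf{Z}(\delta(x))$ on the target qubit reduces to (i) precomputing the conjunctions $x^S$ for all $S$ in the support, and (ii) applying, for each $\nu$, a product of phase gates controlled on those precomputed bits, interleaved with the two Hadamards on $\mathtt{T}$.

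For step (i), I would first copy each input qubit $|x_i\rangle$ the requisite $m_i := |\{S\in\operatorname{supp}^{>1}_{\{0,1\}}(f) : i\in S\}|$ number of times (so that the $\mathsf{AND}$ gates acting on overlapping sets can run in parallel), and then compute each monomial $x^S = \bigwedge_{i\in S}x_i$ into a fresh ancilla register $\mathtt{P}_S$ using the $\mathsf{AND}^{(|S|)}$ gate, realized via Fact~\ref{thr:or_constantdepth} in the Fan-Out model or Fact~\ref{thr:or_constantdepth_gt} in the $\mathsf{GT}$ model (recall $\mathsf{AND}^{(k)} = \mathsf{EXACT}^{(k)}[k]$). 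Monomials with $|S|=1$ need no computation since $x^{\{i\}} = x_i$, and the empty set contributes only the unconditional global factors $\mathsf{Z}(\nu(\emptyset))$; this is why the $\mathsf{AND}$ work ranges only over $\operatorname{supp}^{>1}_{\{0,1\}}(f)$. For step (ii), I would reconstruct each $\mathsf{Z}(\nu(x))$ by the cat-state trick of H\o{}yer and \v{S}palek: in the Fan-Out case, fan out the target into a cat state of size $|\operatorname{supp}^{>0}_{\{0,1\}}(f)|$, apply $\mathsf{Z}(\widetilde{\nu}(S))$ controlled on $\mathtt{P}_S$ (or on $|x_i\rangle_{\mathtt{I}}$ when $|S|=1$) to distinct cat qubits, and fold the cat state back; in the $\mathsf{GT}$ case, a single $\mathsf{GT}$ per factor $\nu$ performs all controlled phases at once, exactly as in Theorem~\ref{thr:ucg_boolean_construction1}. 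Finally, I would uncompute the registers $\mathtt{P}_S$ and the input copies.

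The main obstacle, and the only substantive departure from Theorem~\ref{thr:ucg_boolean_construction1}, is that computing the conjunctions $x^S$ is genuinely expensive, whereas the parities $\chi_S$ there were essentially free (a single Fan-Out via Hadamard conjugation, Fact~\ref{fact:fanoutparity}). Here each $x^S$ requires the full $\mathsf{OR}/\mathsf{EXACT}$-reduction machinery, contributing the $2|S|\log|S| + O(|S|)$ ancillae and $6|S| + O(\log|S|)$ Fan-Out gates of Fact~\ref{thr:or_constantdepth} per set $S$; together with the input-copying and its uncomputation this yields the $8|S| + O(\log|S|)$ Fan-Out count. I expect the delicate bookkeeping to be the same postponed-uncomputation accounting used in Theorems~\ref{thr:ucg_construction} and~\ref{thr:ucg_boolean_construction1}: the inner uncomputation of each $\mathsf{AND}$ gate is deferred to the final step and its ancillae carried over, so the compute phase costs half and the $\mathsf{GT}$ tally rises from the $5$ of the Fourier construction to $9$ (one $\mathsf{GT}$ to copy inputs, two to compute the conjunctions, three for the phase reconstruction, two to uncompute the conjunctions, and one to uncopy, with absorptions keeping the total at $9$). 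Correctness itself is routine once the identity $\mathsf{Z}(\sum_S\widetilde{\nu}(S)x^S) = \mathsf{Z}(\nu(x))$ is in hand, so no separate error analysis is needed (unlike the approximate construction of Theorem~\ref{thr:ucg_boolean_construction2}).
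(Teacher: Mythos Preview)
Your proposal is correct and follows essentially the same approach as the paper: precompute the monomials $x^S$ via copied inputs and parallel $\mathsf{AND}^{(|S|)}$ gates (Facts~\ref{thr:or_constantdepth}/\ref{thr:or_constantdepth_gt}), then apply the $\mathsf{Z}$-decomposition phases controlled on the $\mathtt{P}_S$ registers using the cat-state/$\mathsf{GT}$ parallelization, and finally uncompute with the deferred inner-uncomputation accounting. Your resource breakdown, including the $8|S|+O(\log|S|)$ Fan-Out tally (copy $+$ $\mathsf{AND}$ $+$ uncompute both) and the $1+2+3+2+1=9$ $\mathsf{GT}$ accounting, matches the paper's.
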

\begin{proof}
    Consider the initial state $|x\rangle_{\mathtt{I}}|b\rangle_{\mathtt{T}}$ for  $x\in\{0,1\}^n$ and $b\in\{0,1\}$. We wish to implement $|x\rangle_{\mathtt{I}}|b\rangle_{\mathtt{T}} \mapsto |x\rangle_{\mathtt{I}}f(x)|b\rangle_{\mathtt{T}}$. Let $\alpha(x) = \sum_{S\subseteq[n]} \widetilde{\alpha}(S)x^S$ be the real-polynomial $\{0,1\}$-representation of $\alpha$, and similarly for $\beta,\gamma,\delta$. Write $m:= |\operatorname{supp}^{>0}_{\{0,1\}}(f)|$ and $m_i := |\{S\in\operatorname{supp}^{>1}_{\{0,1\}}(f):i\in S\}|$ for the number of sets of size greater than $1$ that contain the coordinate $i\in[n]$. 
    
    Consider first the Fan-Out-based circuit:
    \begin{enumerate}
        \item Attach an ancillary register $\bigotimes_{S\in\operatorname{supp}^{>1}_{\{0,1\}}(f)}|0\rangle^{\otimes |S|}_{\mathtt{R}_S}$. For each $i\in[n]$ in parallel, copy $m_i$ number of times the qubit $|x_i\rangle_{\mathtt{I}}$ by using a $(1+m_i)$-arity Fan-Out to obtain
        \begin{align*}
            |x\rangle_{\mathtt{I}}|b\rangle_{\mathtt{T}} \mapsto |x\rangle_{\mathtt{I}} |b\rangle_{\mathtt{T}}\bigotimes_{S\in\operatorname{supp}^{>1}_{\{0,1\}}(f)}|x_S\rangle_{\mathtt{R}_S}.
        \end{align*}
        \item Attach an ancillary register $\bigotimes_{S\in\operatorname{supp}^{>1}_{\{0,1\}}(f)}|0\rangle_{\mathtt{P}_S}$. For each $S\in\operatorname{supp}^{>1}_{\{0,1\}}(f)$ in parallel, apply an $\mathsf{AND}_{\mathtt{R}_S\to\mathtt{P}_S}^{(|S|)}$ gate to get
        \begin{align*}
            |x\rangle_{\mathtt{I}} |b\rangle_{\mathtt{T}}\bigotimes_{S\in\operatorname{supp}^{>1}_{\{0,1\}}(f)}|x_S\rangle_{\mathtt{R}_S} \mapsto |x\rangle_{\mathtt{I}}|b\rangle_{\mathtt{T}}\bigotimes_{S\in\operatorname{supp}^{>1}_{\{0,1\}}(f)}|x_S\rangle_{\mathtt{R}_S}|x^S\rangle_{\mathtt{P}_S}.
        \end{align*}
        \item [3a.] Attach an ancillary register $|0\rangle^{\otimes (m-1)}_{\mathtt{T}'}$ and apply an $m$-arity Fan-Out gate $\mathsf{FO}^{(m)}_{\mathtt{T}\to\mathtt{T}'}$ from register $\mathtt{T}$ onto register $\mathtt{T}'$.  Apply a $\mathsf{Z}(\widetilde{\delta}(\emptyset))_{\to\mathtt{T}}$ gate onto register $\mathtt{T}$. Then, for each $S\in\operatorname{supp}^{>0}_{\{0,1\}}(\delta)$ in parallel, apply a $\mathsf{Z}(\widetilde{\delta}(S))$ gate controlled on register $\mathtt{P}_S$ onto the $S$-th qubit in register $\mathtt{T}'$ (if $|S|=1$, apply the gate onto $|x_S\rangle_{\mathtt{I}}$). Finally, apply $\mathsf{FO}^{(m)}_{\mathtt{T}\to\mathtt{T}'}$ again. This chain of operations leads to (omit registers $\mathtt{P}_S$ and $\mathtt{R}_S$ for simplicity)
        \begin{align*}
            |x\rangle_{\mathtt{I}}|b\rangle_{\mathtt{T}} \mapsto 
            |x\rangle_{\mathtt{I}}|b\rangle^{\otimes m}_{\mathtt{T},\mathtt{T}'}
            \mapsto |x\rangle_{\mathtt{I}}\mathsf{Z}\left(\sum_{S\subseteq[n]}\widetilde{\delta}(S)x^S \right) |b\rangle^{\otimes m}_{\mathtt{T},\mathtt{T}'}
            \mapsto |x\rangle_{\mathtt{I}}\mathsf{Z}(\delta(x)) |b\rangle_{\mathtt{T}}.
        \end{align*}
        \item [3b.] Apply a $\mathsf{H}_{\to\mathtt{T}}$ gate onto register $\mathtt{T}$ followed by an $m$-arity Fan-Out gate $\mathsf{FO}^{(m)}_{\mathtt{T}\to\mathtt{T}'}$ from register $\mathtt{T}$ onto register $\mathtt{T}'$. Apply a $\mathsf{Z}(\widetilde{\gamma}(\emptyset))_{\to\mathtt{T}}$ gate onto register $\mathtt{T}$. Then, for each $S\in\operatorname{supp}^{>0}_{\{0,1\}}(\gamma)$ in parallel, apply a $\mathsf{Z}(\widetilde{\gamma}(S))$ gate controlled on register $\mathtt{P}_S$ onto the $S$-th qubit in register $\mathtt{T}'$ (if $|S|=1$, apply the gate onto $|x_S\rangle_{\mathtt{I}}$). Finally, apply $\mathsf{FO}^{(m)}_{\mathtt{T}\to\mathtt{T}'}$ again. We obtain
        \begin{align*}
            |x\rangle_{\mathtt{I}}\mathsf{H}\mathsf{Z}(\delta(x))|b\rangle_{\mathtt{T}} \mapsto  |x\rangle_{\mathtt{I}}\mathsf{Z}\left(\sum_{S\subseteq[n]}\widetilde{\gamma}(S)x^S \right)\mathsf{H}\mathsf{Z}(\delta(x)) |b\rangle_{\mathtt{T}}
            = |x\rangle_{\mathtt{I}}\mathsf{Z}(\gamma(x)) \mathsf{H}\mathsf{Z}(\delta(x))|b\rangle_{\mathtt{T}}.
        \end{align*}

        \item [3c.] Apply a $\mathsf{H}_{\to\mathtt{T}}$ gate onto register $\mathtt{T}$ followed by an $m$-arity Fan-Out gate $\mathsf{FO}^{(m)}_{\mathtt{T}\to\mathtt{T}'}$ from register $\mathtt{T}$ onto register $\mathtt{T}'$.  Apply a $\mathsf{Z}(\widetilde{\beta}(\emptyset))_{\to\mathtt{T}}$ gate onto register $\mathtt{T}$. Then, for each $S\in\operatorname{supp}^{>0}_{\{0,1\}}(\beta)$ in parallel, apply a $\mathsf{Z}(\widetilde{\beta}(S))$ gate controlled on register $\mathtt{P}_S$ onto the $S$-th qubit in register $\mathtt{T}'$ (if $|S|=1$, apply the gate onto $|x_S\rangle_{\mathtt{I}}$). Finally, apply $\mathsf{FO}^{(m)}_{\mathtt{T}\to\mathtt{T}'}$ again. Similarly to the previous steps, this chain of operations leads to
        \begin{align*}
            |x\rangle_{\mathtt{I}}\mathsf{H}\mathsf{Z}(\gamma(x)) \mathsf{H}\mathsf{Z}(\delta(x))|b\rangle_{\mathtt{T}}
            \mapsto |x\rangle_{\mathtt{I}}\mathsf{Z}(\beta(x))\mathsf{H}\mathsf{Z}(\gamma(x)) \mathsf{H}\mathsf{Z}(\delta(x))|b\rangle_{\mathtt{T}}.
        \end{align*}

        \item [3d.] Apply an overall phase $e^{i\pi \widetilde{\alpha}(\emptyset)}$. Then, for each $S\in\operatorname{supp}^{>0}_{\{0,1\}}(\alpha)$ in parallel, apply a $\mathsf{Z}(\widetilde{\alpha}(S))$ gate onto register $\mathtt{P}_S$ (if $|S|=1$, apply the gate onto $|x_S\rangle_{\mathtt{I}}$). This yields
        \begin{align*}
            |x\rangle_{\mathtt{I}} e^{i\pi \alpha(x)} \mathsf{Z}(\beta(x))\mathsf{H}\mathsf{Z}(\gamma(x)) \mathsf{H}\mathsf{Z}(\delta(x))|b\rangle_{\mathtt{T}} = |x\rangle_{\mathtt{I}}f(x)|b\rangle_{\mathtt{T}}.
        \end{align*}

        \item [4.] Uncompute Steps~$1$ and~$2$.
    \end{enumerate}
    We now consider the $\mathsf{GT}$-gate-based circuit. Steps~$3$a-d are replaced with the following Step~$3$. In the following, write register $\mathtt{I}$ as $|x\rangle_{\mathtt{I}} = \bigotimes_{i\in[n]}|x_i\rangle_{\mathtt{I}_i}$.
    \begin{enumerate}
        \item [3.] Apply the gate (write $\mathtt{P}_S := \mathtt{I}_S$ if $|S|=1$ and $\mathtt{P}_{\emptyset} := \emptyset$)
        \begin{align*}
        \begin{multlined}[b][0.94\textwidth]
            \left(e^{i\pi \widetilde{\alpha}(\emptyset)}\prod_{S\in\operatorname{supp}_{\{0,1\}}^{>0}(\alpha)}\mathsf{Z}(\widetilde{\alpha}(S))_{\to\mathtt{P}_{S}}\right)\left(\prod_{S\in\operatorname{supp}_{\{0,1\}}(\beta)}\mathsf{C}_{\mathtt{P}_S}\text{-}\mathsf{Z}(\widetilde{\beta}(S))_{\to\mathtt{T}}\right)\mathsf{H}_{\to\mathtt{T}} \\
            \cdot\left(\prod_{S\in\operatorname{supp}_{\{0,1\}}(\gamma)}\mathsf{C}_{\mathtt{P}_S}\text{-}\mathsf{Z}(\widetilde{\gamma}(S))_{\to\mathtt{T}}\right)\mathsf{H}_{\to\mathtt{T}}\left(\prod_{S\in\operatorname{supp}_{\{0,1\}}(\delta)}\mathsf{C}_{\mathtt{P}_{S}}\text{-}\mathsf{Z}(\widetilde{\delta}(S))_{\to\mathtt{T}}\right)
        \end{multlined}
        \end{align*}
        using $3$ $\mathsf{GT}$ gates (one for each $\prod_{S\in\operatorname{supp}_{\{0,1\}}(\cdot)}\mathsf{C}_{\mathtt{P}_S}\text{-}\mathsf{Z}(\cdot)_{\to\mathtt{T}}$).
    \end{enumerate}
    
    We now analyse the resources for each step:
    \begin{itemize}
        \item Step~$1$: we need to copy each $x_i$, $i\in[n]$, for every $S\in\operatorname{supp}^{>1}_{\{0,1\}}(f)$ such that $i\in S$.  Thus registers $\{\mathtt{R}_S\}_S$ require $\sum_{i\in[n]} m_i = \sum_{S\in\operatorname{supp}^{>1}_{\{0,1\}}(f)} |S|$ ancillae. Such copying can be done with either $|\bigcup_{S\in\operatorname{supp}^{>1}_{\{0,1\}}(f)}S| \leq \sum_{S\in\supp^{>1}_{\{0,1\}}(f)}|S|$ Fan-Outs, each with arity at most $1+\max_{i\in[n]}m_i$, or $1$ $\mathsf{GT}$ gate with arity $|\bigcup_{S\in\operatorname{supp}^{>1}_{\{0,1\}}(f)}S|+\sum_{S\in\operatorname{supp}^{>1}_{\{0,1\}}(f)} |S| \leq 2\sum_{S\in\operatorname{supp}^{>1}_{\{0,1\}}(f)} |S|$;
        
        \item Step $2$: the $|\operatorname{supp}^{>1}_{\{0,1\}}(f)|$ $\mathsf{AND}^{(|S|)}_{\mathtt{R}_S\to\mathtt{P}_S}$ gates can be performed with either
        \begin{align*}
            \sum_{S\in\operatorname{supp}^{>1}_{\{0,1\}}(f)} \big(2|S|\log|S| + O(|S|)\big) ~\text{ancillae} \quad \text{and} \quad  \sum_{S\in\operatorname{supp}^{>1}_{\{0,1\}}(f)} \big(3|S| + O(\log|S|)\big) ~\text{Fan-Outs}
        \end{align*}
        with arity at most $2\operatorname{deg}(f)$ (\Cref{thr:or_constantdepth}), or 
        \begin{align*}
            \sum_{S\in\operatorname{supp}^{>1}_{\{0,1\}}(f)} \big(2|S| + O(\log|S|)\big) ~\text{ancillae} \quad \text{and} \quad 2~ \mathsf{GT}~\text{gates}
        \end{align*}
        with arity at most $2\operatorname{deg}(f) + O(\log\operatorname{deg}(f))$ (\Cref{thr:or_constantdepth_gt}) if we postpone their inner uncomputation part until Step~$5$;

        \item Step $3$: constructing $f(x)$ requires either $m-1$ ancillae and $6$ Fan-Out gates with arity $m$ or no ancillae and 3 $\mathsf{GT}$ gates with arity $m$;

        \item Step $4$: uncomputing Steps~$1$ and~$2$ requires either $\sum_{S\in\operatorname{supp}^{>1}_{\{0,1\}}(f)} \big(4|S| + O(\log|S|)\big)$ Fan-Out gates or $3$ $\mathsf{GT}$ gates.
    \end{itemize}
    We require either $\sum_{S\in\operatorname{supp}^{>0}_{\{0,1\}}(f)} \big(2|S|\log|S| + O(|S|)\big)$ ancillae and $\sum_{S\in\operatorname{supp}^{>1}_{\{0,1\}}(f)} \big(8|S| + O(\log|S|)\big)$ Fan-Out gates with arity $\leq\max\{1+|\operatorname{supp}^{>0}_{\{0,1\}}(f)|,2\operatorname{deg}(f)\}$ or $\sum_{S\in\operatorname{supp}^{>1}_{\{0,1\}}(f)} \big(3|S| + O(\log|S|)\big)$ ancillae and $9$ $\mathsf{GT}$ gates with arity $\leq 2\sum_{S\in\operatorname{supp}_{\{0,1\}}(f)} |S|$.
\end{proof}

\subsection{Constant-depth circuits for $f$-$\mathsf{FIN}$s}

Similarly to \Cref{sec:fin_onehot}, we now show how the circuits from the previous section can be simplified and used to implement $f$-$\mathsf{FIN}$s. 
\begin{theorem}[Real implementation of $f$-$\mathsf{FIN}$]\label{thr:fin_boolean_construction1}
    Given $f:\{0,1\}^n\to\{0,1\}$, there is a $O(1)$-depth circuit for $f\text{-}\mathsf{FIN}$ that uses 
    \begin{itemize}
        \item either $\sum_{S\in\operatorname{supp}(f)}|S| + 2|\operatorname{supp}^{>1}(f)|$ ancillae and $2|\bigcup_{S\in\operatorname{supp}^{>1}(f)}S|+2|\operatorname{supp}^{>1}(f)|+2$ Fan-Out gates with arity $\leq 1+\max\{|\operatorname{supp}^{>0}(f)|,\operatorname{deg}(f)\}$, 
        \item or $2|\operatorname{supp}^{>0}(f)|$ ancillae and $2$ $\mathsf{GT}$ gates with arity $\leq 3\sum_{S\in\operatorname{supp}(f)} |S|$.
    \end{itemize}
\end{theorem}
\begin{proof}
    Since an $f$-$\mathsf{FIN}$ is simply an $f'$-$\mathsf{UCG}$ whose $f'$ $\mathsf{Z}$-decomposition is $\mathsf{H}\mathsf{Z}(f(x))\mathsf{H}$, Step~$2$ in \Cref{thr:ucg_boolean_construction1} only requires $2$ Fan-Out gates or $1$ $\mathsf{GT}$ gate. This gives the resource count for the Fan-Out-based construction and the number of $\mathsf{GT}$ gates is reduced to $3$. 
    It is possible to further reduce the number of $\mathsf{GT}$ gates to $2$ by using the $(|\supp^{>0}(f)|-1)$-qubit register $\mathtt{T}'$ and, similarly to the Fan-Out-based construction, to use the parallelisation method depicted in \Cref{fig:two_methods} (see proof of \Cref{thr:fin_onehot}). This increases the number of ancillae to $|\supp^{>0}(f)| + |\supp^{>1}(f)| \leq 2|\supp^{>0}(f)|$. The new $\mathsf{GT}$ gates' arity is $|\bigcup_{S\in\operatorname{supp}^{>1}(f)}S| + \sum_{S\in\operatorname{supp}^{>1}(f)} |S| + |\supp^{>0}(f)| \leq 3\sum_{S\in\operatorname{supp}(f)} |S|$.
\end{proof}

To obtain the next result on an approximate implementation of $f$-$\mathsf{FIN}$s, the modifications to be made to \Cref{thr:ucg_boolean_construction2} are the same that were conducted in the previous theorem. Recall that an $f$-$\mathsf{FIN}$ is an $f'$-$\mathsf{UCG}$ such that $f'(x) = \mathsf{X}^{f(x)}$. Therefore, an approximate circuit for an $f$-$\mathsf{FIN}$ implements an $f'$-$\mathsf{UCG}$ with $f'(x)$ close to $\mathbb{I}_1$ or $\mathsf{X}$.
\begin{theorem}[Approximate real implementation of $f$-$\mathsf{FIN}$]\label{thr:fin_boolean_construction2}
    Let $\epsilon>0$, $f:\{0,1\}^n\to\{0,1\}$, and $s = \min\{\operatorname{supp}^{>1}(f),\big\lceil 4\pi^2n\hat{\|}f^{>1}\hat{\|}_1^2/\epsilon^2\big\rceil\}$. There is a $O(1)$-depth circuit that implements an $f'$-$\mathsf{UCG}$ with $f':\{0,1\}^n\to\mathcal{U}(\mathbb{C}^{2\times 2})$ such that $\max_{x\in\{0,1\}^n}\|f'(x) - \mathsf{X}^{f(x)}\| \leq \epsilon$ and uses 
    \begin{itemize}
        \item either $s(\operatorname{deg}(f)+2) + |\operatorname{supp}^{= 1}(f)|$ ancillae and $2s+2|\bigcup_{S\in\operatorname{supp}^{>1}(f)}S|+2$ Fan-Out gates with arity $\leq s+|\operatorname{supp}^{= 1}(f)|$,
        \item or $2s + |\operatorname{supp}^{= 1}(f)|$ ancillae and $2$ $\mathsf{GT}$ gates with arity $\leq s(2\operatorname{deg}(f) + 1) + |\operatorname{supp}^{= 1}(f)|$.
    \end{itemize} 
\end{theorem}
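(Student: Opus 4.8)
The plan is to follow the proof of Theorem~\ref{thr:ucg_boolean_construction2} essentially verbatim, specialized to the $f$-$\mathsf{FIN}$ setting in exactly the way Theorem~\ref{thr:fin_boolean_construction1} specializes Theorem~\ref{thr:ucg_boolean_construction1}. The starting observation is that an $f$-$\mathsf{FIN}$ is the $f'$-$\mathsf{UCG}$ with $f'(x) = \mathsf{X}^{f(x)} = \mathsf{H}\mathsf{Z}(f(x))\mathsf{H}$, so in the $\mathsf{Z}$-decomposition only $\gamma = f$ is nonzero ($\alpha = \beta = \delta = 0$). Hence there is a single real-valued function to approximate, and Lemma~\ref{thr:polynomial_approx} is invoked only once, applied to $f^{>1}$: take a multilinear polynomial $p$ of sparsity at most $s$ and degree at most $\operatorname{deg}(f)$ with $\max_{x\in\{0,1\}^n}|p(x) - f^{>1}(x)| < \epsilon/\pi$, and set $\overline{p} := f^{\leq 1} + p$. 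The circuit then implements the $f'$-$\mathsf{UCG}$ with $f'(x) = \mathsf{H}\mathsf{Z}(\overline{p}(x))\mathsf{H}$.

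For the construction itself, I would reuse the copying-and-parity stage of Theorem~\ref{thr:ucg_boolean_construction2} (computing $\bigoplus_{i\in S}x_i$ into registers $\mathtt{P}_S$ for the at most $s$ high-degree monomials $S\in\operatorname{supp}(p)$), but replace the full four-factor $\mathsf{Z}$-decomposition assembly (which costs $6$ Fan-Outs, or $3$ interior $\mathsf{GT}$ gates, for $\alpha,\beta,\gamma,\delta$) by a single Hadamard-conjugated phase stage applying $\mathsf{Z}(\overline{p}(x))$. As in Theorem~\ref{thr:fin_boolean_construction1}, this phase is applied either by the cat-state trick using $2$ Fan-Out gates, or---to save $\mathsf{GT}$ gates---by introducing the $\mathtt{T}'$ register and folding the controlled phases into the parity-computing $\mathsf{GT}$, cutting the $\mathsf{GT}$ count from the $5$ of Theorem~\ref{thr:ucg_boolean_construction2} down to $2$. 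The degree-$1$ terms in $\operatorname{supp}^{=1}(f)$ are controlled directly on the input qubits $|x_i\rangle_{\mathtt{I}}$ and never need parity registers, while the degree-$0$ Fourier weight contributes only an input-independent $\mathsf{Z}$ rotation on the cat state. The resource counts then follow from Theorem~\ref{thr:ucg_boolean_construction2} by the same substitutions used there (replacing $\operatorname{supp}^{>1}(f)$ by $\operatorname{supp}(p)$ of size $\leq s$, and so on), now with the cheaper function-assembly stage.

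For correctness, the error estimate collapses to a single term because only $\gamma$ is perturbed:
\begin{align*}
\big\|\mathsf{H}\mathsf{Z}(f(x))\mathsf{H} - \mathsf{H}\mathsf{Z}(\overline{p}(x))\mathsf{H}\big\| &= \big\|\mathsf{Z}(f(x)) - \mathsf{Z}(\overline{p}(x))\big\| = 2\big|\sin\big(\pi(f^{>1}(x) - p(x))/2\big)\big| \\
&\leq \pi\,|f^{>1}(x) - p(x)| < \epsilon,
\end{align*}
using $|\sin\theta| \leq |\theta|$ together with the approximation guarantee $\max_{x}|p(x) - f^{>1}(x)| < \epsilon/\pi$. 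This is precisely why the sparsity bound here carries the constant $4\pi^2$ rather than the $64\pi^2$ of Theorem~\ref{thr:ucg_boolean_construction2}: a single function is approximated to additive error $\epsilon/\pi$, rather than four functions each to a correspondingly smaller error. The $\min$ with $\operatorname{supp}^{>1}(f)$ covers the regime in which $f$ already has small high-degree Fourier support, where $f^{>1}$ itself is used in place of $p$.

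The main obstacle I anticipate is bookkeeping rather than conceptual: correctly carrying over the two orthogonal simplifications---the approximation (which shrinks $\operatorname{supp}^{>1}(f)$ to a set of size $\leq s$) and the $\mathsf{FIN}$-specific collapse of the four-factor assembly to one Hadamard-conjugated phase---and in particular reproducing the $\mathtt{T}'$-register maneuver of Theorem~\ref{thr:fin_boolean_construction1} that brings the $\mathsf{GT}$ count to $2$, while tracking the resulting inflation of ancillae to $2s + |\operatorname{supp}^{=1}(f)|$ and of the $\mathsf{GT}$ arity to $s(\operatorname{deg}(f)+1) + |\bigcup_{S\in\operatorname{supp}^{>1}(f)}S| + |\operatorname{supp}^{=1}(f)|$. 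Care is also needed to confirm that the degree-$1$ and degree-$0$ contributions are handled without extra parity registers, so that the stated ancilla and arity bounds are met exactly.
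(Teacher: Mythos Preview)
Your proposal is correct and matches the paper's own approach exactly: the paper states the theorem without a separate proof, noting only that ``the modifications that we do to Theorem~\ref{thr:ucg_boolean_construction2} are the same that were conducted in the previous theorem,'' i.e., specialize Theorem~\ref{thr:ucg_boolean_construction2} to the $f$-$\mathsf{FIN}$ setting just as Theorem~\ref{thr:fin_boolean_construction1} specialized Theorem~\ref{thr:ucg_boolean_construction1}. Your identification of the single-$\gamma$ collapse, the one-shot use of Lemma~\ref{thr:polynomial_approx} with error $\epsilon/\pi$ (yielding the $4\pi^2$ constant), and the $\mathtt{T}'$-register maneuver to bring the $\mathsf{GT}$ count to $2$ are precisely the intended ingredients.
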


Finally, \Cref{thr:ucg_boolean_construction3} can be considerably simplified in the case of $f$-$\mathsf{FIN}$s since we can use the $\mathbb{F}_2$-polynomial representation of $f(x)$ instead of its real $\{0,1\}$-representation.
\begin{theorem}[$\mathbb{F}_2$-implementation of $f$-$\mathsf{FIN}$]\label{thr:fin_boolean_construction3}
    Given $f:\{0,1\}^n\to\{0,1\}$, there is a $O(1)$-depth circuit for $f\text{-}\mathsf{FIN}$ that uses 
    \begin{itemize}
        \item either $\sum_{S\in\operatorname{supp}^{>1}_{\mathbb{F}_2}(f)} \big(2|S|\log|S| + O(|S|)\big)$ ancillae and $\sum_{S\in\operatorname{supp}^{>1}_{\mathbb{F}_2}(f)} \big(8|S| + O(\log|S|)\big)$ Fan-Out gates with arity $\leq \max\{1+|{\operatorname{supp}}^{>0}_{\mathbb{F}_2}(f)|,2\operatorname{deg}_{\mathbb{F}_2}(f)\}$,
        \item or $\sum_{S\in\operatorname{supp}^{>0}_{\mathbb{F}_2}(f)} \big(4|S| + O(\log|S|)\big)$ ancillae and $6$ $\mathsf{GT}$ gates with arity $\leq 3\sum_{S\in\operatorname{supp}_{\mathbb{F}_2}(f)} |S|$.
    \end{itemize}
\end{theorem}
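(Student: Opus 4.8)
The plan is to exploit the algebraic normal form of $f$, namely $f(x) = \bigoplus_{S\in\operatorname{supp}_{\mathbb{F}_2}(f)} x^S$, together with the observation that an $f$-$\mathsf{FIN}$ performs the bit-flip $|b\rangle_{\mathtt T}\mapsto|b\oplus f(x)\rangle_{\mathtt T}$, which is exactly addition modulo $2$. Since the $\mathbb{F}_2$-expansion writes $f(x)$ as a mod-$2$ sum of the monomials $x^S=\bigwedge_{i\in S}x_i$, it suffices to compute each monomial in a separate ancilla and then XOR all of them into the target in one shot. This is the $\mathbb{F}_2$-counterpart of Theorem~\ref{thr:ucg_boolean_construction3}, and it is strictly simpler: because the target operation is purely a parity (rather than a real-valued rotation $\mathsf{Z}(\theta)$), no phase-accumulation machinery is needed and the final assembly collapses to a single $\mathsf{PARITY}$ gate.

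Concretely, I would run four steps. (i) For each $S\in\operatorname{supp}^{>1}_{\mathbb{F}_2}(f)$ attach a register $\mathtt R_S$ and, for each coordinate $i\in[n]$ in parallel, copy $|x_i\rangle_{\mathtt I}$ into the $m_i:=|\{S\in\operatorname{supp}^{>1}_{\mathbb{F}_2}(f):i\in S\}|$ registers $\mathtt R_S$ with $i\in S$, using one $(1+m_i)$-arity Fan-Out per coordinate (or a single $\mathsf{GT}$ gate over all coordinates, by Claim~\ref{claim:fanoutasGT}). (ii) In parallel, apply $\mathsf{AND}^{(|S|)}_{\mathtt R_S\to\mathtt P_S}$ for every $S\in\operatorname{supp}^{>1}_{\mathbb{F}_2}(f)$ to write $x^S$ into a fresh ancilla $\mathtt P_S$, realising each $\mathsf{AND}$ via Fact~\ref{thr:or_constantdepth} (Fan-Out) or Fact~\ref{thr:or_constantdepth_gt} ($\mathsf{GT}$), and postponing the inner uncomputation of the $\mathsf{OR}$-reduction ancillae to the final step. (iii) Apply $\mathsf{PARITY}$ from the registers $\{\mathtt P_S\}_{|S|\ge 2}$ together with the input qubits $\mathtt I_i$ for the singletons $\{i\}\in\operatorname{supp}_{\mathbb{F}_2}(f)$ onto $\mathtt T$, preceded by a constant $\mathsf{X}^{\widetilde f_{\mathbb{F}_2}(\emptyset)}$ for the empty-set term; by Fact~\ref{fact:fanoutparity} this parity costs one Fan-Out (equivalently one $\mathsf{GT}$) up to Hadamard conjugation. (iv) Uncompute steps (i) and (ii).

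Correctness is immediate from the $\mathbb{F}_2$-expansion: after step (iii) the target holds $b\oplus\widetilde f_{\mathbb{F}_2}(\emptyset)\oplus\bigoplus_{\{i\}\in\operatorname{supp}}x_i\oplus\bigoplus_{|S|\ge2}x^S = b\oplus f(x)$, while steps (i), (ii), (iv) restore every other register to $|0\rangle$. The resource counts then mirror the accounting of Theorem~\ref{thr:ucg_boolean_construction3}, the only change being that the three-$\mathsf{GT}$ "construct $f(x)$" block is replaced by the one-gate parity of step (iii). For the Fan-Out circuit the $\mathsf{AND}$ gates contribute $\sum_S\big(6|S|+O(\log|S|)\big)$ Fan-Outs (forward plus postponed uncomputation) and $\sum_S\big(2|S|\log|S|+O(|S|)\big)$ ancillae, the copying of step (i) and its reversal contribute at most $2|\bigcup_S S|\le 2\sum_S|S|$ Fan-Outs, and the parity contributes one, giving $\sum_{S\in\operatorname{supp}^{>1}_{\mathbb{F}_2}(f)}\big(8|S|+O(\log|S|)\big)$ Fan-Outs of arity $\le\max\{1+|\operatorname{supp}^{>0}_{\mathbb{F}_2}(f)|,\,2\operatorname{deg}_{\mathbb{F}_2}(f)\}$. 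For the $\mathsf{GT}$ circuit the tally is $1$ (copy) $+\,2$ (forward $\mathsf{AND}$) $+\,1$ (parity) $+\,2$ (uncomputation) $=6$.

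The conceptual content is light, so the genuinely fiddly point, and the place I would be most careful, is the $\mathsf{GT}$ bookkeeping that lands on exactly $6$: I must argue that the single parity $\mathsf{GT}$ of step (iii) and the copy-reversal $\mathsf{GT}$ of step (iv) can each be absorbed into an adjacent $\mathsf{GT}$ layer acting on disjoint (or conjugation-compatible) qubits, via Claim~\ref{claim:fanoutasGT} and Lemma~\ref{lem:fanout_to_gt_circuits}, rather than costing an extra gate apiece. The arity bound $\le 3\sum_{S\in\operatorname{supp}_{\mathbb{F}_2}(f)}|S|$ for the $\mathsf{GT}$ version and the fact that the stated ancilla expressions are loose upper bounds absorbing the singleton and $\mathtt P_S$ registers are then routine to verify.
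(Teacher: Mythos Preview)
Your construction is essentially the paper's: compute the monomials $x^S$ in parallel via copied inputs and $\mathsf{AND}$ gates, XOR them into the target with a single $\mathsf{PARITY}$, and uncompute. The Fan-Out accounting matches.

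The one place you diverge from the paper is the $7\to 6$ reduction for $\mathsf{GT}$ gates. Your tally $1+2+1+2=6$ silently absorbs one gate in the uncomputation, and then you speak of absorbing \emph{both} the parity $\mathsf{GT}$ and the copy-reversal $\mathsf{GT}$ into neighbouring layers; that would give $5$, not $6$, so the bookkeeping is inconsistent. More substantively, the parity step is $\mathsf{H}_{\to\mathtt T}\cdot\mathsf{GT}\cdot\mathsf{H}_{\to\mathtt T}$, so a Hadamard on the target sits between that $\mathsf{GT}$ and whatever comes next; direct merging via Claim~\ref{claim:fanoutasGT} is not automatic. The paper instead takes the route of Theorem~\ref{thr:fin_boolean_construction1}: replace the single $\mathsf{PARITY}$ by the two-Fan-Out pattern $\mathsf{H}_{\to\mathtt T}\cdot\mathsf{FO}_{\mathtt T\to\mathtt T'}\cdot(\prod\mathsf{C}\text{-}\mathsf{Z})\cdot\mathsf{FO}_{\mathtt T\to\mathtt T'}\cdot\mathsf{H}_{\to\mathtt T}$ using $|\operatorname{supp}^{>0}_{\mathbb{F}_2}(f)|$ extra ancillae in a register $\mathtt T'$. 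The two resulting $\mathsf{GT}$ gates then sit adjacent to the compute/uncompute $\mathsf{GT}$ layers of the $\mathsf{AND}$ block and can be absorbed, taking $7$ down to $6$. This extra-ancillae trick is also why the stated ancilla count carries the factor $4|S|$ (over $\operatorname{supp}^{>0}$) rather than the $3|S|$ you would get from the bare construction.
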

\begin{proof}
    After constructing the state
    \begin{align*}
        |x\rangle_{\mathtt{I}}|b\rangle_{\mathtt{T}}\bigotimes_{S\in\operatorname{supp}^{>1}_{\mathbb{F}_2}(f)}|x_S\rangle_{\mathtt{R}_S}|x^S\rangle_{\mathtt{P}_S}
    \end{align*}
    as in \Cref{thr:ucg_boolean_construction3}, apply a $\mathsf{X}_{\to\mathtt{T}}^{\widetilde{f}_{\mathbb{F}_2}(\emptyset)}$ gate onto register $\mathtt{T}$, a $\mathsf{PARITY}_{\{\mathtt{I}_j\}_{j\in\operatorname{supp}_{\mathbb{F}_2}^{=1}(f)}\to\mathtt{T}}$ gate from registers $\{\mathtt{I}_j\}_{j\in\operatorname{supp}_{\mathbb{F}_2}^{=1}(f)}$ onto register $\mathtt{T}$ ($\mathtt{I}_j$ contains $x_j$), and finally a $\mathsf{PARITY}_{\{\mathtt{P}_S\}_{S\in\operatorname{supp}_{\mathbb{F}_2}^{>1}(f)}\to\mathtt{T}}$ gate from registers $\{\mathtt{P}_S\}_{S\in\operatorname{supp}_{\mathbb{F}_2}^{>1}(f)}$ onto register $\mathtt{T}$ (both $\mathsf{PARITY}$ gates can be performed together). We get
    \begin{align*}
        |x\rangle_{\mathtt{I}}|b\rangle_{\mathtt{T}}\bigotimes_{S\in\operatorname{supp}^{>1}_{\mathbb{F}_2}(f)}|x_S\rangle_{\mathtt{R}_S}|x^S\rangle_{\mathtt{P}_S} \mapsto ~&|x\rangle_{\mathtt{I}}\big|b\oplus \widetilde{f}_{\mathbb{F}_2}(\emptyset)\oplus {\bigoplus}_{S\in\operatorname{supp}^{>1}_{\mathbb{F}_2}(f)}x^S\big\rangle_{\mathtt{T}}\bigotimes_{S\in\operatorname{supp}^{>1}_{\mathbb{F}_2}(f)}|x_S\rangle_{\mathtt{R}_S}|x^S\rangle_{\mathtt{P}_S} \\
        = ~&|x\rangle_{\mathtt{I}}\big|b\oplus {\bigoplus}_{S\subseteq[n]}\widetilde{f}_{\mathbb{F}_2}(S)x^S\big\rangle_{\mathtt{T}}\bigotimes_{S\in\operatorname{supp}^{>1}_{\mathbb{F}_2}(f)}|x_S\rangle_{\mathtt{R}_S}|x^S\rangle_{\mathtt{P}_S} \\
        =~&|x\rangle_{\mathtt{I}}|b\oplus f(x)\rangle_{\mathtt{T}}\bigotimes_{S\in\operatorname{supp}^{>1}_{\mathbb{F}_2}(f)}|x_S\rangle_{\mathtt{R}_S}|x^S\rangle_{\mathtt{P}_S}.
    \end{align*}
    Uncomputing registers $\{\mathtt{R}_S\}_S$ and $\{\mathtt{P}_S\}_S$ gives the desired state. As in \Cref{thr:ucg_boolean_construction3}, the total cost of computing and uncomputing $\{\mathtt{R}_S\}_S$ and $\{\mathtt{P}_S\}_S$ is either 
    \begin{align*}
        \sum_{S\in\operatorname{supp}^{>1}_{\mathbb{F}_2}(f)} \big(2|S|\log|S| + O(|S|)\big) ~\text{ancillae} \quad \text{and} \quad \sum_{S\in\operatorname{supp}^{>1}_{\mathbb{F}_2}(f)} \big(8|S| + O(\log|S|)\big) ~\text{Fan-Outs}
    \end{align*}
    with arity $\leq \max\{1+|{\operatorname{supp}}^{>0}_{\mathbb{F}_2}(f)|,2\operatorname{deg}_{\mathbb{F}_2}(f)\}$ or 
    \begin{align*}
        \sum_{S\in\operatorname{supp}^{>1}_{\mathbb{F}_2}(f)} \big(3|S| + O(\log|S|)\big) ~\text{ancillae} \quad\text{and}\quad 6 ~\mathsf{GT}~ \text{gates}
    \end{align*}
    with arity $\leq 2\sum_{S\in\operatorname{supp}_{\mathbb{F}_2}(f)} |S|$. The $\mathsf{PARITY}$ gates cost another $(1+|\supp^{>0}_{\mathbb{F}_2}(f)|)$-arity Fan-Out or $\mathsf{GT}$ gate. It is possible to reduce the number of $\mathsf{GT}$ gates from $7$ to $6$ by using $|\operatorname{supp}^{>0}_{\mathbb{F}_2}(f)|$ extra ancillae similarly to \Cref{thr:fin_boolean_construction1}.
\end{proof}

\subsection{Constant-depth circuits for quantum memory devices via Boolean analysis}
\label{sec:boolean_qram_construction}

In this section, we apply our Boolean-based circuit constructions to the case of $\mathsf{QRAM}$. We first compute the Fourier coefficients of $f(x,i) = x_i$.
\begin{lemma}\label{lem:fourier_coefficients}
    Let $n\in\mathbb{N}$ be a power of $2$ and let $f:\{0,1\}^n\times\{0,1\}^{\log{n}}\to\{0,1\}$ be the Boolean function $f(x,i) = x_i$. The Fourier coefficients of $f$ are
    \begin{align*}
        \widehat{f}(S,T) = \begin{cases}
            \frac{1}{2} &\text{if}~ (S,T) = (\emptyset,\emptyset),\\
            \frac{-\chi_T(k)}{2n} &\forall (S,T)\subseteq[n]\times[\log{n}], S=\{k\},\\
            0 &\text{otherwise},
        \end{cases}
    \end{align*}
    where $\chi_T(k) = (-1)^{\sum_{i\in T} k_i}$ and $k = k_0\dots k_{\log{n}-1}\in\{0,1\}^{\log{n}}$.
\end{lemma}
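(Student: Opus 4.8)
The plan is to compute the Fourier coefficients $\widehat{f}(S,T)$ directly from the definition
\[
\widehat{f}(S,T) = \frac{1}{2^{n+\log n}}\sum_{x\in\{0,1\}^n}\sum_{i\in\{0,1\}^{\log n}} f(x,i)\,\chi_S(x)\chi_T(i),
\]
where $S\subseteq[n]$ indexes the data coordinates and $T\subseteq[\log n]$ indexes the address coordinates. Substituting $f(x,i)=x_i$, I would first sum over $x$ for a fixed address $i$. The key observation is that $x_i\chi_S(x)$ factorizes over the coordinates of $x$, so the sum over $x$ vanishes unless $S$ is compatible with isolating the single coordinate $x_i$.

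The main computation is the inner sum $\sum_{x\in\{0,1\}^n} x_i\,\chi_S(x)$. Writing $\chi_S(x)=\prod_{k\in S}(-1)^{x_k}$, I would split on whether $i\in S$. Using the standard facts that $\sum_{x_k\in\{0,1\}}(-1)^{x_k}=0$ and $\sum_{x_k\in\{0,1\}}x_k(-1)^{x_k}=-1$ while $\sum_{x_k\in\{0,1\}} x_k = 1$ and $\sum_{x_k\in\{0,1\}}1 = 2$, the product over coordinates collapses. Concretely, the factor for coordinate $i$ is either $\sum_{x_i}x_i(-1)^{x_i}=-1$ (if $i\in S$) or $\sum_{x_i}x_i=1$ (if $i\notin S$), and each other coordinate $k$ contributes $\sum_{x_k}(-1)^{x_k}=0$ if $k\in S$ or $\sum_{x_k}1=2$ if $k\notin S$. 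For a nonzero result we therefore need $S\subseteq\{i\}$, giving two cases: $S=\emptyset$ yields $2^{n-1}$, and $S=\{i\}$ yields $-2^{n-1}$; all other $S$ give $0$.

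I would then carry out the outer sum over the address register. When $S=\emptyset$, the inner sum $2^{n-1}$ is independent of $i$, so $\sum_i \chi_T(i)\cdot 2^{n-1} = 2^{n-1}\sum_i\chi_T(i)$, which is $2^{n-1}\cdot 2^{\log n}=2^{n-1}n$ if $T=\emptyset$ and $0$ otherwise; dividing by $2^{n+\log n}=2^n n$ gives $\widehat{f}(\emptyset,\emptyset)=\tfrac12$. When $S=\{k\}$ for some fixed $k\in[n]$, only the address $i=k$ contributes $-2^{n-1}$ to the inner sum (since we need $i\in S$ i.e. $i=k$), so the outer sum becomes $\chi_T(k)\cdot(-2^{n-1})$, and dividing by $2^n n$ yields $\widehat{f}(\{k\},T)=-\chi_T(k)/(2n)$ for every $T\subseteq[\log n]$. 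This matches the three cases in the statement.

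I do not anticipate a genuine obstacle here — the computation is a clean separation-of-variables argument. The only point requiring mild care is bookkeeping the identification between the integer address $k\in[n]$ appearing as a data coordinate and its binary encoding as an element of $\{0,1\}^{\log n}$ on which $\chi_T$ acts; one must be consistent that $S=\{k\}$ with $k\in[n]$ forces the address sum to pick out $i=k$, so that $\chi_T(k)$ is well-defined via this identification. With that convention fixed, the factorization over coordinates and the four elementary one-variable sums complete the proof.
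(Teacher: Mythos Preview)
Your proposal is correct and follows essentially the same direct computation as the paper: both evaluate $\sum_{x}x_i\chi_S(x)$ by factoring over coordinates to get $2^{n-1}$, $-2^{n-1}$, or $0$ according as $S=\emptyset$, $S=\{i\}$, or otherwise, and then handle the address sum $\sum_i\chi_T(i)$. The only difference is presentational---you make the coordinate-wise factorization explicit, whereas the paper condenses it into the single observation that $\sum_x x_i(-1)^{x_j}$ is $0$ for $i\neq j$ and $-2^{n-1}$ for $i=j$.
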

\begin{proof}
    By a straightforward calculation,
    \begin{align*}
        \widehat{f}(\emptyset,T) = \frac{1}{2^{n}n}\sum_{i\in\{0,1\}^{\log{n}}}\chi_T(i)\sum_{x\in\{0,1\}^n}x_i = \frac{1}{2n}\sum_{i\in\{0,1\}^{\log{n}}} \chi_T(i) = \begin{cases}
            \frac{1}{2} &\text{if}~T = \emptyset,\\
            0 &\text{otherwise}.
        \end{cases}
    \end{align*}
    Moreover, 
    \begin{align*}
        \widehat{f}(S,T) = \frac{1}{2^n n}\sum_{i\in\{0,1\}^{\log{n}}}\chi_T(i)\sum_{x\in\{0,1\}^n}x_i(-1)^{\sum_{j\in S} x_j} = \begin{cases}
            \frac{-\chi_T(k)}{2n} &\text{if}~S = \{k\},\\
            0 &\text{if}~ |S| \geq 2,
        \end{cases}
    \end{align*}
    for every $T\subseteq[\log{n}]$, since $\sum_{x\in\{0,1\}^n}x_i(-1)^{x_j}$ equals $0$ if $i\neq j$ and $-2^{n-1}$ if $i=j$.
\end{proof}
\begin{theorem}[Real implementation of $\QRAM$]\label{lem:qramfourier}
    Let $n\in\mathbb{N}$ be a power of $2$. A $\QRAM$ of memory size $n$ can be implemented in $O(1)$-depth using
    \begin{itemize}
        \item either $\frac{1}{2}n^2\log{n} + O(n^2)$ ancillae and $2n^2 + O(n\log{n})$ Fan-Out gates with arity $\leq 1+n^2$,
        \item or $2n^2$ ancillae and $2$ $\mathsf{GT}$ gates with arity $\leq \frac{1}{2}n^2\log{n} + O(n^2)$.
    \end{itemize}
\end{theorem}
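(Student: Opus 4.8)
The plan is to read off the $\QRAM$ as the $f$-$\mathsf{FIN}$ associated with the selection function $f:\{0,1\}^n\times\{0,1\}^{\log n}\to\{0,1\}$, $f(x,i)=x_i$, and to instantiate Theorem~\ref{thr:fin_boolean_construction1} (the Fourier implementation of an $f$-$\mathsf{FIN}$) with the Fourier data of this particular $f$ supplied by Lemma~\ref{lem:fourier_coefficients}. Everything then reduces to evaluating the generic support quantities appearing in Theorem~\ref{thr:fin_boolean_construction1}.

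From Lemma~\ref{lem:fourier_coefficients}, the nonzero Fourier coefficients are $\widehat{f}(\emptyset,\emptyset)$, a single set of size $0$, and $\widehat{f}(\{k\},T)$ for every $k\in[n]$ and every $T\subseteq[\log n]$, a set of size $1+|T|$. I would record the following counts: $|\operatorname{supp}^{>0}(f)|=n^2$; $|\operatorname{supp}^{=1}(f)|=n$ (the sets $(\{k\},\emptyset)$); $|\operatorname{supp}^{>1}(f)|=n(n-1)$ (the sets $(\{k\},T)$ with $T\neq\emptyset$); $\operatorname{deg}(f)=1+\log n$; and $\big|\bigcup_{S\in\operatorname{supp}^{>1}(f)}S\big|=n+\log n$, since every memory bit and every address bit lies in some size-$>1$ set. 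The one genuinely computational identity is $\sum_{S\in\operatorname{supp}(f)}|S|=n\sum_{T\subseteq[\log n]}(1+|T|)=n\big(n+\tfrac12 n\log n\big)=\tfrac12 n^2\log n+n^2$, together with $\sum_{S\in\operatorname{supp}^{>1}(f)}|S|=\tfrac12 n^2\log n+n^2-n$ (the same sum with the degree-$\le 1$ contributions removed).

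Substituting into the Fan-Out branch of Theorem~\ref{thr:fin_boolean_construction1} is then bookkeeping: the ancilla count $\sum_{S\in\operatorname{supp}(f)}|S|+2|\operatorname{supp}^{>1}(f)|$ collapses to $\tfrac12 n^2\log n+O(n^2)$; the Fan-Out count $2\big|\bigcup_{S\in\operatorname{supp}^{>1}(f)}S\big|+2|\operatorname{supp}^{>1}(f)|+2$ collapses to $2n^2+O(\log n)$; and the arity $1+\max\{|\operatorname{supp}^{>0}(f)|,\operatorname{deg}(f)\}$ equals $1+n^2$. For the $\mathsf{GT}$ branch, the ancilla count $|\operatorname{supp}^{>0}(f)|+|\operatorname{supp}^{>1}(f)|=2n^2-n\le 2n^2$ and the gate count $2$ drop out directly.

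The step I expect to need the most care — and the only place a naive substitution loses — is the $\mathsf{GT}$ arity. The crude bound $3\sum_{S\in\operatorname{supp}(f)}|S|$ from the statement of Theorem~\ref{thr:fin_boolean_construction1} would yield $\tfrac32 n^2\log n$, worse than claimed, so instead I would return to the sharper arity expression inside the proof of Theorem~\ref{thr:ucg_boolean_construction1} and bound the two $\mathsf{GT}$ gates separately. The binding one is the parity-computing gate, of arity at most $\big|\bigcup_{S\in\operatorname{supp}^{>1}(f)}S\big|+\sum_{S\in\operatorname{supp}^{>1}(f)}|S|=(n+\log n)+(\tfrac12 n^2\log n+n^2-n)=\tfrac12 n^2\log n+O(n^2)$, while the phase-application gate (acting on the $\mathtt{T}'$ cat register, the target, the $\operatorname{supp}^{>1}$ parity registers and the $\operatorname{supp}^{=1}$ variables) has arity $O(n^2)$. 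Taking the maximum gives the advertised $\tfrac12 n^2\log n+O(n^2)$, which finishes the instantiation.
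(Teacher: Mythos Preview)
Your proof is correct and matches the paper's exactly: compute the Fourier support of the selection function via Lemma~\ref{lem:fourier_coefficients} and instantiate Theorem~\ref{thr:fin_boolean_construction1}, and you are in fact more explicit than the paper about why the $\mathsf{GT}$ arity comes out to $\tfrac12 n^2\log n+O(n^2)$ rather than the cruder $3\sum_S|S|$ from the theorem statement. One minor slip that does not affect the bound: in the two-$\mathsf{GT}$ construction of Theorem~\ref{thr:fin_boolean_construction1} both $\mathsf{GT}$ gates are of the form ``parity compute/uncompute merged with the fan-out to $\mathtt{T}'$'', while the controlled-$\mathsf{Z}$ phases sit between them as a depth-$1$ layer of disjoint two-qubit gates rather than as a separate ``phase-application'' $\mathsf{GT}$ --- but since both $\mathsf{GT}$s therefore carry the parity-computation arity, your maximum is unchanged.
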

\begin{proof}
    By \Cref{lem:fourier_coefficients}, $\operatorname{supp}(f) = \{(S,T)\subseteq[n]\times[\log{n}]:|S| = 1\}\cup\{(\emptyset,\emptyset)\}$. Thus $|\operatorname{supp}^{>0}(f)| = n^2$, $|\operatorname{supp}^{>1}(f)| = n^2 - n$, $|\bigcup_{(S,T)\in\operatorname{supp}^{>1}(f)}(S,T)| = n+\log{n}$, $\operatorname{deg}(f) = 1+\log{n}$, and
    \begin{align*}
        \sum_{(S,T)\in\operatorname{supp}(f)} |S| + |T| = n\sum_{k=0}^{\log{n}}\binom{\log{n}}{k}(1+k) = n^2 + \frac{n^2\log{n}}{2}.
    \end{align*}
    By \Cref{thr:fin_boolean_construction1}, there is a $O(1)$-depth circuit for $\mathsf{QRAM}$ that uses either $\frac{1}{2}n^2\log{n} + O(n^2)$ ancillae and $2n^2 + O(n\log{n})$ Fan-Out gates with arity at most $1+n^2$, or $2n^2$ ancillae and $2$ $\mathsf{GT}$ gates with arity at most $\frac{1}{2}n^2\log{n} + O(n^2)$.
\end{proof}
\begin{remark}
    Since the $\mathbb{F}_2$-support of $f:\{0,1\}^{n}\times\{0,1\}^{\log{n}} \to \{0,1\}$, $f(x,i) = x_i$, is quite dense, {\rm \Cref{thr:fin_boolean_construction3}} is not suited for constructing an efficient $\QRAM$. Moreover,
    \begin{align*}
        \hat{\|}f^{>1}\hat{\|}_1 = \sum_{(S,T)\in\operatorname{supp}^{>1}(f)}|\widehat{f}(S,T)| = \frac{1}{2n}n(2^{\log{n}} - 1) = \frac{n-1}{2}.
    \end{align*}
    Therefore, $s = \big\lceil 4\pi^2n\hat{\|}f^{>1}\hat{\|}_1^2/\epsilon^2\big\rceil = \lceil \pi^2 n^3/\epsilon^2\rceil - O(n^2)$ and the approximate real constructions in {\rm \Cref{thr:fin_boolean_construction2}} require many more ancillae compared to {\rm \Cref{lem:qramfourier}}.
\end{remark}

Similarly to the one-hot encoding construction, it is possible to reduce the number of ancillae by reducing the problem into small blocks and solving each with a small $\mathsf{QRAM}$ circuit.
\begin{theorem}\label{thr:qram_recursive_procedure_boolean}
    For every $n,d \in \mathbb{N}$, a $\mathsf{QRAM}$ of memory size $n$ can be implemented in $O(d)$-depth using
    \begin{itemize}
        \item either $O\big(n^{1/(1-2^{-d})}\log{n}\big)$ ancillae and $O\big(n^{1/(1-2^{-d})}\big)$ Fan-Out gates,
        \item or $O\big(n^{1/(1-2^{-d})}\big)$ ancillae and $8d-6$ $\mathsf{GT}$ gates.
    \end{itemize}
\end{theorem}
\begin{proof}
    The proof is very similar to \Cref{thr:qram_recursive_procedure}, only the size of the blocks is different. For $d=1$, the result follows from \Cref{lem:qramfourier}. For the induction step, we divide the input $x\in\{0,1\}^n$ into $m := n^{(2^d-2)/(2^d-1)}$ blocks of $b := n^{1/(2^d-1)}$ qubits each. The $d$-th $\mathsf{QRAM}$ circuit thus uses either $O(mb^2\log{b}) = O\big(n^{1/(1-2^{-d})}\log{n}\big)$ ancillae and $O(mb^2) = O\big(n^{1/(1-2^{-d})}\big)$ Fan-Outs, or $O\big(n^{1/(1-2^{-d})}\big)$ ancillae and $2$ $\mathsf{GT}$ gates (the $\mathsf{GT}$ gates from different blocks can be done in parallel). We are left with $m = n^{(2^d-2)/(2^d-1)}$ output qubits. Using the induction hypothesis, the remaining $d-1$ $\mathsf{QRAM}$-levels uses either $O\big(m^{1/(1-2^{-d+1})}\log{m}\big) = O\big(n^{1/(1-2^{-d})}\log{n}\big)$ ancillae and $O\big(m^{1/(1-2^{-d+1})}\big) = O\big(n^{1/(1-2^{-d})}\big)$ Fan-Outs, or $O\big(n^{1/(1-2^{-d})}\big)$ ancillae and $8(d-1) - 6$ $\mathsf{GT}$ gates. The output qubit is $|b\oplus x_{bq+r}\rangle_{\mathtt{T}} = |b\oplus x_{i}\rangle_{\mathtt{T}}$ as required. 

    Regarding the resources, we must take into consideration the computation of the quotient $q := \lfloor i/b\rfloor$ and remainder $r\equiv i~(\text{mod}~b)$ (and copying/uncopying the remainder $m-1$ times). As mentioned in the proof of \Cref{thr:qram_recursive_procedure}, $q$ can be computed with a depth-$4$ polynomial-size threshold circuit and $r$ can be computed with a depth-$2$ polynomial-size threshold circuit~\cite{siu1993depth}. In the Fan-Out-based construction, $\poly\log{n}$ ancillae and Fan-Out gates are sufficient since $i$ is a $\log{n}$-bit string. Regarding the $\mathsf{GT}$-gate-based construction, we employ \Cref{thr:fin_boolean_construction1} to perform a $\mathsf{THRESHOLD}$ function using $O(|\operatorname{supp}^{>0}(\mathsf{THRESHOLD}^{(\log{n})})|) = O(n)$ ancillae and $2$ $\mathsf{GT}$ gates. Each layer of the threshold circuit is made of $\poly\log{n}$ $\mathsf{THRESHOLD}$ functions. It is then possible to either 
    \begin{enumerate}
        \item compute all the $\mathsf{PARITY}$ functions $x_S$, $S\in\operatorname{supp}^{>0}(\mathsf{THRESHOLD}^{(\log{n})})$, a number of $\poly\log{n}$ times in parallel, so that each $\mathsf{THRESHOLD}$ function has its own set $\{x_S\}_S$, and this requires $O(n\poly\log{n})$ ancillae and $1$ $\mathsf{GT}$ gate (plus $1$ $\mathsf{GT}$ gate for uncomputation). Therefore, computing $q$ uses $O(n\poly\log{n})$ ancillae and $8$ $\mathsf{GT}$ gates ($2$ per layer), and computing $r$ uses $O(n\poly\log{n})$ ancillae and $4$ $\mathsf{GT}$ gates;
        \item or compute the $\mathsf{PARITY}$ functions $\{x_S\}_S$ just once and share them with all $\mathsf{THRESHOLD}$ functions of a layer. This means that, in order to apply the gates $\mathsf{Z}(\widehat{f}(S))$ controlled on $x_S$ (cf.\ \Cref{fig:boolean_construction_gt}), we require another $\mathsf{GT}$ gate since $x_S$ will serve as control-qubit for all $\mathsf{THRESHOLD}$ functions (plus $1$ $\mathsf{GT}$ for uncomputation). This requires $O(n)$ ancillae and $4$ $\mathsf{GT}$ gates. Computing $q$ uses $O(n)$ ancillae and $16$ $\mathsf{GT}$ gates ($4$ per layer), and computing $r$ uses $O(n)$ ancillae and $8$ $\mathsf{GT}$ gates.
    \end{enumerate}
    For the one-hot-based circuit (\Cref{thr:qram_recursive_procedure}), we employ the second construction since it only uses $O(n)$ ancillae. Here, for the Boolean-based circuit, we employ the first construction as we already used $O(n^{1/(1-2^{-d})})$ ancillae. We note that computing $q$ ($8$ $\mathsf{GT}$ gates) can be done in parallel to computing plus copying/uncopying $r$ and performing the $d$-th $\mathsf{QRAM}$ circuit ($4+2+2 = 8$ $\mathsf{GT}$ gates), so the $d$-th level-$\mathsf{QRAM}$ uses $8$ $\mathsf{GT}$ gates. In total, we use $8d - 6$ $\mathsf{GT}$ gates. 
\end{proof}

% \begin{remark}
%     It is possible to reduce the number of $\mathsf{GT}$ gates by trading in extra Fan-Outs. The $q$ and $r$ computation and copying uses $\poly\log{n}$ Fan-Outs, thus $\mathsf{QRAM}$ can use $2d$ $\mathsf{GT}$ gates plus $\poly\log{n}$ Fan-Outs.
% \end{remark}

\section{Acknowledgement}
We thank Rainer Dumke, Yvonne Gao, Wenhui Li, and Daniel Weiss for useful discussions on physical implementations of $\QRAM$, Arthur Rattew for interesting conversations on the feasibility of $\QRAM$s, Patrick Rebentrost for initial discussions, Sathyawageeswar Subramanian for Ref.~\cite{kumar2023tight}, and Shengyu Zhang for general discussions and for clarifying some points in~\cite{STY-asymptotically,yuan2023optimal}. This research is supported by the National Research Foundation, Singapore and A*STAR under its CQT Bridging Grant and its Quantum Engineering Programme under grant NRF2021-QEP2-02-P05. This work was done in part while JFD, AL, and MS were visiting the Simons Institute for the Theory of Computing.

\DeclareRobustCommand{\DE}[2]{#2}
\bibliographystyle{alphaurl}
\bibliography{biblio.bib}
%\printbibliography

\end{document}